\documentclass[12pt,letterpaper]{article}
\pdfoutput=1
\makeatletter
\renewcommand\paragraph{\@startsection{paragraph}{4}{\z@}%
                                      {1ex \@plus1ex \@minus.2ex}%
                                      {-1em}%
                                      {\normalfont\normalsize\bfseries}}
\makeatother


\usepackage{authblk}

\usepackage[margin=1in]{geometry}

\usepackage{mathpazo}

\usepackage[shortlabels]{enumitem}

\usepackage{microtype}

\usepackage{tikz}
\usepackage{tikz-cd}
\usetikzlibrary{backgrounds,fit,decorations.pathreplacing,calc}
\usetikzlibrary{positioning,shapes.misc}
\usepackage{soul}

\usepackage{float}

\usepackage{subcaption}

\usepackage[colorlinks = true]{hyperref}
\definecolor{darkred}  {rgb}{0.5,0,0}
\definecolor{darkblue} {rgb}{0,0,0.5}
\definecolor{darkgreen}{rgb}{0,0.5,0}
\hypersetup{
  urlcolor   = blue,         
  linkcolor  = darkblue,     
  citecolor  = darkgreen,    
  filecolor  = darkred       
}

\usepackage{amsmath,amssymb,amsfonts,amsthm,amstext}
\usepackage{bbm}


\usepackage{etoolbox}

\usepackage{mathtools}
\mathtoolsset{centercolon}
\makeatletter
\protected\def\tikz@nonactivecolon{\ifmmode\mathrel{\mathop\ordinarycolon}\else:\fi}
\makeatother

\usepackage{tcolorbox}

\usepackage{algorithm}
\usepackage{algpseudocode}


\usepackage[noabbrev,capitalise]{cleveref}

\crefname{lemma}{Lemma}{Lemmas}
\crefname{equation}{equation}{equations}
\crefname{proposition}{Proposition}{Propositions}
\crefname{definition}{Definition}{Definitions}
\crefname{theorem}{Theorem}{Theorems}
\crefname{conjecture}{Conjecture}{Conjectures}
\crefname{corollary}{Corollary}{Corollaries}
\crefname{claim}{Claim}{Claims}
\crefname{section}{Section}{Sections}
\crefname{appendix}{Appendix}{Appendices}
\crefname{figure}{Fig.}{Figs.}
\crefname{table}{Table}{Tables}



\usepackage{tocloft}

\usepackage{multirow}
\usepackage{diagbox}



\newcommand{\ket}[1]{|#1\rangle}
\newcommand{\bra}[1]{\langle#1|}
\newcommand{\braket}[2]{\langle#1|#2\rangle}

\newcommand{\x}{\otimes}

\newcommand{\ct}{^{\dagger}}





\DeclarePairedDelimiter{\set}{\lbrace}{\rbrace}
\DeclarePairedDelimiter{\abs}{\lvert}{\rvert}
\DeclarePairedDelimiter{\norm}{\lVert}{\rVert}

\DeclarePairedDelimiter{\ip}{\langle}{\rangle}


\DeclareMathOperator{\spn}{span}

\DeclareMathOperator{\Tr}{Tr}
\DeclareMathOperator{\tTr}{\tilde{Tr}}

\DeclareMathOperator{\supp}{supp}
\newcommand{\incl}{\hookrightarrow}


\newcommand{\C}{\mathbb{C}}
\newcommand{\R}{\mathbb{R}}
\newcommand{\N}{\mathbb{N}}
\newcommand{\Z}{\mathbb{Z}}
\newcommand{\calH}{\mathcal{H}}
\newcommand{\calX}{\mathcal{X}}
\newcommand{\calY}{\mathcal{Y}}
\newcommand{\calA}{\mathcal{A}}
\newcommand{\calB}{\mathcal{B}}

\newcommand{\calN}{\mathcal{N}}
\newcommand{\calU}{\mathcal{U}}
\newcommand{\calL}{\mathcal{L}}
\newcommand{\calF}{\mathcal{F}}

\newcommand{\mbK}{\mathbb{K}}
\newcommand{\K}{\mathbb{K}}
\newcommand{\Q}{\mathbb{Q}}


\newcommand{\1}{\mathbb{1}}
\newcommand{\iso}{\cong}



\newcommand{\coRE}{\ensuremath{\mathsf{coRE}}}
\newcommand{\RE}{\ensuremath{\mathsf{RE}}}
\newcommand{\MIP}{\ensuremath{\mathsf{MIP}}}
\newcommand{\MM}{\ensuremath{\mathsf{MM}}}
\newcommand{\pMM}{\pmb{\ensuremath{\mathsf{MM}}}}

\newcommand{\PerfectStrat}{\ensuremath{\mathsf{PerfectStrategy}}}
\newcommand{\GapPerfectStrat}{\ensuremath{\mathsf{GappedPerfectStrategy}}}
\newcommand{\Membership}{\ensuremath{\mathsf{Membership}}}
\newcommand{\Intersection}{\ensuremath{\mathsf{Intersection}}}

\newcommand{\tP}{\tilde{P}}
\newcommand{\tQ}{\tilde{Q}}
\newcommand{\tM}{\tilde{M}}
\newcommand{\tN}{\tilde{N}}

\newcommand{\fC}{\mathfrak{C}}

\newcommand{\oast}{\circledast}
\newcommand{\pr}[2]{P(#1|#2)}


\newcommand{\ep}{\epsilon}




\newtheorem{theorem}{Theorem}[section]
\newtheorem{lemma}[theorem]{Lemma}
\newtheorem{proposition}[theorem]{Proposition}
\newtheorem{definition}[theorem]{Definition}
\newtheorem{corollary}[theorem]{Corollary}

\newtheorem*{conjecture*}{Conjecture}

\theoremstyle{definition}

\newtheorem*{perfectStrat}{Problem $\PerfectStrat_t$}
\newtheorem*{gapPerfectStrat}{Problem $\GapPerfectStrat_t$}
\newtheorem*{membership}{Problem $\Membership_{t,\K}$}
\newtheorem*{constMembership}{Problem $\Membership(n_A,n_B,m_A,m_B)_{t,\K}$}
\newtheorem*{intersection}{Problem $\Intersection(n_A,n_B,m_A,m_B)_{t}$}


\begin{document}

\title{The membership problem for constant-sized quantum correlations is undecidable}
\renewcommand\Affilfont{\itshape\small}
\author[1,4]{Honghao Fu \thanks{honghao.fu@concordia.ca}}
\author[1,2]{Carl A. Miller \thanks{camiller@umd.edu}}
\author[3]{William Slofstra \thanks{weslofst@uwaterloo.ca}}

\affil[1]{Joint Institute for Quantum Information and Computer Science, University of Maryland, College Park, MD, 20742, USA}
\affil[2]{National Institute of Standards and Technology,
100 Bureau Dr., Gaithersburg, MD 20899, USA }
\affil[3]{Institute for Quantum Computing and Department of Pure Mathematics, University of Waterloo, Waterloo, Canada}
\affil[4]{Concordia Institute for Information Systems Engineering, Concordia University, Montreal, Canada}

\date{}

\maketitle

\begin{abstract}
When two spatially separated parties make measurements on an unknown entangled
quantum state, what correlations can they achieve?  How difficult is it to
determine whether a given correlation is a quantum correlation?  These
questions are central to problems in quantum communication and computation.
Previous work has shown that the general membership problem for quantum
correlations is computationally undecidable.  In the current work we show
something stronger: there is a family of constant-sized correlations --- that
is, correlations for which the number of measurements and number of measurement
outcomes are fixed --- such that solving the quantum membership problem for
this family is
computationally impossible.  Thus, the
undecidability that arises in understanding Bell experiments is 
not dependent on varying the number of measurements in the experiment. This
places strong constraints on the types of descriptions that can be given for
quantum correlation sets.  Our proof is based on a combination of techniques
from quantum self-testing and undecidability results 
for linear system nonlocal games.
\end{abstract}
\newpage
\tableofcontents

\section{Introduction}
\label{sec:intro}

Suppose two spatially separated parties, say Alice and Bob, are each able to
perform different measurements on their local system. If Alice can perform
$n_A$ different measurements, each with $m_A$ outcomes, and Bob can perform
$n_B$ different measurements, each with $m_B$ outcomes, then from the point of
view of an outside observer, their behaviour is captured by the collection
\begin{equation*}
    P = \{P(a,b | x,y) : 0 \leq a < m_A,\ 0 \leq b < m_B,\ 0 \leq x < n_A,\ 0 \leq y < n_B\}
\end{equation*}
where $P(a,b|x,y)$ is the probability that Alice measures outcome $a$ and Bob
measures outcome $b$, given that Alice performs measurement $x$ and Bob
performs measurement $y$. The collection $P$ is called a \emph{correlation
(matrix)} or \emph{behaviour} \cite{Tsirelson93}. Colloquially, the size
of a correlation is given by the tuple $(n_A,n_B,m_A,m_B)$. 

It is natural to ask which correlations can occur in nature. Suppose
measurement $x$ on Alice's system always gives outcome $c_x$, and measurement
$y$ on Bob's system always gives outcome $d_y$. Then the corresponding
correlation is $P(a,b|x,y) = \delta_{a, c_x} \delta_{b,d_y}$, where $\delta$ is
the Kronecker delta. Correlations of this form are called \emph{deterministic
correlations}. The convex hull of the set of deterministic correlations is
denoted by $C_c(n_A,n_B,m_A,m_B)$, or $C_c$ when the tuple $(n_A,n_B, m_A, m_B)$
is clear. Correlations in $C_c$ are called \emph{classical correlations}. All
deterministic correlations obviously occur in nature, and if Alice and Bob have
access to shared randomness, they can also achieve all correlations in $C_c$.
It is a fundamental fact of quantum mechanics, first observed theoretically by
John Bell and now verified in many experiments, that Alice and Bob can achieve
correlations outside of $C_c$ by using quantum entanglement \cite{Be64}.

Bell's theorem leads to the question of which correlations can be achieved in
quantum mechanics. To study this question, Tsirelson introduced the set of
quantum correlations \cite{Tsirelson93}. There are actually several ways to define the
set of quantum correlations, depending on whether we assume that all Hilbert
spaces are finite-dimensional, and whether we use the tensor-product axiom or
commuting-operator axiom for joint systems. This leads to several different
choices for the set of quantum correlations: the finite-dimensional quantum
correlations $C_q$, the quantum-spatial correlations $C_{qs}$, the
quantum-approximate correlations $C_{qa}$, and the commuting-operator
correlations $C_{qc}$. We use the same convention as for classical correlations,
in that $C_t$ refers to $C_t(n_A,n_B,m_A,m_B)$ when the tuple
$(n_A,n_B,m_A,m_B)$ is clear. Tsirelson suggested that all four sets should be
equal, but we now know that (for large enough $n_A, n_B, m_A, m_B$) all four sets are different, and hence give a
strictly increasing sequence
\begin{equation*}
    C_c \subsetneq C_q \subsetneq C_{qs} \subsetneq C_{qa} \subsetneq C_{qc}
\end{equation*}
\cite{slofstra2017, coladan2018, re, dykema2019, coladan2020}. The last inequality $C_{qa}
\subsetneq C_{qc}$ is a very exciting consequence of the recent proof \cite{re} that
$\MIP^* = \RE$ by Ji, Natarajan, Vidick, Wright, and Yuen, and following
\cite{fritz2012,JNPPSW11}, this inequality gives a negative resolution to the Connes
embedding problem. 

As the convex hull of a finite set, $C_c$ is a polytope in $\R^{N}$, where
$N = n_A n_B m_A m_B$. The sets $C_t$, $t \in \{q,qs,qa,qc\}$, are also convex
subsets of $\R^N$ (in addition, $C_{qa}$ and $C_{qc}$ are closed), but it
follows from a result of Tsirelson \cite{Tsirelson85} that these sets are not
polytopes. Following up on this point in \cite[Problem 2.10]{Tsirelson93}, Tsirelson asks whether the sets of quantum
correlations might still have nice geometric descriptions, specifically by
analytic or even polynomial inequalities. This question is significant for two
reasons: practical, in that the quantum correlation set captures what is
possible with quantum entanglement, and thus a description of this set tells us
what is theoretically achievable in experiments and quantum technologies; and
conceptual, in that a nice description of the set of quantum correlations could
improve our conceptual understanding of quantum entanglement, similarly to how
the description of $C_c$ as a polytope is
tied to our understanding of classical correlations as mixtures of deterministic correlations.

Due to the significance of this question, describing the set of quantum
correlations has been a central question in the field. On the geometric side,
Tsirelson's original results show that when $m_A = m_B=2$, a certain linear slice of the quantum
correlation set is the elliptope, a convex set described by quadratic
inequalities (\cite{Tsirelson85}, see also
\cite{Landau88,WernerWolf2001,Masanes2003,Pitowsky2008} for subsequent work on the
special case that $n_A = n_B = 2$, and
\cite{TVC2019} for a description as the elliptope).
The convex geometry of
$C_q(2,2,2,2)$ is studied in detail in \cite{GKWVWCLS2018}. The case of
$C_q(2,2,2,2)$ benefits from a dimension reduction argument: by Jordan's
lemma, any correlation in $C_q(2,2,2,2)$ can be expressed
as a convex combination of correlations from two-qubit systems.  In general, we
might ask whether there is a bound on the dimension of Hilbert spaces needed to realize
correlations in $C_q(n_A,n_B,m_A,m_B)$. There are several different proofs that, as
the number of questions and outcomes increases, there are correlations which
require Hilbert spaces of arbitrarily high dimension \cite{Tsirelson93,brunner2008testing,slofstra2011}. If we fix $n_A,
n_B,m_A,m_B$ such that $C_q(n_A,n_B,m_A,m_B)$ is not closed (which again,
happens if $n_A, n_B, m_A,m_B$ are large enough), then $C_{q}(n_A,n_B,m_A,m_B)$
contains correlations which require Hilbert spaces of arbitrarily large dimension.
The first author gives an explicit family of correlations $\set{P_d \mid d \geq 1}$
with $n_A,n_B,m_A,m_B$ fixed requiring maximally entangled states of dimension
$d$ in \cite{fu2019}.  Thus the methods used to study $C_q(2,2,2,2)$ do not work for more
measurements or outcomes.
Using a different dimension reduction argument, Russell describes another linear slice of $C_q$, the
synchronous correlations, in $C_q(3,3,2,2)$, but again this description does
not extend to other numbers of measurements and outcomes \cite{Russ2020}.

In another line, a number of authors have considered whether it's possible to
give a conceptual, rather than geometric, description of the quantum
correlation sets. The first result in this line comes from Tsirelson's original
definition of quantum correlations, where he observes that quantum correlations
belong to the set of nonsignalling correlations, which are those correlations
$P$ for which the sums
\begin{equation*}
    \sum_{b} P(a,b|x,y) \text{ and } \sum_{a} P(a,b|x,y)
\end{equation*}
are independent of $y$ and $x$ respectively. This condition captures the fact that,
when spatially separated, Alice and Bob cannot communicate with each other.
Since the set of nonsignalling correlations is strictly larger than the
commuting-operator correlations $C_{qc}$, the fact that Alice and Bob cannot
communicate does not identify the set of quantum correlations among all
correlations. But it is natural to ask whether there might not be additional
principles which would suffice to identify the set of correlations. Some 
examples of conditions which further restrict the set of nonsignalling correlations
and which are satisfied by quantum correlations can be found in
\cite{BBLMTU,Pawlowski2009,NM2010,FSABCLA,SGAN}, but so far these do not give a
complete description of the set of quantum correlations.

Based on the apparent difficulty of describing the set of quantum correlations,
there has also been a line of work studying the computational complexity
of problems related to these sets.  The main line of inquiry, initiated in \cite{Cleve:2004}, has been to
consider the difficulty of determining the quantum and commuting-operator
values of a nonlocal game.   For example, one can consider the problem
of determining whether a given nonlocal game has a perfect strategy.
\begin{perfectStrat}
	Given a tuple of natural numbers $(n_A, n_B, m_A, m_B)$ and a nonlocal game $G$ with $n_A$ and $n_B$ questions
	and $m_A$ and $m_B$ answers, does $G$ have a perfect strategy in $C_t$?
\end{perfectStrat}
From the point of view of convex geometry, the
quantum (resp. commuting-operator) value of a nonlocal game is the maximum a
certain linear functional on the set $C_{qa}$ (resp.  $C_{qc}$). Asking whether such a nonlocal game
has a perfect strategy corresponds to asking whether this maximum is equal to $1$. Leading
up to \cite{re}, there was a series of deep works showing that even the
approximate version of this optimization problem is indeed very difficult
\cite{ito2012, ruv2013, ji2017, lowdegree, neexp}. These
results have implications in computational complexity theory, as they imply
lower bounds on the complexity class $\MIP^*$ of multiprover proofs with
entangled provers. 
In the exact (rather than approximate) case, previous results by the
last author of the current paper imply that the problems $\PerfectStrat_t$
are undecidable for $t \in \{q, qs, qa, qc\}$  \cite{slofstra2017,slofstra2019, ji2019}. 
To understand the difficulty of approximating the quantum and commuting-operator
values, we can look at a gapped variant of $\PerfectStrat_t$:
\begin{gapPerfectStrat}
	Given a tuple of natural numbers $(n_A, n_B,m_A, m_B)$ and a nonlocal game $G$ with $n_A$ and $n_B$ questions
	and $m_A$ and $m_B$ answers, decide whether $G$ has a perfect strategy in $C_t$, or the quantum value of $G$ is
	$\leq 1/2$, given that one of the two is the case.
\end{gapPerfectStrat}
The result by Ji, Natarajan, Vidick, Wright, and Yuen mentioned above shows that
$\GapPerfectStrat_{t}$ is also undecidable for $t \in \{q,qs,qa\}$ \cite{re}.

Rather than looking at nonlocal games, a more straightforward way to study the difficulty of describing
quantum correlation sets is to look at the membership problem for these sets. 
Specifically, we can look at the decision problems for $t \in \{q,qs,qa,qc\}$ and subfields $\mbK \subseteq \R$.
\begin{membership}
	Given a tuple $(n_A, n_B, m_A, m_B)$, and a
        correlation $P \in \mbK^{n_A n_B m_A m_B}$, is $P \in C_t(n_A,n_B,m_A,m_B)$?
\end{membership}
 The point of
restricting to correlations in $\K^{n_A n_B m_A m_B}$ rather than $\R^{n_A n_B m_A m_B}$
is that it is not possible to describe all real numbers in a finite fashion. We
are primarily interested in subfields of $\R$ such as $\Q$, where it is practical to work
with elements of the field on a computer. For our results we actually need to
take a larger field than $\Q$, so in what follows we'll set $\mbK =
\overline{\Q} \cap \R$ unless otherwise noted, where $\overline{\Q}$ is the
algebraic closure of the rationals.\footnote{Since $\overline{\Q}$ is
computable, it is possible to work with $\overline{\Q}$ and $\overline{\Q} \cap
\R$ on a computer, and indeed support for this is included in Mathematica and
other computer algebra packages.}

The problems $\Membership_{t,\K}$
are a very general way of studying descriptions of the sets $C_t$ for $t \in
\{q,qs,qa,qc\}$, since we don't restrict to any particular form of description,
but instead just look at a basic functionality that we would hope to have from any
nice
description, namely a way of being able to distinguish elements inside the set from those
outside. The decision problems $\Membership_{t,\K}$ are not
equivalent to the problems $\PerfectStrat_t$ or $\GapPerfectStrat_t$,
since nonlocal games do not necessarily have unique perfect strategies in $C_t$.
Nonetheless, the two families of decision problems are closely related. Indeed, the methods
used in \cite{slofstra2019} to show the undecidability of
$\PerfectStrat_{qc}$ are adapted in \cite{coudron2019} to show the
undecidability of $\Membership_{qc,\K}$ \cite{coudron2019}. The
methods of \cite{slofstra2017} can be adapted to show the undecidability of
$\Membership_{t,\K}$ for $t \in \{q,qs,qa\}$ in similar fashion
(although some work is needed for the case $t=q$).  The undecidability of
$\GapPerfectStrat_t$ can be used (in a blackbox fashion,
without referring to the proof methods) to get the stronger result that
$\Membership_{t,\Q}$ is undecidable for $t \in \{q,qs,qa\}$ \cite{re}.

The above undecidability
results put very strong restrictions on what descriptions of the
quantum correlation sets are possible. For instance, they imply that there is no Turing
machine which takes tuples $(n_A,n_B,m_A, m_B)$
as inputs, and outputs a description of $C_t(n_A, n_B, m_A, m_B)$ in terms of a
finite list of polynomial inequalities, since such a Turing machine would allow
us to decide $\Membership_{t,\K}$ (such a description is sometimes called a uniform polynomial description). 
Similarly, these results also
imply that there can be no finite set of principles, independent of $(n_A,n_B,m_A,m_B)$,
such that we can decide algorithmically whether a correlation satisfies every principle, and
such that a correlation satisfies all the principles if and only if it belongs
to $C_t(n_A, n_B, m_A, m_B)$. 

However, we note that the reasoning in the last two paragraphs depends
crucially on the fact that the parameters $(n_A, n_B, m_A, m_B)$ can vary. The
papers \cite{slofstra2019, slofstra2017, re} all involve games with unbounded
alphabet size. Hence these results leave open the possibility that every set
$C_t(n_A,n_B,m_A,m_B)$ has a nice description, but that it is just not possible
to have a Turing machine which outputs these descriptions as a function of
$(n_A,n_B,m_A,m_B)$ (in other words, they might have a non-uniform description). Thus it is natural to ask what happens to the complexity of
$\Membership_{t,\K}$ when $(n_A, n_B, m_A, m_B)$ is held constant.  This question motivates our main result.

\begin{constMembership}
	Given a correlation $P \in \K^{n_An_Bm_Am_B}$, is $P \in C_t(n_A,n_B,m_A,m_B)$?
\end{constMembership}

\begin{theorem}\label{thm:mainintro}
    There is an integer $\alpha$ such that the decision problem
    $\Membership(n_A,n_B,m_A,m_B)_{t,\K}$ is undecidable for $t \in
    \{qa,qc\}$ and $n_A, n_B, m_A, m_B > \alpha$.
\end{theorem}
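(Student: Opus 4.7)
The plan is to reduce the already-undecidable problem of deciding whether a linear system (LS) nonlocal game has a perfect $t$-strategy, for $t \in \{qa, qc\}$, to the constant-size membership problem. The reduction will take an LS game $G$ with arbitrarily many variables and constraints and produce a correlation $P_G$ of some \emph{fixed} size $(n_A, n_B, m_A, m_B)$, with all entries in $\K$, such that $P_G \in C_t(n_A, n_B, m_A, m_B)$ if and only if $G$ has a perfect $t$-strategy. Since $\PerfectStrat_t$ is undecidable for LS games in the relevant regime, this transfers undecidability to $\Membership(n_A,n_B,m_A,m_B)_{t,\K}$.

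First I would fix a constant-sized quantum self-test whose ideal strategy is rigid enough to serve as an interface to an unbounded operator algebra. A natural candidate is a Pauli-braiding-style test: a correlation over a fixed alphabet whose optimal strategy self-tests, up to a local isometry, a maximally entangled state together with tensor products of single-qubit Pauli observables in arbitrary dimension. Rigidity means that any $t$-strategy reproducing the ideal correlation is, modulo isometry, a copy of this canonical strategy, so the players share access to an unboundedly large Pauli algebra through only a small, fixed number of measurements.

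Next, given an LS game $G$, I would encode its constraints by algebraically perturbing this self-test correlation. Each LS constraint is a relation over a finite group among the game variables and can be phrased as an expected-value identity for a product of Pauli-style observables. By setting a controlled family of entries in the correlation to specific algebraic numbers in $\K$ that encode these expected-value identities, one obtains a correlation $P_G$ of the same constant size, produced from $G$ by a Turing machine. Achievability of $P_G$ in $C_t$ then forces the self-tested operators to satisfy exactly the defining relations of the solution group $\Gamma_G$, so $P_G \in C_t$ holds if and only if $\Gamma_G$ admits the appropriate $t$-representation, and hence if and only if $G$ has a perfect $t$-strategy. The reverse direction of this equivalence is a direct construction from a perfect strategy for $G$; the forward direction is where self-testing rigidity is essential.

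The main obstacle is the tension between a constant alphabet size and encoding unboundedly much game data. The questions and constraints of $G$ grow without bound, so they cannot live in the measurement labels; all game-specific information has to be hidden in the numerical entries of $P_G$, while the fixed measurement interface does the structural work. The hardest part will be proving soundness of the compiled correlation, namely that every strategy achieving $P_G$ can be decoded, after a local isometry, into a perfect strategy for $G$; this requires combining self-testing rigidity with a careful algebraic argument that the adjusted correlation values force the decoded operators to obey the LS relations. Handling $t = qc$ adds a further subtlety, since the strategies may be infinite-dimensional and isometric rigidity must be replaced by a $\ast$-algebraic analogue, but existing rigidity results in the Pauli-braiding and LS-game frameworks should supply the required tools once the encoding is set up correctly.
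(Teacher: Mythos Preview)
Your proposal has the right high-level intuition but misses the central mechanism that makes the constant-size result possible, and as written the reduction would not go through.

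The gap is in how you handle the unbounded input. You propose to take an \emph{arbitrary} LS game $G$ and encode all of its constraints into the numerical entries of a single constant-sized correlation $P_G$. But the defining relations of the solution group $\Gamma_G$ change from game to game, and there are unboundedly many of them. A constant-sized correlation lives in a fixed finite-dimensional vector space, and rigidity of a Pauli-braiding self-test pins down a fixed algebraic structure (Pauli operators on a maximally entangled state); it does not by itself provide a way to force an \emph{arbitrary, varying} list of group relations on the decoded operators. Your sentence ``all game-specific information has to be hidden in the numerical entries of $P_G$'' is exactly the crux, and the proposal gives no mechanism for doing this. Nothing in the Pauli-braiding or LS-game rigidity literature lets you compile an unbounded family of relations into finitely many correlation values so that achievability forces all of them.

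The paper's route around this obstacle is quite different. It does \emph{not} reduce from an arbitrary LS game. Instead it fixes once and for all a single Minsky machine with undecidable halting problem, builds a single fixed solution group $\Gamma(A)$ from it, and then lets only the integer input $n$ vary. The input $n$ is encoded via a single additional relation $(t_1 t_2)^{p(n)} = e$ in a dihedral subgroup, and the key technical ingredient (from \cite{fu2019}) is a constant-sized correlation $\fC_{p}$ that forces this one relation for any prime $p$. So the correlation size depends only on the fixed machine, not on $n$. The paper also reduces through an \emph{intersection} problem for finite sets $F_n$ of correlations rather than a single $P_G$, which sidesteps having to pin down a unique ideal correlation. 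Your proposal would need an analogue of the $\fC_p$ construction and the ``fix the group, vary one parameter'' idea to close the gap.
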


This result shows that the undecidability of membership in $C_{qa}$ and
$C_{qc}$ is not only a consequence of varying the size of the correlation, but
is in fact embedded into the shape of a single set $C_t(n_A, n_B, m_A, m_B)$
for large enough $(n_A, n_B, m_A, m_B)$. As a practical consequence, the result
shows that there is no description of the set $C_t(n_A,n_B,m_A,m_B)$ (e.g. by
polynomial inequalities) that would allow us to decide membership in that set.

As mentioned above, in this theorem $\mbK$ is the intersection $\overline{\Q}
\cap \R$. However, the proof of this theorem does not rely on writing down very
complicated elements of $\overline{\Q}$.  In fact, $\mbK$ could be replaced with
$\mbK_0 \cap \R$, where $\mbK_0$ is the subfield of $\overline{\Q}$ generated by
roots of unity. In this way, the theorem is similar to the undecidability
results for $(\Membership_{t,\mbK})$ that follow from
\cite{slofstra2017,slofstra2019,coudron2019}.  However, in those results, if the
correlations are defined in terms of observables (as in, for instance, the definition of quantum correlations in \cite{Tsirelson85}) instead of the standard approach using measurements, then
it is possible to take $\mbK = \Q$. In our case, even if we work with
correlations defined in terms of observables, we still need to use roots of
unity. We also note that the correlations constructed in the proof of \cref{thm:mainintro}
are synchronous (see \cref{def:synch_cor}), so \cref{thm:mainintro} holds for the subsets of synchronous correlations.

It is interesting to also consider upper bounds on the complexity of the problem
$\Membership(n_A,n_B,m_A,m_B)_{t,\K}$.  When $t = qc$, this problem
is contained in $\coRE$, and the proof of \cref{thm:mainintro} actually shows that this
problem is $\coRE$-complete (for large enough $n_A, n_B, m_A, m_B$). When $t=q$ or $t=qs$,
this problem is contained in $\RE$, but when $t=qa$, the best known upper bound
on this decision problem is $\Pi^0_2$. In this case, \cref{thm:mainintro}
only shows that $\Membership(n_A,n_B,m_A,m_B)_{qa,\mbK}$ is $\coRE$-hard, so
this lower bound is not necessarily tight. Recently, Mousavi, Nezhadi, and Yuen
have shown that $\PerfectStrat_{qa}$ is $\Pi^0_2$-complete \cite{mousavi2021},
and it seems reasonable to conjecture that $\Membership(n_A,n_B,m_A,m_B)_{qa,\K}$
is also $\Pi^0_2$-complete for large enough $n_A, n_B, m_A, m_B$. We leave this for future research. 

\subsection{Paper overview}
\label{sec:overview}

We summarize the technical content of this paper. The starting point for the
proof of Theorem~\ref{thm:mainintro} is the fact that the halting problem for
Minsky machines is undecidable \cite{minsky}. Minsky machines, which we
review in Section \ref{sec:kms}, are a model of universal computation
similar to Turing machines.  To relate Minsky machines to correlations, we go
through group theory: specifically the Kharlampovich-Myasnikov-Sapir (KMS)
groups \cite{sim_group}, also described in Section \ref{sec:kms}. To construct
correlations from these groups, we use the machinery of \cite{slofstra2017},
described in \cref{S:embeddings}. Section~\ref{sec:prelim} and
Section~\ref{sec:quant_cor} contain some basic background on group theory and
quantum correlations, respectively.

For the proof of \cref{thm:mainintro}, we pick a Minsky machine $\MM$ with an
undecidable halting problem. For each $n \geq 1$, we then write down a finite
set of correlations $F_n$ such that $F_n \cap C_{qa} \neq \emptyset$ if $\MM$
does not accept $n$, and $F_n \cap C_{qc} = \emptyset$ otherwise. The
correlations in $F_n$ have two parts.  The first part is constructed from $\MM$ using the
method of \cite{slofstra2017}, and is independent of $n$. The second
part encodes the input $n$ using the method of \cite{fu2019} to keep
the number of measurements and outcomes fixed. This is described in
\cref{sec:dihedral}. In this way, the number of measurements and measurement
outcomes for correlations in $F_n$ will depend only on $\MM$, not on $n$.
The proof of \cref{thm:mainintro} is completed in \cref{sec:main_cqa}.

\subsection{Acknowledgements}

The authors thank Henry Yuen for helpful conversations about the topics of this paper, and the anonymous referee for detailed comments on the manuscript.
CAM thanks Johannes Bausch for a conversation about \cite{bausch2020undecidability} which helped to inspire this project.
WS is supported by NSERC DG 2018-03968 and an Alfred P. Sloan Research Fellowship.
This paper is partly a contribution of the U.~S.~National Institute
of Standards and Technology, and is not subject to copyright in the United States.

\section{Notation and group theory background}
\label{sec:prelim}
In this section we give a brief description of some of the notation and group
theory concepts we'll use throughout the paper. For basic notation, we denote
the set $\set{0,1,\ldots n-1}$ by $[n]$ and the set $\set{ x \in \R \mid x \geq c}$ by $\R_{\geq c}$. 
We index vectors in $\C^{n}$ starting
from $0$, so $\C^{n} = \C^{[n]}$. The $n$-th root of unity is denoted by
$\omega_n := e^{i2\pi/n}$.  For a Hilbert space $\calH$, we let $\calL(\calH)$
be the set of all bounded linear operators acting on $\calH$, and
$\calU(\calH)$ be the group of unitaries acting on $\calH$. We let
$\|\cdot\|_{op}$ denote the operator norm on $\calL(\calH)$. For finite-dimensional
Hilbert spaces, we also work with the \emph{normalized Hilbert-Schmidt norm}, which
is defined by 
\begin{align*}
        \norm{M} = \sqrt{\frac{\Tr(M\ct M)}{d}}.
\end{align*}
for $M \in \calL(\C^d)$. Note that we don't use any subscript to distinguish
this from other norms, as this will be our default norm. We also let $\tTr(M)$
be the \emph{normalized trace} $\Tr(M)/d$ of $M$. 
When working with a group $G$, we use $e$ for the identity, and let $[g,h]$ be
the commutator $g^{-1} h^{-1} g h$ of $g,h \in G$. We let $g^h$ denote the
conjugate $h^{-1} g h$ of $g$ by $h$.  This notation matches with
\cite{sim_group}.

If $S$ is a set, we let $\mathcal{F}(S)$ be the free group generated by $S$. If
$R$ is a subset of $\mathcal{F}(S)$, then we let $\langle S : R \rangle$ be the
quotient of $\mathcal{F}(S)$ by the normal subgroup generated by $R$. The pair
$S,R$ is called a \emph{presentation} of $\langle S : R \rangle$, and as usual
we use $\langle S :R \rangle$ to refer to the presentation and the group
defined by the presentation interchangeably. We also will write $\ip{S : r_i =
t_i, i \in I}$ to mean the presentation $\ip{S : \{ r_i t_i^{-1}, i \in I\}}$.
If $S$ and $R$ are finite, then $\langle S : R \rangle$ is said to be
\emph{finitely-presented}. If $G = \langle S_G : R_G \rangle$, $S$ is disjoint
from $S_G$, and $R$ is a subset of $\mathcal{F}(S_G \cup S)$, then we sometimes
denote the presentation $\langle S_G \cup S : R_G \cup R \rangle$ by $\langle
G, S : R \rangle$. 
An example of a finitely-presented group that we'll use is the \emph{dihedral
group} 
\begin{align*}
    D_n = \ip{ t_1, t_2 : t_1^2 = t_2^2 = (t_1t_2)^n = e }.
\end{align*}
This group has order $2n$, and the elements are 
$(t_1t_2)^j$ and $t_2(t_1t_2)^j$ for $j \in [n]$.

The \emph{free product} of a group $G$ with a group $H$ is denoted by $G \ast
H$.  Note that if $G = \ip{ S_G : R_G }$ and $H =  \ip{S_H : R_H}$, then $G
\ast H = \ip{ S_G \cup S_H : R_G \cup R_H} = \langle G, S_H : R_H \rangle$,
where the unions of $S_G$ and $S_H$ are disjoint.  A more general notion of the
free product of groups is \emph{the free product of groups with amalgamation}.
Let $G_1$ and $G_2$ be two groups with subgroups $H_1$ and $H_2$ respectively
such that there exists an isomorphism $\phi: H_1 \rightarrow H_2$.  Then the
free product of $G_1$ and $G_2$ with amalgamation is defined by $G_1 \ast_\phi
G_2 := G_1 \ast G_2 / \ip{ h_1 \phi(h_1)^{-1} \;|\; h_1 \in H_1 }$.

Another way to construct new groups from a given group is by
Higman-Neumann-Neumann extension (\emph{HNN-extension}) \cite{HNN}. If $H$ is
a subgroup of $G$ and $\phi:H\to H$ is an injective homomorphism, then the
HNN-extension of $G$ is $\overline{G} = \ip{ G, t : t^{-1} h t = \phi(h), h \in H}$.
By \cite[Theorem $11.70$]{rotman2012}, the natural homomorphism sending $g \in G$
to its image in $\overline{G}$ is injective, meaning that we can regard $G$
as a subgroup of $\overline{G}$. We shall introduce other important properties
of the free product with amalgamation and the HNN-extension later when they are
needed. For more background on these concepts, we refer to \cite{rotman2012}. 

When $\phi:H \to H$ is an isomorphism of order $n$, we similarly define the
$\Z_n$-HNN extension of $G$ by $\hat{G} = \ip{G, t : t^n = e, t^{-1} h t =
\phi(h) \text{ for } h \in H}$. As in the case of the ordinary HNN-extension,
$G$ is embedded in $\hat{G}$:  
\begin{lemma}\label{lem:ZnHNN}
    Let $G$ be a group, $H$ a subgroup of $G$, $\phi : H \to H$ an isomorphism
    of order $n$, and $\hat{G} := \ip{G, t : t^n = e, t^{-1} h t = \phi(h)
    \text{ for } h \in H}$ the $\Z_n$-HNN extension. Then the inclusion
    \begin{equation*}
        G \to \hat{G} : g \mapsto g
    \end{equation*}
    is injective, and $t$ has order $n$.
\end{lemma}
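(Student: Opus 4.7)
The plan is to realize $\hat{G}$ as an amalgamated free product with one factor being a semidirect product, and then invoke the standard embedding theorem for amalgamated free products.

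Since $\phi: H \to H$ is an automorphism of order exactly $n$, we can form the semidirect product $K := H \rtimes_{\phi} \Z_n$, which has presentation $\ip{H, t : t^n = e,\, t^{-1} h t = \phi(h) \text{ for } h \in H}$. In $K$, the distinguished generator $t$ has order exactly $n$ (because $K$ projects onto $\Z_n$ via $h \mapsto 0$, $t \mapsto 1$), and $H$ sits inside $K$ as a subgroup.

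Next, I would form the amalgamated free product $\tilde{G} := G \ast_H K$, where $H$ is identified with its copies in $G$ and in $K$. Comparing defining relations, $\tilde{G}$ has presentation $\ip{G, t : t^n = e,\, t^{-1} h t = \phi(h) \text{ for } h \in H}$, which is exactly the presentation of $\hat{G}$. So $\tilde{G} \cong \hat{G}$; one can verify this by checking universal properties in both directions, or simply by noting that both groups have the same generators and the same relations.

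Now I would apply the standard theorem on amalgamated free products (see e.g.\ \cite[Theorem 11.67]{rotman2012}), which states that for an amalgamated free product $A \ast_C B$, the natural maps $A \to A \ast_C B$ and $B \to A \ast_C B$ are both injective. Applied to $G \ast_H K \cong \hat{G}$, this gives that $G$ embeds into $\hat{G}$, and that $K$ embeds into $\hat{G}$. The latter embedding implies that $t$, which has order $n$ in $K$, still has order $n$ in $\hat{G}$. I do not anticipate a main obstacle: the only thing to be careful about is verifying that the presentations of $\tilde{G}$ and $\hat{G}$ really match (in particular that no extra relations sneak in from the $K$-side beyond those defining $\hat{G}$), but this is immediate from the presentation of $H \rtimes_\phi \Z_n$ together with the identification of $H$ inside $G$.
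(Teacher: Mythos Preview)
Your proof is correct and takes a genuinely different route from the paper. The paper realizes $\hat{G}$ as $K = (G^{*n}/N) \rtimes_{\tilde{\psi}} \Z_n$, where $G^{*n}/N$ is an \emph{iterated} amalgamated free product of $n$ copies of $G$ over $H$, and $\Z_n$ acts by cyclically permuting the factors; injectivity of $G$ then comes from the embedding into this iterated amalgam. Your approach is more economical: rather than amalgamating $n$ copies of $G$, you amalgamate a single copy of $G$ with the small group $H \rtimes_\phi \Z_n$ over $H$, and read off both conclusions from one application of the amalgamated-product embedding theorem. This is cleaner for the lemma at hand. The paper's more elaborate construction is later reused verbatim in \cref{lem:ZnHNN2} to show soficity of the $\Z_n$-HNN extension, but your decomposition would serve equally well there: $H \rtimes_\phi \Z_n$ is amenable (amenable-by-finite), so $G \ast_H (H \rtimes_\phi \Z_n)$ is sofic by the same cited closure property for amalgams over amenable subgroups.
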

\begin{proof}
    Let $G^{* n}$ denote the free product of $G$ with itself $n$ times, where we
    index the factors by elements of $\Z_n$. Let $i_k : G \incl G^{* n}$ be the
    inclusion of the $k$th factor, $k \in \Z_n$, and let $\psi : G^{* n} \to G^{* n}$
    be the cyclic shift, so $\psi(i_k(g)) = \psi(i_{k+1}(g))$ for $k \in \Z_n$.

    Let $N$ be the normal subgroup of $G^{*n}$ generated by
    $i_k(h^{-1})i_{k+1}(\phi(h))$ for all $h \in H$ and $0 \leq k \leq n-2$.
    Since we only include these relations for $k \leq n-2$, we can check (for
    instance, by looking at presentations) that $G^{*n} / N$ is an iterated
    amalgamated product of $G$ with itself $n$ times. But observe that $N$ also
    contains
    \begin{align*}
        \left(i_0(\phi(h))^{-1})i_1(\phi^2(h))\right) & \left(i_1(\phi^2(h)^{-1}) i_2(\phi^3(h))\right) \cdots
            \left(i_{n-2}(\phi^{n-1}(h)^{-1}) i_{n-1}(\phi^{n}(h))\right)  \\
             & = i_0(\phi(h)^{-1}) i_{n-1}(h),
    \end{align*}
    and hence contains $i_{n-1}(h^{-1}) i_0(\phi(h))$. Consequently
    $\psi(N) = N$, so $\psi$ induces an automorphism $\tilde{\psi}$ of $G^{* n} / N$.  
    We claim that $\hat{G}$ is isomorphic to $K := G^{* n} / N
    \rtimes_{\tilde{\psi}} \Z_n$. Indeed, suppose $t$ is the generator of $\Z_n$,
    so that 
    \begin{equation*}
        t^k \cdot x = \tilde{\psi}^k(x) \cdot t^k \text{ for all } x \in G^{*n} / N.
    \end{equation*}
    Then 
    \begin{equation*}
        t^{-1} i_0(h)N t = \tilde{\psi}^{-1}(i_0(h) N)  = i_{n-1}(h) N = i_0(\phi(h)) N 
    \end{equation*}
    for all $h \in H$, so there is a homomorphism $\alpha : \hat{G} \to K$ sending $g
    \mapsto i_0(g)N$, $g \in G$, and $t \mapsto t$. 

    Going the other way, there is a homomorphism $\beta : G^{*n} \to \hat{G}$ sending
    $i_k(g) \mapsto t^k g t^{-k}$. This homomorphism sends 
    \begin{equation*}
        i_k(h^{-1}) i_{k+1}(\phi(h)) \mapsto t^k h^{-1} t^{-k} \cdot t^{k+1} \phi(h) t^{-(k+1)}
            = t^{k+1} (t^{-1} h^{-1} t) \phi(h) t^{-(k+1)} = e,
    \end{equation*}
    so $\beta$ descends to a homomorphism $\tilde{\beta} : G^{*n}/ N \to \hat{G}$. If $g \in G$
    and $a \in \Z_n$ then 
    \begin{equation*}
        t^{a} \tilde{\beta}(i_k(g) N) t^{-a} = t^{a+k} g t^{-(a+k)} = \tilde{\beta}(i_{a+k}(g) N)
            = \tilde{\beta}(\tilde{\psi}^a(i_k(g) N)).
    \end{equation*}
    We conclude that $\tilde{\beta}(\tilde{\psi}^a(x)) = t^a \tilde{\beta}(x) t^{-a}$ for all
    $x \in G^{* n} / N$, and hence there is a homomorphism $K \to \hat{G}$ sending $x \in G^{*n} / N$
    to $\tilde{\beta}(x)$ and $t \mapsto t$. Since, in particular, this homomorphism sends $i_0(g) N$
    to $g$, it is an inverse to $\alpha$, proving the claim that $\hat{G}$ and $K$ are isomorphic.

    Since $G^{*n} / N$ is an iterated amalgamated free product, the homomorphism $G \to G^{*n} / N$
    sending $g \mapsto i_0(g) N$ is injective. Since $G^{*n} / N$ is a subgroup of $K$, 
    $g \mapsto i_0(g) N$ is still injective when considered as a homomorphism $G \to K$. Composing
    with the isomorphism $K \iso \hat{G}$, we get that the homomorphism $G \to \hat{G} : g \mapsto g$
    is injective. Finally, as the generator of $\Z_n$, $t$ has order $n$.
\end{proof}

A \emph{unitary representation} $\rho$ of a group $G$ on the Hilbert space
$\calH$ is a homomorphism $\rho: G \to \calU(\calH)$.  For any set $X$, we let
$\ell^2 X$ denote the Hilbert space with Hilbert basis $\{\ket{x} : x \in X\}$.
The \emph{left regular representation} $L : G \to \calU(\ell^2 G)$ of a group
$G$ is defined by $L(g)\ket{h} = \ket{gh}$, and \emph{right regular
representation} $R: G \to \calU(\ell^2 G)$ is defined by $R(g)\ket{h} =
\ket{hg^{-1}}$ for all $g, h \in G$. Note that $L(g)$ and $R(g')$ commute for
all $g, g' \in G$.

To construct correlations in $C_{qa}$ (which, recall from the introduction, are
limits of finite-dimensional correlations), we use finite-dimensional
approximate representations of groups. The norm we use for these approximate
representations is the normalized Hilbert-Schmidt norm. 
\begin{definition}[Definition $5$ of \cite{slofstra2017}]
    Let $G=\ip{ S : R }$ be a finitely-presented group,
    and let $\calH$ be a finite-dimensional Hilbert space.
    A finite-dimensional \textbf{$\pmb{\ep}$-approximate representation} of $G$ is a homomorphism
    $\phi: \mathcal{F}(S) \rightarrow \calU(\calH)$ such that
    $\norm{ \phi(r) - \1 } \leq \ep$ for all $r \in R$.
\end{definition}
It is also possible to talk about approximate representations in, for instance, tracial von Neumann algebras,
but we will use approximate representations to mean finite-dimensional approximate representations, since that's what we use in this paper.
An element $g \in G = \ip{S:R}$ represented by a word $w \in \calF(S)$ is
\emph{nontrivial in approximate representations} of $G$ if there exists some
$\delta > 0$ such that, for all $\ep > 0$, there is an $\ep$-approximate
representation $\phi: \calF(S) \to \calU(\calH)$ such that $\norm{ \phi(w) - \1
} \geq \delta$ (this does not depend on the choice of word $w$). Otherwise we say that $g$ is \emph{trivial in approximate
representations}. If $F$ is a finite subset of elements which are non-trivial
in approximate representations, then we can find $\ep$-representations where
all the elements of $F$ are bounded away from the identity, and in fact we can
do this by having the trace of all the elements of $F$ be close to zero:
\begin{proposition}
	\label{prop:tensor_trick}
    Let $G = \ip{S : R}$ and $W$ be a finite subset of $\calF(S)$.
     Then, for every $\ep, \zeta > 0$, there is an $\ep$-approximate
    representation $\phi$ such that for all $w \in W$,
    \begin{enumerate}
    \item if $w$ is trivial in approximate representations of $G$, then
    $1 - \zeta \leq \tTr(\phi(w)) \leq 1$, and 
    \item if $w$ is nontrivial in approximate representations of $G$, then
     $0 \leq \tTr(\phi(w)) \leq \zeta$.    
     \end{enumerate}
     \end{proposition}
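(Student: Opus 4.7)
The plan is to combine a symmetrization step with a tensor power step, using the basic identity $\norm{U - \1}^2 = 2(1 - \Re\tTr(U))$ that holds for any unitary $U$ in normalized Hilbert--Schmidt norm. This identity tells us immediately that $\norm{\phi(w) - \1} \geq \delta$ gives $\Re\tTr(\phi(w)) \leq 1 - \delta^2/2$, but on its own it does not prevent $\Re\tTr(\phi(w))$ from being close to $-1$, nor does it control $\Im\tTr(\phi(w))$, so a direct tensor power of $\phi$ could have $|\tTr|$ stay near $1$.

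First I would partition $W = \{u_1,\ldots,u_n\} \sqcup \{v_1,\ldots,v_m\}$ into words nontrivial in approximate representations (with associated constants $\delta_i > 0$) and words trivial in approximate representations. For a sufficiently small $\ep_0 > 0$ to be chosen at the end, and for each $i$, I would use the definition of nontriviality to pick an $\ep_0$-approximate representation $\phi_i : \calF(S) \to \calU(\calH_i)$ with $\norm{\phi_i(u_i) - \1} \geq \delta_i$, and form the symmetrization $\psi_i := \phi_i \oplus \bar\phi_i \oplus \1_{2d_i}$, where $d_i := \dim \calH_i$. A short computation then gives, for every $g \in \calF(S)$ and every relation $r \in R$,
\begin{equation*}
\tTr(\psi_i(g)) \;=\; \tfrac{1}{2}\bigl(\Re\tTr(\phi_i(g)) + 1\bigr) \;\in\; [0,1],
\qquad \norm{\psi_i(r) - \1} \;=\; \norm{\phi_i(r) - \1}/\sqrt{2},
\end{equation*}
so $\tTr(\psi_i(g))$ is real and nonnegative, and $\tTr(\psi_i(u_i)) \leq 1 - \delta_i^2/4$.

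Next I would take large even tensor powers and combine: set $\Psi_i := \psi_i^{\x 2k_i}$ for positive integers $k_i$, and $\Phi := \bigotimes_i \Psi_i$. Then $\tTr(\Phi(g)) = \prod_i \tTr(\psi_i(g))^{2k_i}$ is automatically real and in $[0,1]$ for every $g$, while the usual tensor telescoping gives $\norm{\Phi(r) - \1} \leq \sqrt{2}\,\ep_0\, K$ where $K := \sum_i k_i$. For the upper bound on nontrivial words: every factor of $\tTr(\Phi(u_j))$ lies in $[0,1]$, and the $i = j$ factor is at most $(1 - \delta_j^2/4)^{2k_j}$, so choosing each $k_j$ with $(1 - \delta_j^2/4)^{2k_j} \leq \zeta$ forces $\tTr(\Phi(u_j)) \leq \zeta$. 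For the lower bound on trivial words: using that each $v_j$ is trivial in approximate representations, I would shrink $\ep_0$ further so that every $\phi_i$ satisfies $\norm{\phi_i(v_j) - \1} \leq \sqrt{2\zeta/K}$; then each factor satisfies $\tTr(\psi_i(v_j))^{2k_i} \geq 1 - k_i \zeta/K$ by Bernoulli, and the product over $i$ is at least $1 - \zeta$. Finally I would require $\sqrt{2}\ep_0 K \leq \ep$ so that $\Phi$ itself is $\ep$-approximate.

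The main obstacle, and the reason for including the $\1_{2d_i}$ block in $\psi_i$, is the possibility that $\tTr(\phi_i(u_i))$ has modulus close to $1$ while $\phi_i(u_i)$ is still far from $\1$, for instance because $\phi_i(u_i)$ is approximately a nontrivial scalar such as $-\1$. In that scenario, naive tensor powers of $\phi_i$ or even of $\phi_i \oplus \bar\phi_i$ would keep the trace on the unit circle rather than decay. Averaging against the trivial block instead shifts the range of $\tTr(\psi_i(u_i))$ into the nonnegative interval $[0, 1 - \delta_i^2/4]$, after which even tensor powers decay monotonically to $0$ at the explicit rate $(1 - \delta_i^2/4)^{2k_i}$.
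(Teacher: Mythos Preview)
Your proposal is correct and matches the standard ``tensor trick'' argument the paper alludes to (the paper omits the proof, citing \cite[Lemma~12]{slofstra2017}): symmetrize each witnessing approximate representation via $\phi_i \oplus \overline{\phi_i} \oplus \1$ to force the normalized trace into $[0,1]$, then take tensor powers to drive the trace of nontrivial words toward $0$ while keeping the trace of trivial words near $1$. The only cosmetic point is that the even exponent $2k_i$ is unnecessary once you have $\tTr(\psi_i(g)) \in [0,1]$, but it does no harm.
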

The proof of the proposition above is very similar to the proof of \cite[Lemma
$12$]{slofstra2017}, so we omit it here.
The next two well-known lemmas are useful when working with approximate representations.
\begin{lemma}
	\label{lm:norms}
	Let $G =\ip{ S: R}$ be a finitely presented group.
	Suppose $f \in \C[\calF(S)]$.
	Then there is a constant $c$ (depending on $f$) such that for any $\ep$-approximate representation 
	$\phi : \calF(S) \to \calU(\calH)$, $\norm{\phi(f)}_{op} \leq c$.
	Furthermore, if
	 $f = 0$ in $\C[G]$,
	then there is a constant $c'$ (also depending on $f$) such that for any $\ep$-approximate representation 
	$\phi : \calF(S) \to \calU(\calH)$,
 $\norm{ \phi(f) } \leq c' \ep$.
\end{lemma}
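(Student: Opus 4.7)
The lemma has two parts and the proofs are essentially the same in spirit: write $f$ in a form that is manifestly suitable, and then use unitarity together with the hypotheses on $\phi$.

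For the first statement, I would simply expand $f$ as a finite linear combination $f = \sum_i \alpha_i w_i$ with $w_i \in \calF(S)$ and $\alpha_i \in \C$. Since $\phi$ takes values in $\calU(\calH)$, each $\phi(w_i)$ is a unitary and so $\norm{\phi(w_i)}_{op} = 1$. The triangle inequality then gives
\begin{equation*}
    \norm{\phi(f)}_{op} \leq \sum_i |\alpha_i| \norm{\phi(w_i)}_{op} = \sum_i |\alpha_i|,
\end{equation*}
so $c := \sum_i |\alpha_i|$ depends only on $f$, not on $\phi$ or $\ep$.

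For the second statement, the key algebraic input is that the kernel of the canonical surjection $\C[\calF(S)] \to \C[G]$ is precisely the two-sided ideal $I$ generated by $\{r - 1 : r \in R\}$. Assuming $f = 0$ in $\C[G]$, we can therefore write $f = \sum_j \alpha_j\, g_j (r_j - 1) h_j$ for some finite collection of $\alpha_j \in \C$, $g_j, h_j \in \calF(S)$, and $r_j \in R$. Applying $\phi$ and using the triangle inequality gives
\begin{equation*}
    \norm{\phi(f)} \leq \sum_j |\alpha_j|\, \norm{\phi(g_j)\bigl(\phi(r_j) - \1\bigr)\phi(h_j)}.
\end{equation*}
Now I would invoke the fact that the normalized Hilbert--Schmidt norm is invariant under left and right multiplication by unitaries (a direct calculation from $\norm{M}^2 = \Tr(M\ct M)/d$), so each term equals $|\alpha_j|\cdot\norm{\phi(r_j) - \1} \leq |\alpha_j|\, \ep$ by the definition of an $\ep$-approximate representation. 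Setting $c' := \sum_j |\alpha_j|$ yields the bound $\norm{\phi(f)} \leq c'\ep$.

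There is no substantial obstacle here. The one point worth flagging is the choice of the two norms in the two parts: the first statement must use the operator norm because the normalized Hilbert--Schmidt norm of an arbitrary unitary on a $d$-dimensional space is $1$ independently of $d$, but one still needs a norm bound valid uniformly in the (possibly unbounded) dimension of $\calH$; the operator norm behaves well here. The second statement uses the normalized Hilbert--Schmidt norm because that is the norm in which the approximate relations are assumed to hold, and crucially it is unitarily invariant on both sides, which is what lets us absorb the $\phi(g_j)$ and $\phi(h_j)$ prefactors without any loss.
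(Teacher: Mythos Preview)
Your proof is correct and follows essentially the same approach as the paper: expand $f$ as a linear combination of group elements for the first bound, and use the fact that the kernel of $\C[\calF(S)] \to \C[G]$ is the two-sided ideal generated by $\{r - e : r \in R\}$ together with unitary invariance of the normalized Hilbert--Schmidt norm for the second. Your closing remark about why the operator norm is needed in the first part is slightly off (the normalized Hilbert--Schmidt norm of a unitary is also $1$, so that norm would equally give a dimension-independent bound), but this is tangential commentary and does not affect the proof itself.
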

\begin{proof}
	Suppose $f = \sum_{i \in [k]} a_i u_i$ where $a_i \in \C$ and $u_i \in \calF(S)$. Then
	\begin{align*}
		\norm{ \phi(f) }_{op} \leq \sum_{i\in [k]} 
		\abs{a_i} \norm{ \phi(u_i)}_{op} 
		 =  \sum_{i \in [k]} \abs{a_i}.
	\end{align*}
	If $f = 0$ in $\C[G]$, 
	we can write 
	\begin{align*}
		f = \sum_{i \in [k]} b_i x_i (e - r_i) y_i,
	\end{align*}
	where $b_i \in \C$, $x_i, y_i \in \calF(S)$ and $r_i \in R$.
	Then
	\begin{align*}
		\norm{ \phi(f) } \leq \sum_{i\in [k]} 
		\abs{b_i} \norm{ \phi(x_i) ( \1 - \phi(r_i))\phi(y_i)}
		=\sum_{i\in [k]} \abs{b_i} \norm{\1 - \phi(r_i)} 
		\leq \sum_{i \in [k]} \abs{b_i} \ep,
	\end{align*}
	by the unitary invariance of the normalized Hilbert-Schmidt norm.
	The lemma follows from taking $c = \sum_{i \in [k]} \abs{a_i}$ and $c' = \sum_{i \in [k]} \abs{b_i}$.
\end{proof}
\begin{lemma}
	\label{lm:round_prj}
	There is a nondecreasing function $\Delta: \R_{\geq 1} \times \N \to \R_{\geq 1}$
	such that 
	if $\set{ P_i \;|\; i \in [n] } \subset \calL(\C^d)$ is a set of matrices such that
	\begin{align*}
		\norm{P_i}_{op} \leq c, \quad \norm{ P_i^2 - P_i } \leq \ep, \quad
		\norm{P_i^\ast - P_i} \leq \ep,
		 \quad \norm{ P_i P_j } \leq \ep,  \text{ and } \quad \norm{\sum_{k \in [n]}P_k - \1}\leq \ep
	\end{align*}
	for all $i, j \in [n]$, $i \neq j$ and some $c \in \R$,
	then
	there is a projective measurement $\set{ \Pi_i \;|\; i \in [n] } \subset \calL(\C^d)$ 
	such that
	$\norm{ \Pi_i -P_i } \leq \Delta(c,n)\ep$ for all $i \in [n]$.
\end{lemma}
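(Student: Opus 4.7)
The plan is to round the given collection $\{P_i\}$ to an exact projective measurement in two stages: first replace each $P_i$ by a bona fide projection $E_i$ close to it, and then iteratively orthogonalize the $\{E_i\}$ into an exact projective measurement.

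For the first stage, I would pass to the self-adjoint part $Q_i := (P_i + P_i^*)/2$, so that $\|Q_i - P_i\| \leq \ep/2$. Expanding $Q_i^2 - Q_i$ and $Q_iQ_j$ in terms of the $P_j$'s and using $\|AB\| \leq \|A\|_{op}\|B\|$ together with the lemma's hypotheses yields $\|Q_i^2 - Q_i\|$, $\|Q_iQ_j\|$ (for $i\neq j$), and $\|\sum_i Q_i - \1\|$ all bounded by a $c$-dependent multiple of $\ep$. Now let $E_i$ be the spectral projection of the self-adjoint operator $Q_i$ onto $(1/2,\infty)$. The pointwise inequality $\bigl|[\lambda>1/2] - \lambda\bigr| \leq 2|\lambda^2 - \lambda|$, valid for every real $\lambda$, transfers via the functional calculus in the eigenbasis of $Q_i$ to the operator bound $\|E_i - Q_i\| \leq 2\|Q_i^2 - Q_i\|$. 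Consequently each $E_i$ is an exact projection with $\|E_i - P_i\|$, $\|E_iE_j\|$ (for $i\neq j$), and $\|\sum_i E_i - \1\|$ all of order $O(c\ep)$.

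For the second stage, I would induct on $n$. The base case $n=1$ is immediate with $\Pi_1 = \1$. For $n \geq 2$, the inequality $|\Tr M| \leq d\|M\|$ applied to $\sum_i E_i - \1$ gives $\sum_i \rank E_i = d + O(dc\ep)$, so for $\ep$ below an absolute threshold a pigeonhole argument produces some $E_{i^*}$ with $\rank E_{i^*} \leq 2d/n$. Relabel so that $i^*=1$ and set $\Pi_1 := E_1$ exactly. On the complement $\calH' := (\1 - \Pi_1)\calH$ of dimension $d' \geq d(1-2/n)$, form the compressions $E_i' := (\1-\Pi_1)E_i(\1-\Pi_1)$ for $i \geq 2$; using $\|E_1 E_i\| = O(c\ep)$ one checks that $\|E_i - E_i'\| = O(c\ep)$ and that $\{E_i'\}_{i\geq 2}$ satisfies the lemma's hypotheses on $\calH'$ with operator-norm bound $1$. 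The inductive hypothesis then produces exact orthogonal projections $\Pi_2,\ldots,\Pi_n$ on $\calH'$ close to the $E_i'$; extended by zero on $\Pi_1\calH$, together with $\Pi_1$ they give the desired projective measurement on $\calH$ with $\|\Pi_i - P_i\| \leq \Delta(c,n)\ep$.

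The hard part is the second stage. The main subtleties are (i) checking that the compressed family $\{E_i'\}$ actually satisfies the lemma's hypotheses on $\calH'$, and (ii) the change of normalization: the normalized Hilbert-Schmidt norm on $\calH'$ differs from that on $\calH$ by a factor $\sqrt{d/d'}$ for operators supported on $\calH'$. Fortunately, this rescaling enters symmetrically on both the hypothesis and conclusion sides of the inductive application, so it cancels when one translates the bound back to $\|\cdot\|_{\calH}$; nevertheless, to make the argument uniform in $d$ it is convenient to ensure $\sqrt{d/d'}$ stays bounded independently of $d$, which is exactly what the pigeonhole choice of a small-rank $\Pi_1$ achieves ($\sqrt{d/d'} \leq \sqrt{n/(n-2)}$). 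Tracking the $O(c)$ losses from Hermitizing, spectral projection, and compression over the $n-1$ recursive levels then delivers a $\Delta(c,n)$ of the form $O(c)^n$, nondecreasing in both arguments as required.
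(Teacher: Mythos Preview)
Your first stage—Hermitize to $Q_i=(P_i+P_i^*)/2$, then take the spectral projection $E_i=\chi_{(1/2,\infty)}(Q_i)$ via the pointwise bound $|\chi_{(1/2,\infty)}(\lambda)-\lambda|\leq 2|\lambda^2-\lambda|$—is exactly what the paper does. For the second stage the paper simply cites Lemma~3.5 of \cite{kim2018} as a black box: that lemma handles the case of positive operators with $c=1$ (which your projections $E_i$ satisfy) and outputs a function $\Delta_{pos}$ satisfying the recursion $\Delta_{pos}(n{+}1)=(40n{+}3)\Delta_{pos}(n)$. Your inductive orthogonalization is therefore a self-contained replacement for that citation, and the sketch is sound. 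One minor remark: since, as you yourself observe, the rescaling factor $\sqrt{d/d'}$ cancels between the hypothesis and conclusion sides of the (linear-in-$\epsilon$) inductive bound, the pigeonhole selection of a small-rank $E_{i^*}$ is not actually needed—you may set $\Pi_1=E_1$ regardless of rank, treat the degenerate case $d'=0$ directly (there $E_1=\1$, so $\|E_i\|=\|E_1E_i\|=O(c\epsilon)$ for $i\geq 2$ and $\Pi_i=0$ works), and dispense with the rank estimate; this also avoids the $n=2$ edge case where your bound $\rank E_{i^*}\leq 2d/n$ is vacuous.
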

\begin{proof}
When $c =1$ and $P_i$ is positive for all $i \in [n]$, 
this is shown in Lemma 3.5 of \cite{kim2018}, with function $\Delta(1,n) = \Delta_{pos} (n)$, where $\Delta_{pos}(n)$ is defined recursively
by $\Delta_{pos}(n+1) = (40n +3 )\Delta_{pos} (n)$ and $\Delta_{pos}(1) = 2\sqrt{2}$. 
To reduce to this case, suppose that $\set{ P_i \mid i \in [n]  }$ satisfy the conditions of the lemma.
If $\set{ Q_i \mid i \in [n]} \subseteq \calL(\C^d)$ are self-adjoint operators such that 
$\norm{Q_i}_{op} \leq c'$ and $\norm{P_i - Q_i} \leq \delta$ for all $i \in [n]$, then
\begin{equation*}
\begin{aligned}
	\norm{Q_i^2 - Q_i} &\leq \norm{Q_i(Q_i - P_i)} + \norm{ (Q_i - P_i)P_i} + \norm{Q_i - P_i} + \norm{P_i^2 - P_i} \\
		&\leq \norm{Q_i}_{op} \norm{Q_i - P_i} +  \norm{P_i}_{op} \norm{Q_i - P_i}+ \norm{Q_i - P_i} + \norm{P_i^2 - P_i} \\
		&\leq (c+c' + 1) \delta + \ep.
\end{aligned}
\end{equation*}
Similarly, $\norm{Q_i Q_j} \leq (c+c')\delta + \ep$ and 
	$\norm{ \sum_{i \in [n]}Q_i - \1 } \leq n\delta + \ep$.

If we take $Q_i = (P_i + P_i^*)/2$, then $Q_i$ is self-adjoint, $\norm{Q_i}_{op} \leq c$ and $\norm{Q_i - P_i}\leq \ep/2$.
Let $\chi_{[1/2, \infty)}$ be the indicator function of the interval $[1/2, \infty)$.
Since $\abs{ \chi_{[1/2, \infty)}(t) -t } \leq 2\abs{t^2 -t}$ for all $t \in \R$,
\begin{align*}
	\norm{ \chi_{[1/2, \infty)} (Q_i) - Q_i } \leq2 \norm{Q_i^2 - Q_i}  \leq ( 2c + 3) \ep.
\end{align*}
Hence the self-adjoint projections $Q_i' = \chi_{[1/2, \infty)} (Q_i)$ satisfy the conditions
\begin{align*}
	&\norm{Q_i' Q_j'} \leq  (c+1)(2c+3) \ep + \norm{Q_iQ_j} \leq (2c^2 + 7c + 4) \ep \text{ and }\\
	&\norm{\sum_{i\in [n]}Q_i' - \1} \leq (2c+3)n\ep+ \norm{\sum_{i\in[n]} Q_i -\1} \leq  (2c+7/2)n \ep + \ep.
\end{align*}
Applying Lemma 3.5 of \cite{kim2018} to $\set{Q_i' \mid i \in [n]}$ yields $\set{ \Pi_i \mid i \in [n]}$
such that 
\begin{align*}
	\norm{\Pi_i - Q_i'} \leq \Delta_{pos}(n) (2c^2 + 7c + 5)n \ep.
\end{align*}
Since 
\begin{align*}
\norm{\Pi_i - P_i} &\leq \norm{\Pi_i - Q_i'} + \norm{Q_i' - Q_i} + \norm{Q_i - P_i} \\
 &\leq 
\Delta_{pos}(n) (2c^2 + 7c + 5)n\ep + (2c+3)\ep + 1/2\ep,
\end{align*}
the lemma is true with $\Delta(c,n) = \Delta_{pos}(n) (2c^2 + 7c + 5)n + 2c + 4$.
\end{proof}
There are variants of \cref{lm:round_prj} that reduce the dependence on $n$ (see, e.g., \cite{dlS21}).
In this paper, $c$ and $n$ are fixed, so $\Delta(c,n)$ is a constant.

By the definition of the normalized Hilbert-Schmidt norm, the set of elements
of $G$ that are trivial in finite-dimensional approximate representations forms
a normal subgroup of $G$, denoted by $N^{fa}$.  For a group $G$, we define 
\begin{align*}
	G^{fa} := G / N^{fa}.
\end{align*}
If $\phi : G \to H$ is a homomorphism between finitely-presented groups and
$x \in G$ is trivial in approximate representations of $G$, then $\phi(x)$
is trivial in approximate representations of $H$, so there is an induced
homomorphism $G^{fa} \to H^{fa}$.
\begin{definition}[Definition $14$ of \cite{slofstra2017}]
    For finitely-presented groups $G$ and $H$,
    a homomorphism $\phi:G \rightarrow H$ is an $\pmb{fa}$\textbf{-embedding} if the induced map:
    $G^{fa} \rightarrow H^{fa}$ is injective.
\end{definition}
In other words $\phi : G \to H$ is an fa-embedding if whenever $x$ is 
non-trivial in approximate representations of $G$, then $\phi(x)$ is 
non-trivial in approximate representations of $H$. 

A finitely-presented group $G$ is said to be \emph{hyperlinear} if every
non-trivial element is non-trivial in approximate representations. Although
we've defined hyperlinearity only for finitely-presented groups, whether
a group is hyperlinear is independent of the presentation. To show that groups
are hyperlinear, we use the stronger properties of solvability, amenability,
and soficity. Recall that a group $G$ is \emph{solvable} if it has subgroups
$G_0 =\set{e}$, $G_1, \ldots, G_{k-1}$ and $G_k = G$ such that $G_{j-1}$ is
normal in $G_j$ and $G_j / G_{j-1}$ is an abelian group, for $1 \leq j \leq k$.
For the purposes of our paper, the definitions of amenable and sofic groups are
irrelevant; we just need the following well-known  properties of these classes
of groups (see \cite[Proposition $2.4.1$]{capraro2015}):
\begin{enumerate}
	\item Solvable groups are amenable, amenable groups are sofic, and sofic groups
        are hyperlinear. 
    \item If $H$ is an amenable subgroup of a sofic group $G$, and $\alpha: H
        \to H$ is an injective homomorphism, then the HNN-extension of $G$ by $\alpha$
        is sofic.
    \item If $H_1$ and $H_2$ are amenable subgroups of sofic groups $G_1$ and
        $G_2$, and $\alpha: H_1 \to H_2$ is an isomorphism, then the free
        product of $G_1$ and $G_2$ with amalgamation, $G_1 \ast_\alpha G_2$, is
        sofic.
    \item If $N$ is a normal subgroup of $G$ such that $N$ is sofic and $G/N$
        is amenable, then $G$ is sofic. 
\end{enumerate}

To these properties we can add:
\begin{lemma}\label{lem:ZnHNN2}
    If $H$ is an amenable subgroup of a sofic group $G$, and $\phi: H \to H$ is
    an isomorphism of order $n$, then the $\Z_n$-HNN-extension of $G$ by $\phi$
    is sofic. 
\end{lemma}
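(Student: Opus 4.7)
The plan is to reduce this to the closure properties of soficity already listed, by reusing the structural isomorphism established in \cref{lem:ZnHNN}. That lemma shows that the $\Z_n$-HNN extension $\hat{G}$ is isomorphic to
\begin{equation*}
    K = (G^{*n}/N) \rtimes_{\tilde{\psi}} \Z_n,
\end{equation*}
where $G^{*n}/N$ is the iterated amalgamated free product of $n$ copies of $G$ along $H$ (with the amalgamation on consecutive factors twisted by $\phi$), and $\tilde{\psi}$ is the cyclic shift. So the proof breaks into two pieces: (i) showing $G^{*n}/N$ is sofic, and (ii) combining this with an amenable quotient $\Z_n$ to conclude $K$ is sofic.

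For (i) I would induct on $n$. For $n=1$, $G^{*n}/N = G$ is sofic by hypothesis. For the inductive step, let $K_k$ denote the amalgamated free product of $k$ copies of $G$ along $H$. By induction, $K_k$ is sofic. The last copy of $G$ inside $K_k$ contains an isomorphic copy of $H$, and by the standard amalgamated-free-product embedding, $H$ sits as a subgroup of $K_k$, still amenable since amenability is closed under taking subgroups. Now $K_{k+1} = K_k *_\phi G$ is a free product with amalgamation of two sofic groups along a common amenable subgroup, so by property~3 it is sofic. Iterating $n-1$ times yields that $G^{*n}/N$ is sofic.

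For (ii), the subgroup $G^{*n}/N$ is normal in $K$ by construction of the semidirect product, and the quotient $K/(G^{*n}/N) \iso \Z_n$ is a finite abelian group, hence amenable. Property~4 (soficity-by-amenable-quotient) then implies $K$ is sofic. Composing with the isomorphism $\hat{G} \iso K$ from \cref{lem:ZnHNN} gives that $\hat{G}$ itself is sofic.

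The only subtlety I anticipate is making sure that at each inductive step the relevant copy of $H$ inside $K_k$ is indeed amenable and sits as a subgroup of $K_k$ (not merely as a quotient image); this is handled by the standard fact that in $G_1 *_\alpha G_2$, both $G_1$ and $G_2$ embed faithfully, so $H \subseteq G_2 \hookrightarrow K_{k+1}$ remains amenable. No new technology is needed beyond the four closure properties listed and the isomorphism already proven in \cref{lem:ZnHNN}.
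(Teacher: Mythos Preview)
Your proof is correct and takes essentially the same approach as the paper: both use the isomorphism $\hat{G} \iso K = (G^{*n}/N) \rtimes_{\tilde{\psi}} \Z_n$ from \cref{lem:ZnHNN}, show $G^{*n}/N$ is sofic as an iterated amalgamated free product over the amenable subgroup $H$ (property~3), and then invoke property~4 with the amenable quotient $\Z_n$. The paper's proof is terser and actually writes ``hyperlinear'' where it needs ``sofic'' (a slip, since property~4 requires soficity of the normal subgroup); your explicit induction fills in that detail correctly.
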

\begin{proof}
    We continue with the notation from the proof of Lemma \ref{lem:ZnHNN}.  As
    $G^{*n} / N$ is an iterated almagamated free product over the amenable
    group $H$, $G^{*n} / N$ is hyperlinear. Since $K / (G^{*n} / N) \iso \Z_n$
    and $\Z_n$ is amenable, $K$ is sofic. 
\end{proof}

\section{Quantum correlations}
\label{sec:quant_cor}

In this section, we now introduce our main object of study. Consider a scenario
with two parties or players, Alice and Bob, and a referee. The referee chooses
questions to send to Alice and Bob from finite sets $\calX$ and $\calY$ respectively,
and they return answers from finite sets $\calA$ and $\calB$. As mentioned in the
introduction, this can also be thought of as a scenario in which Alice and Bob
perform measurements labelled by the elements of $\calX$ and $\calY$, and receive
outcomes from $\calA$ and $\calB$ respectively. Alice and Bob's behaviour in this
scenario can be described by the function
\begin{equation*}
    P : \calA \times \calB \times \calX \times \calY \to \R_{\geq 0} : 
        (k,\ell,i,j) \mapsto \pr{k,\ell}{i,j},
\end{equation*} 
where $\pr{k,\ell}{i,j}$ is the probability of answers $(k,\ell) \in \calA \times \calB$
with questions $(i,j) \in \calX \times \calY$. We call a tuple $(\calX,\calY, \calA, \calB)$ of finite
sets a \emph{nonlocal scenario}, and a function $P : \calA \times \calB \times \calX \times
\calY \to \R_{\geq 0}$ such that
\begin{equation*}
    \sum_{k \in \calA,\ell \in \calB} \pr{ k,\ell}{i,j} = 1
\end{equation*}
 for all $i \in \calX, j \in \calY$ a \emph{bipartite correlation} for the
scenario $(\calX,\calY, \calA, \calB)$.

With quantum correlations, we want to capture what Alice and Bob can do in a
nonlocal scenario when they cannot communicate. Even though Alice and Bob
cannot communicate, the rules of quantum mechanics do allow them to share
entanglement. We can visualize this scenario as in Figure \ref{fig}.
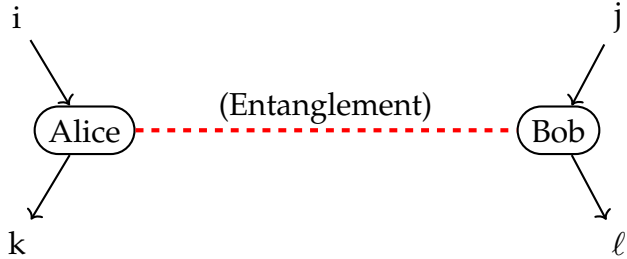
\begin{figure}[H]
\center
        \begin{tikzpicture}[thick]
        \tikzstyle{operator} = [draw,rounded rectangle,fill=white,minimum size=1.5em] 
        \tikzstyle{phase} = [fill,shape=circle,minimum size=5pt,inner sep=0pt]
        \tikzstyle{surround} = [fill=blue!10,thick,draw=black,rounded corners=2mm]
        %
    
        \node at (0,0) (a) {k};
        \node at (0,3) (x) {i};
        
        \node at (8,3) (y) {j};
        \node at (8,0) (b) {$\ell$};
        \node[operator] (qa) at (0.9,1.5){Alice} edge [<-] (x);
        \node[operator] (qb) at (7.2,1.5){Bob} edge [<-] (y);
        \node[text width=2cm] at (3.7,1.8) {(Entanglement)};
        \draw[->] (qa) -- (a);
        \draw[->] (qb) -- (b);
        \draw[red,dashed,ultra thick] (qa) -- (qb);
        \end{tikzpicture}
	\caption{A nonlocal test between Alice and Bob}
	\label{fig}
\end{figure}

Recall that a vector state in a Hilbert space $\calH$ is a unit vector, and a
projective measurement (with $m$ outcomes) is a collection $\{P^{(i)} \mid i
\in [m]\}$ of orthogonal
projections on $\calH$ 
such that
\begin{equation*}
    \sum_{i \in [m]} P^{(i)} = \1.
\end{equation*}
There are two ways in quantum mechanics to handle the restriction that Alice
and Bob cannot communicate. In the first, we require that their joint Hilbert
space be a tensor product of finite-dimensional Hilbert spaces:
\begin{definition}
    \label{def:qs_strat}
    A bipartite correlation $P$ for the scenario $(\calX,\calY, \calA, \calB)$ is a
    \textbf{quantum correlation} if there are
    \begin{enumerate}[(a)]
        \item finite-dimensional Hilbert spaces $\calH_A$ and $\calH_B$,
        \item a vector state $\ket{\psi} \in \calH_A \otimes \calH_B$,
        \item a collection of projective measurements $\{M^{(k)}_i \mid k \in \calA \}$
            on $\calH_A$ for every $i \in \calX$, and
        \item a collection of projective measurements $\{N^{(\ell)}_j \mid \ell \in \calB \}$
            on $\calH_B$ for every $j \in \calY$,
    \end{enumerate}
    such that
    \begin{equation*}
        P(k,\ell|i,j) = \bra{\psi} M^{(k)}_i \otimes N^{(\ell)}_j \ket{\psi}
    \end{equation*}
    for all $i \in X$, $j \in Y$, $k \in A$, $\ell \in B$.

    The set $C_q(\calX,\calY, \calA, \calB)$ is the set of all quantum correlations for scenario
    $(\calX,\calY, \calA, \calB)$, and the set $C_{qa}(\calX, \calY, \calA, \calB)$ is the closure of
    $C_{q}(\calX,\calY, \calA, \calB)$ in $\R^{\calA \times \calB \times \calX \times \calY}$. If $n_A$, $n_B$, $m_A$,
    and $m_B$ are positive integers, and $t \in \{q,qa\}$, then we set
    $C_t(n_A,n_B,m_A,m_B) := C_t([n_A],[n_B],[m_A],[m_B])$.
\end{definition}
If $|\calX| = n_A$, $|\calY| = n_B$, $|\calA| = m_A$, and $|\calB| = m_B$, then $\R^{\calA \times \calB
\times \calX \times \calY} \iso \R^{m_A m_B n_A n_B}$, and this is used to define the
closure of $C_q(\calX,\calY,\calA,\calB)$. In fact, this isomorphism identifies $C_t(\calX,\calY,\calA,\calB)$
with $C_t(n_A,n_B,m_A,m_B)$ for $t \in \{q,qa\}$. So although it's convenient
to be able to use arbitrary labels for questions and answers (and we'll use
non-integer labels in this paper), we could just work with the sets
$C_t(n_A,n_B,m_A,m_B)$ if we wanted. 

Some additional terminology we'll use: a collection of Hilbert spaces, vector
state, and projective measurements as in parts (a)-(d) is called a
\emph{quantum strategy}. Also, although we don't study this set in this paper,
the set $C_{qs}(\calX,\calY,\calA,\calB)$ of \emph{quantum-spatial correlations} is defined
similarly to the set of quantum correlations, but the Hilbert spaces $\calH_A$
and $\calH_B$ in the strategy are allowed to be infinite-dimensional. The
closure of $C_{qs}$ is also equal to $C_{qa}$ \cite{werner08}. 

Moving on, a more general way to handle the restriction that Alice and Bob
cannot communicate is to drop the requirement that Alice and Bob's projective
measurements act on different Hilbert spaces, and instead just require that
their projective measurements commute:
\begin{definition}
    \label{def:qc_strat}
    A bipartite correlation $P$ for a scenario $(\calX,\calY,\calA,\calB)$
    \textbf{commuting-operator correlation} if there is 
    \begin{enumerate}[(a)]
        \item a Hilbert space $\calH$,
        \item a vector state $\ket{\psi} \in \calH$,
        \item a collection of projective measurements $\{M^{(k)}_i \mid k \in \calA \}$
            on $\calH$ for every $i \in \calX$, and
        \item a collection of projective measurements $\{N^{(\ell)}_j \mid \ell
            \in \calB \}$ on $\calH$ for every $j \in \calY$,
    \end{enumerate}
    such that
    \begin{equation*}
        M_{i}^{(k)}N_{j}^{(\ell)} = N_{j}^{(\ell)} M_{i}^{(k)}
    \end{equation*}
    and 
    \begin{equation*}
        P(k,\ell|i,j) = \bra{\psi} M^{(k)}_i \cdot N^{(\ell)}_j \ket{\psi}
    \end{equation*}
    for all $i \in \calX$, $j \in \calY$, $k \in \calA$, $\ell \in \calB$.

    The set $C_{qc}(\calX,\calY,\calA,\calB)$ is the set of all commuting-operator correlations
    for the scenario $(\calX,\calY,\calA,\calB)$, and if $n_A$, $n_B$, $m_A$, and $m_B$ are
    positive integers, then $C_{qc}(n_A,n_B,m_A,m_B) := C_{qc}([n_A],[n_B],[m_A],[m_B])$.
\end{definition}
As with quantum correlations, we refer to a Hilbert space, vector state, and
projective measurements as in (a)-(d) as a \emph{commuting-operator strategy}.
Note that the Hilbert space $\calH$ in a commuting-operator strategy does not
have to be finite-dimensional.

We define one more subtype of correlation that we'll use:
\begin{definition}
\label{def:synch_cor}
A bipartite correlation $P$ for scenario $(\calX,\calY,\calA,\calB)$
is \textbf{synchronous} if $\calX=\calY$, $\calA=\calB$, and
\begin{align*}
    \sum_{ k \in \calA } \pr{k,k}{i,i} = 1
\end{align*}
for all $i \in \calX$.
\end{definition}
Equivalently, a correlation with $\calX=\calY$ and $\calA=\calB$ is synchronous if
$P(k,\ell|i,i) = 0$ for all $k \neq \ell$ and $i$, or in other words
if Alice and Bob always return the same answer when given the same question.
The following fact about synchronous correlations is well-known: 
\begin{proposition}[Theorem 5.5(i) in \cite{paulsen2016}]\label{prop:eqv_test}
    Let $P$ be a synchronous correlation for $(\calX,\calX,\calA,\calA)$, and let $\calH$,
    $\ket{\psi}$, $\{M_i^{(k)} \mid k \in \calA\}$, $\{N_j^{(\ell)} \mid \ell \in \calA\}$
    be a commuting operator strategy for $P$. Then
    \begin{equation*}
        M_i^{(k)} \ket{\psi} = N_i^{(k)} \ket{\psi}
    \end{equation*}
    for all $i \in \calX$, $k \in \calA$.
\end{proposition}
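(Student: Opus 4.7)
The plan is to expand the squared norm $\|M_i^{(k)}\ket{\psi} - N_i^{(k)}\ket{\psi}\|^2$ and show it vanishes for every $i \in \calX$, $k \in \calA$, which will give the claim immediately.

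First, I would use the synchronous condition to observe that $P(k,\ell|i,i) = 0$ whenever $k \neq \ell$. Indeed, since $\sum_{k \in \calA} P(k,k|i,i) = 1$ and each $P(k',\ell|i,i) \geq 0$ with $\sum_{k',\ell} P(k',\ell|i,i) = 1$, all off-diagonal terms $P(k,\ell|i,i)$ with $k \neq \ell$ must be zero. In operator language, this says $\bra{\psi} M_i^{(k)} N_i^{(\ell)} \ket{\psi} = 0$ for $k \neq \ell$.

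Next I would compute the marginals at the same question $i$. Using $\sum_{\ell} N_i^{(\ell)} = \1$ and the previous observation,
\begin{equation*}
    \bra{\psi} M_i^{(k)} \ket{\psi} = \sum_{\ell \in \calA} \bra{\psi} M_i^{(k)} N_i^{(\ell)} \ket{\psi} = P(k,k|i,i),
\end{equation*}
and symmetrically $\bra{\psi} N_i^{(k)} \ket{\psi} = P(k,k|i,i)$, using $\sum_{k'} M_i^{(k')} = \1$.

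Finally, using that $M_i^{(k)}$ and $N_i^{(k)}$ are self-adjoint projections that commute, I would expand
\begin{equation*}
    \bigl\| M_i^{(k)} \ket{\psi} - N_i^{(k)} \ket{\psi} \bigr\|^2 = \bra{\psi} M_i^{(k)} \ket{\psi} - 2 \bra{\psi} M_i^{(k)} N_i^{(k)} \ket{\psi} + \bra{\psi} N_i^{(k)} \ket{\psi},
\end{equation*}
where I used $(M_i^{(k)})^2 = M_i^{(k)}$ and $(N_i^{(k)})^2 = N_i^{(k)}$. The cross term equals $P(k,k|i,i)$ by the definition of the correlation, and both marginal terms equal $P(k,k|i,i)$ by the previous step, so the right-hand side is $0$. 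Thus $M_i^{(k)} \ket{\psi} = N_i^{(k)} \ket{\psi}$, as required. There is no real obstacle here; the only subtlety is remembering that in a commuting-operator strategy the operators $M_i^{(k)}$ and $N_i^{(k)}$ are assumed to commute, which is exactly what lets us identify the cross term with the joint probability $P(k,k|i,i)$.
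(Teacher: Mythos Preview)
Your proof is correct and is essentially the standard argument. The paper does not spell out a proof either: it simply observes that the proposition is an immediate consequence of \cref{lem:eqv_test} (which states exactly that if two commuting projective measurements satisfy $\bra{\psi}M_j N_k\ket{\psi}=0$ for $j\neq k$ then $M_j\ket{\psi}=N_j\ket{\psi}$), and attributes that lemma to the proof of \cite[Theorem~5.5(i)]{paulsen2016}. Your norm-squared computation is precisely the content of that lemma, so the two approaches coincide.

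One minor remark: you do not actually need commutation to handle the cross terms. Since $M_i^{(k)}$ and $N_i^{(k)}$ are self-adjoint, $\bra{\psi}N_i^{(k)}M_i^{(k)}\ket{\psi}$ is the complex conjugate of $\bra{\psi}M_i^{(k)}N_i^{(k)}\ket{\psi}=P(k,k|i,i)$, which is already real. So the computation goes through using only self-adjointness and the projection property. This is harmless here since commutation is part of the hypotheses anyway.
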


\cref{prop:eqv_test} is an immediate consequence of the following lemma, which 
is contained in the proof of \cite[Theorem 5.5(i)]{paulsen2016}.
\begin{lemma}
\label{lem:eqv_test}
    Let $\ket{\psi} \in \calH$ be a quantum state, and $\set{M_j \mid j\in [n] }$
    and $\set{N_j \mid j\in [n] }$ be two projective measurements on $\calH$ for
    some $n \geq 2$, such that $M_j N_k = N_k M_j$ for all $j,k \in [n]$. 
    If $\bra{\psi}M_j  N_k \ket{\psi} = 0$ for all $j \neq k \in [n]$, then
    \begin{align*}
        &M_j\ket{\psi} = N_j\ket{\psi}
    \end{align*}
    for each $j \in [n]$.
\end{lemma}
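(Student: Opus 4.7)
The plan is to show that $\|M_j\ket{\psi} - N_j\ket{\psi}\|^2 = 0$ by expanding the square and exploiting the synchronous hypothesis together with the completeness of both measurements.

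First I would compute
\begin{equation*}
   \|M_j\ket{\psi} - N_j\ket{\psi}\|^2 = \bra{\psi}(M_j - N_j)^2\ket{\psi} = \bra{\psi}M_j\ket{\psi} + \bra{\psi}N_j\ket{\psi} - 2\bra{\psi}M_j N_j\ket{\psi},
\end{equation*}
using that $M_j$ and $N_j$ are projections and that $M_j N_j = N_j M_j$ (so the two cross terms are equal and real). So it suffices to show that both $\bra{\psi}M_j\ket{\psi}$ and $\bra{\psi}N_j\ket{\psi}$ equal $\bra{\psi}M_j N_j\ket{\psi}$.

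Next I would use the completeness relations $\sum_k N_k = \1$ and $\sum_k M_k = \1$. Inserting $\1 = \sum_k N_k$ between $M_j$ and $\ket{\psi}$ and invoking the hypothesis $\bra{\psi}M_j N_k\ket{\psi} = 0$ for $k \neq j$ gives
\begin{equation*}
    \bra{\psi} M_j \ket{\psi} = \sum_{k \in [n]} \bra{\psi} M_j N_k \ket{\psi} = \bra{\psi} M_j N_j \ket{\psi}.
\end{equation*}
An analogous calculation with $\1 = \sum_k M_k$ inserted before $N_j$, together with $\bra{\psi} M_k N_j \ket{\psi} = \bra{\psi} N_j M_k \ket{\psi}$ (commutativity) and the same vanishing hypothesis, yields $\bra{\psi}N_j\ket{\psi} = \bra{\psi} M_j N_j \ket{\psi}$. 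Substituting back into the expansion gives $\|M_j\ket{\psi} - N_j\ket{\psi}\|^2 = 0$, hence $M_j\ket{\psi} = N_j\ket{\psi}$.

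There is no real obstacle here; the only subtlety is making sure both completeness relations are used (one to collapse the sum on the right of $M_j$, one on the left of $N_j$) and that commutativity is applied to turn $M_k N_j$ into $N_j M_k$ so the synchronous hypothesis can be invoked in both calculations. Everything else is a direct expansion.
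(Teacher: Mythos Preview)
Your proof is correct and is exactly the standard argument. The paper does not give its own proof of this lemma but simply cites it as being contained in the proof of \cite[Theorem 5.5(i)]{paulsen2016}; your expansion of $\|M_j\ket{\psi}-N_j\ket{\psi}\|^2$ together with the two completeness relations is precisely that argument. (One minor remark: in your Step~3 you do not actually need commutativity to invoke the hypothesis, since $\bra{\psi}M_kN_j\ket{\psi}=0$ for $k\neq j$ is already part of the assumption ``for all $j\neq k$''; commutativity is only needed in Step~1 to combine the two cross terms.)
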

The next two conclusions of \cref{lem:eqv_test} will help us work with correlations.
\begin{lemma}
\label{prop:comm_test}
Let $\ket{\psi} \in \calH$ be a quantum state,  and let
$\set{M_0^{(k)} \mid k \in [m_A]}$ and $\set{M_1^{(k)} \mid k \in [m_A]}$ be two projective measurements on $\calH$,
both of which commute with the projective measurement $\set{N^{(l,l')} \mid l,l' \in [m_A]}$ on $\calH$.
If 
\begin{align*}
	\bra{\psi} M_0^{(k)} N^{(l,l')} \ket{\psi} = \bra{\psi} M_1^{(k')} N^{(l,l')} \ket{\psi} = 0
\end{align*}
for any $k \neq l$ and $k' \neq l'$,
then
\begin{align*}
    &M_0^{(k)} M_1^{(k')} \ket{\psi} = M_1^{(k')}M_0^{(k)} \ket{\psi} 
\end{align*}
for any $k, k' \in [m_A]$.
\end{lemma}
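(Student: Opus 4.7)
The plan is to reduce the statement to \cref{lem:eqv_test} by extracting ``marginal'' projective measurements from the two-index measurement $\{N^{(l,l')}\}$. First I would define
\begin{equation*}
    P^{(k)} := \sum_{l' \in [m_A]} N^{(k,l')} \quad \text{and} \quad Q^{(k')} := \sum_{l \in [m_A]} N^{(l,k')},
\end{equation*}
both of which are projective measurements on $\calH$ because the $N^{(l,l')}$ are mutually orthogonal projections summing to $\1$. Since each $M_0^{(k)}$ and each $M_1^{(k')}$ commutes with every $N^{(l,l')}$ by hypothesis, they also commute with every $P^{(k)}$ and every $Q^{(k')}$.

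Next I would verify that the hypotheses of \cref{lem:eqv_test} hold for the pairs $(\{M_0^{(k)}\}, \{P^{(k)}\})$ and $(\{M_1^{(k')}\}, \{Q^{(k')}\})$. Summing the assumption $\bra{\psi} M_0^{(k)} N^{(l,l')} \ket{\psi} = 0$ (valid for $k \neq l$) over $l' \in [m_A]$ yields $\bra{\psi} M_0^{(k)} P^{(l)} \ket{\psi} = 0$ whenever $k \neq l$, and an analogous sum over $l$ gives $\bra{\psi} M_1^{(k')} Q^{(l')} \ket{\psi} = 0$ whenever $k' \neq l'$. Applying \cref{lem:eqv_test} then yields $M_0^{(k)} \ket{\psi} = P^{(k)} \ket{\psi}$ and $M_1^{(k')} \ket{\psi} = Q^{(k')} \ket{\psi}$ for all $k, k' \in [m_A]$.

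To conclude, I would use the commutation of $M_0^{(k)}$ with the $N^{(l,l')}$ (hence with $Q^{(k')}$) together with these identities to compute
\begin{equation*}
    M_0^{(k)} M_1^{(k')} \ket{\psi} = M_0^{(k)} Q^{(k')} \ket{\psi} = Q^{(k')} M_0^{(k)} \ket{\psi} = Q^{(k')} P^{(k)} \ket{\psi},
\end{equation*}
and symmetrically $M_1^{(k')} M_0^{(k)} \ket{\psi} = P^{(k)} Q^{(k')} \ket{\psi}$. Expanding these products via the mutual orthogonality relations $N^{(a,b)} N^{(c,d)} = \delta_{(a,b),(c,d)} N^{(a,b)}$ collapses both $P^{(k)} Q^{(k')}$ and $Q^{(k')} P^{(k)}$ to the single projection $N^{(k,k')}$, so both sides of the desired identity equal $N^{(k,k')} \ket{\psi}$. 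There is no real obstacle here; the only conceptual step is recognizing that \cref{lem:eqv_test} lets us replace the action of $M_0^{(k)}$ and $M_1^{(k')}$ on $\ket{\psi}$ by projections assembled from a single joint measurement $N$, after which the commutation on $\ket{\psi}$ is automatic.
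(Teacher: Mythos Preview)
Your proposal is correct and follows essentially the same approach as the paper: define the marginal measurements $P^{(k)} = \sum_{l'} N^{(k,l')}$ and $Q^{(k')} = \sum_l N^{(l,k')}$, apply \cref{lem:eqv_test} to each pair, and then use commutation and orthogonality of the $N^{(l,l')}$ to reduce both $M_0^{(k)} M_1^{(k')}\ket{\psi}$ and $M_1^{(k')} M_0^{(k)}\ket{\psi}$ to $N^{(k,k')}\ket{\psi}$. The paper's proof is the same argument with the marginals written out explicitly rather than named.
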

\begin{proof}
	The condition implies that the two measurement pairs
	\begin{align*}
		\set{M_0^{(k)} \mid k \in [m_A]}, \set{ \sum_{l' \in [m_A]} N^{(k,l')} \mid k \in [m_A]} 
	\end{align*}
	and
	\begin{align*}
		 \set{M_1^{(k')} \mid k' \in [m_A]}, \set{ \sum_{l \in [m_A]} N^{(l,k')} \mid k' \in [m_A]}
	\end{align*}
	both satisfy the condition of \cref{lem:eqv_test} with respect to $\ket{\psi}$.
    	We conclude that 
    	\begin{align*}
        		&M_0^{(k)} \ket{\psi} = \sum_{l' \in [m_A]} N^{(k,l')} \ket{\psi} \text{ and } \\
		&M_1^{(k')} \ket{\psi} = \sum_{l \in [m_A]} N^{(l, k')} \ket{\psi}
        \end{align*}
        for each $k, k' \in [m_A]$.
    	Then,
    	\begin{align*}
        		M_0^{(k)} M_1^{(k')} \ket{\psi} &= M_0^{(k)} \sum_{l \in [m_A]} N^{(l, k')} \ket{\psi} 
			= \sum_{l \in [m_A]} N^{(l, k')} M_0^{(k)} \ket{\psi} \\
                         &= \sum_{l \in [m_A]} N^{(l, k')} \sum_{l' \in [m_A]} N^{(k, l')} \ket{\psi} 
                            = N^{(k,k')} \ket{\psi} 
                            = \sum_{l' \in [m_A]} N^{(l', k)} \sum_{l \in [m_A]} N^{(l, k')}  \ket{\psi} \\
                            &= M_1^{(k')} \sum_{l' \in [m_A]} N^{(l', k)} \ket{\psi}
                            = M_1^{(k')} M_0^{(k)}\ket{\psi},
    \end{align*}
    for each $k, k' \in [m_A]$.
\end{proof}
\begin{lemma}[Substitution Lemma]
    \label{lm:sub}
    Let $\ket{\psi} \in \calH$ be a quantum state.
    Suppose there exist sequences of unitaries 
    $\set{V}$, $\set{V_i \;|\; i \in [k]}$ and $\set{ M_j  \;|\; j \in [n]}$ on $\calH$ commuting with
    another sequence of unitaries
    $\set{ N_j \;|\; j \in [n]}$ on $\calH$, such that
    \begin{align*}
        &M_j \ket{\psi} = N_j \ket{\psi}
    \end{align*}
    for each $j \in [n]$, and 
    \begin{align*}
        &V \ket{\psi} = \prod_{i \in [k]} V_i \ket{\psi}.
    \end{align*}
    Then
    \begin{align*}
        V\prod_{j \in [n]} M_j \ket{\psi} 
        = \left(\prod_{i \in [k]} V_i\right) 
        \left(\prod_{j \in [n]} M_j \right)\ket{\psi}.
    \end{align*}
\end{lemma}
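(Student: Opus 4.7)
The plan is to exploit the commutation of each $N_i$ with all of $V$, the $V_j$'s, and the $M_j$'s in order to shuttle operators past each other, using the two hypotheses $M_i\ket{\psi} = N_i\ket{\psi}$ and $V\ket{\psi} = \prod_{j\in[k]} V_j \ket{\psi}$ only when everything relevant is pressed up against $\ket{\psi}$.

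The first step is to establish the auxiliary identity
\begin{equation*}
    \prod_{i\in[n]} M_i \ket{\psi} \;=\; N_{n-1}N_{n-2}\cdots N_0 \ket{\psi}
\end{equation*}
by induction on $n$. In the inductive step I would substitute $M_{n-1}\ket{\psi} = N_{n-1}\ket{\psi}$ at the rightmost position, then slide $N_{n-1}$ leftward past each of the remaining $M_j$'s (allowed since $N_{n-1}$ commutes with every $M_j$), and finally apply the induction hypothesis to $M_0 M_1 \cdots M_{n-2}\ket{\psi}$.

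Next, since $V$ commutes with every $N_i$, this identity gives $V\prod_{i\in[n]} M_i\ket{\psi} = N_{n-1}\cdots N_0 \cdot V\ket{\psi}$. Applying the hypothesis $V\ket{\psi} = \prod_{j\in[k]} V_j\ket{\psi}$ and then commuting each $V_j$ past each $N_i$ (again by hypothesis), the right-hand side becomes $\bigl(\prod_{j\in[k]} V_j\bigr)\cdot N_{n-1}\cdots N_0\ket{\psi}$. Invoking the auxiliary identity in reverse turns $N_{n-1}\cdots N_0\ket{\psi}$ back into $\prod_{i\in[n]} M_i\ket{\psi}$, giving the desired equality.

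The main task here is really just bookkeeping: at each step one must verify that the operator being moved truly commutes with everything it needs to pass through, and that whenever a hypothesis of the form \emph{$X\ket{\psi} = Y\ket{\psi}$} is invoked, the operator $X$ is acting directly on $\ket{\psi}$ with no intervening operators in between. There is no conceptual obstruction beyond this ``substitute at the state, then move out of the way'' pattern, which is exactly why the lemma is useful as a utility for manipulating commuting-operator strategies later in the paper.
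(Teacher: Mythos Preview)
Your proof is correct and uses essentially the same mechanism as the paper: substitute $M_i\ket{\psi}\to N_i\ket{\psi}$ at the state, commute $N_i$ outward, apply the remaining hypothesis, then undo. The only organizational difference is that the paper runs a single induction on $n$ directly on the full statement (with base case $n=0$ being exactly $V\ket{\psi}=\prod_j V_j\ket{\psi}$), whereas you factor out the auxiliary identity $\prod_i M_i\ket{\psi}=N_{n-1}\cdots N_0\ket{\psi}$ first; this is a cosmetic distinction.
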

\begin{proof}
    We prove this lemma by induction on $n$.
    The $n = 0$ case follows the condition that $V \ket{\psi} = \prod_{i \in [k]} V_i \ket{\psi}$.
    Assume the conclusion holds for $n = m$ and consider the case $n = m+1$. Then
    \begin{align*}
        V\prod_{j \in [m+1]} M_j \ket{\psi} 
        &= V \left(\prod_{j \in [m]} M_j \right) M_{m} \ket{\psi} 
        = V \left(\prod_{j \in [m]} M_j \right) N_{m} \ket{\psi} \\
        &= N_{m} V \left(\prod_{j \in [m]} M_j \right) \ket{\psi} 
        = N_{m}\left(\prod_{i \in [k]} V_i\right) 
        \left(\prod_{j \in [m]} M_j \right)\ket{\psi} \\
        &=\left(\prod_{i \in [k]} V_i\right) 
        \left(\prod_{j \in [m]} M_j \right) N_{m} \ket{\psi} 
        = \left(\prod_{i \in [k]} V_i\right) 
        \left(\prod_{j \in [m+1]} M_j \right) \ket{\psi},
    \end{align*}
   which completes the proof.
\end{proof}

When working with correlations, it sometimes simplifies arguments if we restrict to strategies with the following property:
\begin{definition}
\label{def:good_strat}
A commuting operator strategy 
\begin{align*}
\ket{\psi} \in \calH, \set{M_i^{(k)} \mid k \in \calA}, i \in \calX,
 \set{N_j^{(\ell)} \mid \ell \in \calB}, j \in \calY
 \end{align*}  is \textbf{good} if 
 \begin{enumerate}[(a)]
 \item for all $i \in \calX$ and $k \in \calA$,
if $\bra{\psi}M_i^{(k)} \ket{\psi} = 0$, then $M_i^{(k)}=0$, and
\item  for all $j \in \calY$ and $\ell \in \calB$, if $\bra{\psi} N_j^{(\ell)} \ket{\psi} = 0$,
then $ N_j^{(\ell)} = 0$.
\end{enumerate}
\end{definition}
\begin{proposition}
	\label{prop:good_strat}
	If $P \in C_{qc}(\calX, \calY, \calA, \calB)$,
	then $P$ has a good commuting-operator strategy.
\end{proposition}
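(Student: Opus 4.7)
The plan is to take any commuting-operator strategy realizing $P$ and modify each party's measurements by merging every zero-probability outcome into a chosen nonzero-probability outcome. A sum of pairwise orthogonal projections is a projection, so the projective measurement structure is preserved; each new operator is a linear combination of the originals so commutation across parties is preserved; zero-marginal outcomes contribute nothing to the joint probabilities, so $P$ is unchanged; and by construction every new operator with zero vector-state expectation is exactly the zero operator, which is precisely goodness.

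In more detail, I would start with any commuting-operator strategy $(\calH, \ket{\psi}, \{M_i^{(k)}\}, \{N_j^{(\ell)}\})$ realizing $P$. For each $i \in \calX$ set $K_i^A := \{k \in \calA : \bra{\psi} M_i^{(k)} \ket{\psi} = 0\}$, and since the marginals $\bra{\psi} M_i^{(k)} \ket{\psi}$ sum to $1$, fix some $k_0^i \in \calA \setminus K_i^A$. Define
\begin{equation*}
\tilde{M}_i^{(k)} := \begin{cases} 0 & k \in K_i^A, \\ M_i^{(k_0^i)} + \sum_{k' \in K_i^A} M_i^{(k')} & k = k_0^i, \\ M_i^{(k)} & \text{otherwise,}\end{cases}
\end{equation*}
and construct $\tilde{N}_j^{(\ell)}$ symmetrically using $K_j^B := \{\ell \in \calB : \bra{\psi} N_j^{(\ell)} \ket{\psi} = 0\}$ together with a choice $\ell_0^j \in \calB \setminus K_j^B$. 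The four checks (projective measurement, cross-party commutation, reproduction of $P$, goodness) are then routine. The key identity powering correlation preservation is that any projection $M$ with $\bra{\psi} M \ket{\psi} = 0$ satisfies $\|M\ket{\psi}\|^2 = 0$ and therefore $M\ket{\psi} = 0$; combined with the commutation of each $M_i^{(k)}$ with every $N_j^{(\ell)}$, this forces $\bra{\psi} M_i^{(k)} N_j^{(\ell)} \ket{\psi} = 0$ for every $(k, \ell, j)$ when $k \in K_i^A$, and symmetrically on the other side, so the zero-marginal terms absorbed into $\tilde{M}_i^{(k_0^i)}$ and $\tilde{N}_j^{(\ell_0^j)}$ make no contribution to the joint statistics.

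There is no serious obstacle here; the construction is a direct manipulation of the given strategy. The only point worth flagging is that Alice's modification does not alter Bob's marginals $\bra{\psi} N_j^{(\ell)} \ket{\psi}$, and vice versa, since those marginals depend only on $\ket{\psi}$ and the corresponding party's operators. Consequently the two modifications can be carried out independently (or sequentially) without interfering with the goodness criterion on either side, and no issue of ordering arises.
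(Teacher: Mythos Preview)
Your proposal is correct and follows essentially the same approach as the paper: both take an arbitrary commuting-operator strategy for $P$, identify the zero-marginal outcomes on each side, and absorb them into a nonzero-marginal outcome while replacing the originals by $0$. The paper phrases the merging one outcome at a time whereas you collect all zero-marginal outcomes into a single sum, but this is a cosmetic difference; the verification of projectivity, commutation, preservation of $P$, and goodness is the same.
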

\begin{proof}
	Suppose $(\ket{\psi}, \set{M_i^{(k)} }, \set{N_j^{(\ell)}})$ is a strategy for $P$.
	If $P(k. \ell \mid i, j) \neq 0$ then 
	\begin{align*}
	\bra{\psi} M_i^{(k)} \ket{\psi} = \sum_{\ell' \in \calB} P(k, \ell' \mid i, j ) \neq 0.
	\end{align*}
	If  $\bra{\psi} M_i^{(k)} \ket{\psi} = 0$, then 
	 $\bra{\psi} M_i^{(k)} N_j^{(\ell)} \ket{\psi} = P(k,\ell \mid i,j) = 0$ for all $j \in \calY$ and $\ell \in \calB$.
	 Since $\sum_{k' \in \calA} M_i^{(k')} = \1$, there must be some $k'$ such that $\bra{\psi} M_i^{(k')} \ket{\psi} \neq 0$.
	 If we replace $M_i^{(k')}$ with $M_i^{(k')} + M_i^{(k)}$ and $M_i^{(k)}$ with $0$, we get another commuting-operator strategy for
	 $P$, and
	 doing this for all $k \in \calA$, $i \in \calX$ with $\bra{\psi} M_i^{(k)} \ket{\psi} = 0$, and similarly for all $\ell \in \calB$, $j \in \calY$
	 with $\bra{\psi} N_j^{(\ell)} \ket{\psi} = 0$,
	  gives a good strategy for $P$.
	 
\end{proof}

\section{Quantum correlations and group theory}\label{S:embeddings}

In this section, we introduce the notion of a perfect correlation associated
with a binary linear system. We then recall the notion of a solution group
associated to a linear system, and use the solution group to show that every
linear system has a perfect correlation. Finally we recall how to embed an
arbitrary finitely presented group in a solution group.

\subsection{Solution groups and correlations}
    
\begin{definition}\label{def:perfectcorrelations}
	Let $Ax = 0$ be an $m \times n$ binary linear system, so $A$ is an $m \times
    n$ matrix over $\Z_2$, and $0 \in \Z_2^n$. Suppose that each row of $A$ has
    $\kappa$ non-zero entries. For each $i \in [m]$, let
    \begin{equation*}
        I_i = \set{ j \in [n] \mid A_{ij} = 1 } 
    \end{equation*}
    and let $\phi_i : I_i \to [\kappa]$ be the unique order-preserving bijection
    (so the smallest element of $I_i$ maps to $0$, the largest maps to $\kappa -1$,
    and so on).  
    Let $\calX_{var} := \set{ x_i \mid i \in [n]}$, and let $\calX := [m] \cup \calX_{var}$.
    Let $S := \{v \in \Z_2^{\kappa} : \sum_{i \in [\kappa]} v_i=0\}$. 
    A correlation $P$ for the scenario $(\calX,\calX,\Z_2^{\kappa},\Z_2^{\kappa})$ is
    a \textbf{perfect correlation for $Ax=0$} if $P(a,b|x,y) = 0$ whenever
    \begin{enumerate}[(1)]
        \item\label{item:X} $x \in [m]$ and $a \not\in S$ or $y \in [m]$ and $b \not\in S$;
        \item\label{item:Xvar} $x \in \calX_{var}$ and $(a_0, \ldots a_{\kappa-2}) \neq (0, 0 \ldots, 0)$, or $y \in \calX_{var}$ and $(b_0, \ldots b_{\kappa-2}) \neq (0, 0 \ldots, 0)$;
        \item\label{item:equation} $x,y \in [m]$ and $a_{\phi_x(k)} \neq b_{\phi_y(k)}$ for some
            $k \in I_x \cap I_y$;
        \item\label{item:varcheckB} $x \in [m]$, $y = x_i \in \calX_{var}$ for some $i \in I_x$, and
            $a_{\phi_x(i)} \neq b_{\kappa-1} $ in $\Z_2$;
        \item\label{item:varcheckA} $x = x_i \in \calX_{var}$ for some $i \in I_y$, $y \in [m]$, and
            $a_{\kappa-1} \neq b_{\phi_y(i)}$ in $\Z_2$; or
        \item\label{item:synch} $x = y \in \calX_{var}$, and $a_{\kappa-1} \neq b_{\kappa-1}$. 
    \end{enumerate}
\end{definition}
In perfect correlations as defined here, the questions that Alice and Bob
receive are either variables in $\calX_{var}$, or indices from $[m]$. When Alice
or Bob gets an index $x \in [m]$, condition \ref{item:X} requires them to
return an element $a \in S$. This element should be thought of as an assignment to the
$x$th equation, where variable $x_k$, $k \in I_x$, receives value $a_{\phi_x(k)}$.
When Alice or Bob gets a variable $x_i$ in $\calX_{var}$, condition \ref{item:Xvar}
forces them to return an element $a \in \Z_2^{\kappa}$ with $a_j = 0$ for $0\leq j \leq \kappa-2$ and $a_{\kappa-1} \in \{0,1\}$. This
should be thought of as an assignment to $x_i$ from $\Z_2$. The remaining conditions
state that if Alice and Bob are asked about the same variable (either as part
of an equation or directly) then their assignments to that variable must agree. 
Note that conditions \ref{item:equation} and \ref{item:synch} imply that every 
perfect correlation for $Ax=0$ is synchronous.

\Cref{def:perfectcorrelations} is stated for linear systems with a 
constant number $\kappa$ of non-zero entries in each row. This allows us
to use answer sets $\Z_2^{\kappa}$. It is possible to define perfect
correlations for systems with a varying number of entries in each row, by
either using answer sets which vary with the question, or by using larger
answer sets $\Z_2^{n}$ as in \cite{kim2018}. However, all the linear systems we
work with have a constant number of non-zero entries in each row, and making
this assumption in \cref{def:perfectcorrelations} simplifies later analysis.

The point of perfect correlations is that any strategy for a perfect correlation yields
a representation of a certain group associated with $Ax =0$.
\begin{definition}[Definition $17$ of \cite{slofstra2017}]
    \label{def:sol_grp}
    Let $Ax = 0$ be an $m \times n$ linear system over $\Z_2$,
    so $A$ is an $m \times n$ matrix with entries in $\Z_2$ and
    $0 \in \Z_2^n$.
    For $j \in [m]$, define $I_j = \set{ k \in [n] \:|\; A_{jk} = 1}$.
    The \textbf{homogeneous solution group} of $Ax = 0$ is
    \begin{align*}
        \Gamma(A) := \ip{x_0,x_1,\ldots x_{n-1}: &
        x_j^2  = e \text{ for all } j \in [n], \\
        &\prod_{k \in I_i} x_k = e \text{ for all } i \in [m],\\
        &[x_j, x_k] = e \text{ if } j,k \in I_i \text{ for some } i}.
    \end{align*}
\end{definition}

\begin{proposition}\label{prop:correlationrep}
    Let $Ax=0$ be a binary linear system with $\kappa$ non-zero entries in each row.
    Suppose $P$ is a perfect correlation for $Ax=0$, and that $\calH$, $\ket{\psi}$,
    $\{M_x^{(a)} \mid a \in \Z_2^{\kappa}\}$, $x \in \calX$, $\{N_y^{(b)} \mid 
    b \in \Z_2^{\kappa}\}$, $y \in \calX$ is a good commuting-operator strategy for
    $P$. Let $\calH_0 := \overline{\calA \cdot \ket{\psi}}$, the closure of
    $\calA \ket{\psi}$ in $\calH$, where $\calA$ is the algebra generated by
    $M_x^{(a)}$ and $N_x^{(a)}$ for $x \in \calX$, $a \in \Z_2^{\kappa}$.
    For $i \in [n]$, let
    \begin{equation*}
        M(x_i) := M_{x_i}^{(0,\ldots,0,0)} - M_{x_i}^{(0,\ldots,0,1)} \text{ and }
        N(x_i) := N_{x_i}^{(0,\ldots,0,0)} - N_{x_i}^{(0,\ldots,0,1)}.
    \end{equation*}
      Then
    \begin{enumerate}[(1)]
        \item $M(x_i)$ and $N(x_i)$ are binary observables such that $M(x_i)\ket{\psi} = N(x_i) \ket{\psi}$ for all $i \in [n]$, 
        \item there are unitary representations $\Phi_M$ and $\Phi_N$ of $\Gamma(A)$ on
            $\calH_0$ sending $x_i \in \Gamma(A)$ to $M(x_i)|_{\calH_0} $ and $N(x_i)|_{\calH_0}$ respectively, and
         \item if $\Phi_M(r) \ket{\psi} = \ket{\psi}$ for some $r \in \Gamma(A)$, then $\Phi_M(r) = \1_{\calH_0}$.
    \end{enumerate}
\end{proposition}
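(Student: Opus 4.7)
The plan is to handle items (1), (2), (3) in order, with (3) bearing the main technical weight.

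For (1): inspecting the marginals $\bra{\psi} M_{x_i}^{(a)} \ket{\psi} = \sum_b P(a,b \mid x_i,y)$ for any fixed $y$, condition (2) of \cref{def:perfectcorrelations} forces this to vanish whenever $(a_0,\ldots,a_{\kappa-2}) \neq (0,\ldots,0)$, and part (a) of \cref{def:good_strat} promotes these zero expectations to $M_{x_i}^{(a)} = 0$. Thus only $M_{x_i}^{(0,\ldots,0,0)}$ and $M_{x_i}^{(0,\ldots,0,1)}$ remain nonzero, they sum to $\1$, and their difference $M(x_i)$ is a binary observable; likewise for $N(x_i)$. Applying \cref{lem:eqv_test} to the two-outcome measurements $\{M_{x_i}^{(0,\ldots,0,v)}\}_{v \in \{0,1\}}$ and $\{N_{x_i}^{(0,\ldots,0,v)}\}_{v \in \{0,1\}}$, where condition (6) kills the cross-terms on $\ket{\psi}$, yields $M_{x_i}^{(0,\ldots,0,v)}\ket{\psi} = N_{x_i}^{(0,\ldots,0,v)}\ket{\psi}$; subtracting the two cases finishes (1).

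For (2) I verify that each defining relator of $\Gamma(A)$ applied through $\Phi_M$ fixes $\ket{\psi}$. The relation $x_i^2 = e$ is immediate. Commutativity $[x_j, x_k] = e$ for $j,k \in I_i$ follows from \cref{prop:comm_test}, applied to Alice's two-outcome measurements for $x_j$ and $x_k$ together with Bob's measurement at question $i$ coarse-grained to the two bits at positions $\phi_i(j)$ and $\phi_i(k)$; condition (5) provides exactly the cross-term vanishing that \cref{prop:comm_test} requires. For the parity relation $\prod_{k \in I_i} x_k = e$, define Bob observables $N_i^{[k]} := \sum_{b} (-1)^{b_{\phi_i(k)}} N_i^{(b)}$. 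Condition (1) with goodness gives $N_i^{(b)} = 0$ for $b \notin S$, so $\prod_{k \in I_i} N_i^{[k]} = \sum_{b \in S} N_i^{(b)} = \1$ (using $\sum_k b_{\phi_i(k)} = 0$ on $S$). Condition (5) together with \cref{lem:eqv_test} gives $M(x_k)\ket{\psi} = N_i^{[k]}\ket{\psi}$, and since each $N_i^{[k]}$ commutes with every Alice operator, iterated application of \cref{lm:sub} produces $\prod_{k \in I_i} M(x_k)\ket{\psi} = \prod_{k \in I_i} N_i^{[k]}\ket{\psi} = \ket{\psi}$.

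The final and hardest step is (3), which also promotes the $\ket{\psi}$-wise relations above into the full claim that $\Phi_M$ is a representation on $\calH_0$. Given $r \in \Gamma(A)$ with $\Phi_M(r)\ket{\psi} = \ket{\psi}$, I write a generic element of $\calA\ket{\psi}$ as $W_A W_B \ket{\psi}$ by Alice-Bob commutation, with $W_A$ in Alice's algebra and $W_B$ in Bob's; then $\Phi_M(r)$ commutes with $W_B$ outright. The key identity $\Phi_M(r) W_A \ket{\psi} = W_A \Phi_N(r)\ket{\psi}$, where $\Phi_N(r)$ substitutes $N(x_i)$ for $M(x_i)$ throughout $r$, follows from iterated application of \cref{lm:sub}: each $M(x_i)$ occurring in $\Phi_M(r)$ agrees with $N(x_i)$ on $\ket{\psi}$ by (1), and $N(x_i)$ commutes with all of Alice's algebra, so each generator can be substituted and slid past $W_A$ in turn. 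A symmetric version of the second paragraph (using condition (4) in place of condition (5)) shows $\Phi_N(r)\ket{\psi} = \ket{\psi}$. Combining, and invoking density of $\calA\ket{\psi}$ in $\calH_0$, yields $\Phi_M(r)|_{\calH_0} = \1$. The main obstacle is the bookkeeping in this last step: one must iterate \cref{lm:sub} across an entire word $r$ of unbounded length interacting with a generic element of the mixed algebra $\calA$, rather than the single-generator applications that suffice in the earlier steps.
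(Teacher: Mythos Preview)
Your treatment of (1) and (2) is correct and essentially mirrors the paper's, with the Alice/Bob roles swapped: the paper coarse-grains Alice's row-$i$ measurement to get observables $M_{i,k}$ that agree with $N(x_k)$ on $\ket{\psi}$ (via condition (4)), then uses the mutual commutation of the $M_{i,k}$'s to obtain both the commutator and product relations for the $N(x_k)$'s; you do the symmetric thing with Bob's row measurement and Alice's variable observables (via condition (5)), invoking \cref{prop:comm_test} as a black box for the commutator part. Either direction works.

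There is, however, a real gap in your argument for (3). To ``substitute and slide'' $M(x_{i_1})$ for $N(x_{i_1})$ at the left of $M(x_{i_1})\cdots M(x_{i_m})W_A\ket{\psi}$, the Substitution Lemma requires every factor in the tail $M(x_{i_2})\cdots M(x_{i_m})W_A$ to have a Bob-side counterpart on $\ket{\psi}$ commuting with the operators being swapped. For the $M(x_{i_j})$'s you supply $N(x_{i_j})$ via (1), but the factors of $W_A$ are arbitrary Alice projectors $M_x^{(a)}$, and you never establish $M_x^{(a)}\ket{\psi}=N_x^{(a)}\ket{\psi}$ for those. That identity does hold---perfect correlations are synchronous by conditions (3) and (6), so \cref{prop:eqv_test} gives it for all $x,a$---but you need to invoke it, and your stated justification (``each $M(x_i)$ agrees with $N(x_i)$ on $\ket{\psi}$ by (1)'') is not enough. (There is also a minor type mismatch: \cref{lm:sub} is stated for unitaries, while the factors of $W_A$ are projections; the proof of \cref{lm:sub} does not use unitarity, so this is cosmetic.)

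Once you have full synchronicity, the paper's route is shorter than iterated substitution: from $M_x^{(a)}\ket{\psi}=N_x^{(a)}\ket{\psi}$ one gets $\calA\ket{\psi}=\calA_0\ket{\psi}=\calA_1\ket{\psi}$ by swapping Alice monomials for Bob monomials on $\ket{\psi}$. Then for any $R\in\calA_0$ with $R\ket{\psi}=0$ and any $T\in\calA_1$, one has $RT\ket{\psi}=TR\ket{\psi}=0$, so $R$ vanishes on $\overline{\calA_1\ket{\psi}}=\calH_0$. Applying this to $R=\1-\Phi_M(r)$ gives (3) (and, applied to the defining relators, gives (2)) in one line, without tracking words through $W_A$.
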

\cref{prop:correlationrep} is similar to \cite[Lemma
8]{cleve2017}. We include a full proof for completeness. 

\begin{proof}[Proof of \cref{prop:correlationrep}]
    Since perfect correlations are synchronous, $M^{(a)}_x \ket{\psi} =
    N^{(a)}_x \ket{\psi}$ for all $x \in \calX$, $a \in \Z_2^{\kappa}$ by
    \cref{prop:eqv_test}. As a result, we have $M(x_i) \ket{\psi} = N(x_i) \ket{\psi}$
    for all $i \in [n]$. For the second part of the proposition, 
    define projections
    \begin{equation*}
        M_{i,k}^{(c)} := \sum_{\substack{a \in \Z_2^{\kappa} \\ a_{\phi_i(k)} = c}}
            M_{i}^{(a)}, i \in [m], k \in I_i, c \in \Z_2.
    \end{equation*}

    Observe that by condition \ref{item:Xvar} of \cref{def:perfectcorrelations}, 
    if $i \in [n]$ and $b \in \Z_2^{\kappa}$ with $(b_0, \ldots, b_{\kappa-2}) \neq (0,\ldots,0)$, then
    \begin{equation*}
        \bra{\psi} N_{x_i}^{(b)} \ket{\psi} = \sum_{a \in \Z_2^{\kappa}}
            \bra{\psi} M_1^{(a)} N_{x_i}^{(b)}  \ket{\psi} = \sum_{a \in \Z_2^{\kappa}}
                P(a,b|1,x_i) = 0,
    \end{equation*}
    and hence $N_{x_i}^{(b)} \ket{\psi} = 0$. 
    Since the strategy is a good strategy, by \cref{def:good_strat}, $N_{x_i}^{(b)} = 0$
    for $(b_0, \ldots, b_{\kappa-2}) \neq (0,\ldots,0)$.
    Thus $\set{N_{x_i}^{(0)}, N_{x_i}^{(1)}}$ is a complete measurement,
    where for convenience we write 
    $N_{x_i}^{(c)}$ for $N_{x_i}^{(0,\ldots,0,c)}$
    for all $i \in [n]$ and $c \in \Z_2$.
    Then $N(x_i) = N_{x_i}^{(0)} - N_{x_i}^{(1)}$ is a binary observable on $\calH$.
    Similarly, condition \ref{item:X}
    of \cref{def:perfectcorrelations} and \cref{def:good_strat} imply that $M_{i}^{(a)}= 0$
    for all $i \in [m]$ and $a \not\in S$.

    By condition \ref{item:varcheckB} of \cref{def:perfectcorrelations}, 
    \begin{equation*}
        \bra{\psi} M_{i,k}^{(0)} \cdot N_{x_k}^{(1)} \ket{\psi}
            = \sum_{\substack{a \in \Z_2^{\kappa} \\ a_{\phi_i(k)} = 0 }} \bra{\psi} M_i^{(a)} N_{x_k}^{(1)} \ket{\psi}
            = \sum_{\substack{a,b \in \Z_2^{\kappa} \\ a_{\phi_i(k)} = 0 }} P(a,(0,\ldots,0,1)|i,x_k) = 0,
    \end{equation*}
    and $\bra{\psi} M_{i,k}^{(1)} \cdot N_{x_k}^{(0)} \ket{\psi} = 0$
    similarly. Applying \cref{lem:eqv_test} to the projective measurements
    $\{M_{i,k}^{(0)}, M_{i,k}^{(1)}\}$ and $\{N_{x_k}^{(0)},
    N_{x_k}^{(1)}\}$, we get that $M_{i,k}^{(c)} \ket{\psi}
   = N_{x_k}^{(c)} \ket{\psi}$
    for all $i \in [m]$, $k \in I_i$, and $c \in \Z_2$.

    Now let $M_{i,k} := M_{i,k}^{(0)} - M_{i,k}^{(1)}$ for some $i \in [m]$.
    Since the projections $M_{i}^{(a)}$, $a \in \Z_2^{\kappa}$ commute,
    $[M_{i,k},M_{i,l}] = 0$ for all $i \in [m]$ and $k,l \in I_i$. If
    $k_0,\ldots,k_{\ell-1}$ is a sequence in $I_i$, and $\sigma$ is a
    permutation of $[\ell]$, then
    \begin{align*}
        N(x_{k_0}) \cdots N(x_{k_{\ell-1}}) \ket{\psi}
        = M_{i,k_{\ell-1}} M_{i,k_{\ell-2}} \cdots M_{i,k_0} \ket{\psi}
        & = M_{i,k_{\sigma(\ell-1)}} \cdots M_{i,k_{\sigma(0)}} \ket{\psi} \\
        & = N(x_{\sigma(k_0)}) \cdots N(x_{\sigma({k_{\ell-1}})}) \ket{\psi},
    \end{align*}
    so the operators $N(x_k), k \in I_i$ commute on $\ket{\psi}$. When we
    take the product across all of $I_i$, 
    \begin{align*}
        \prod_{k \in I_i} N(x_k) \ket{\psi} & = \prod_{k \in I_i} M_{i,k} \ket{\psi}
        = \prod_{k \in I_i} \left(\sum_{a \in \Z_2^{\kappa}} (-1)^{a_{\phi_i(k)}} M_{i}^{(a)}\right) \ket{\psi} 
        = \sum_{a \in \Z_2^{\kappa}} (-1)^{\sum_{k \in [\kappa]} a_k} M_i^{(a)} \ket{\psi} \\
        & = \sum_{a \in S} M_i^{(a)} \ket{\psi} = \sum_{a \in \Z_2^{\kappa}} M_i^{(a)}
        \ket{\psi} = \ket{\psi},
    \end{align*}
    where we use that $M_{i}^{(a)} = 0$ for $a \not\in S$ and $\sum_k a_k = 0$
    for $a \in S$. 
    Finally, $N(x_k)^2 = N_{x_k}^{(0)} + N_{x_k}^{(1)} = \1_{\calH}$.
    
    To finish the proof, let $\calA_0$ (resp. $\calA_1$) be the algebra
    generated by $M_{x}^{(a)}$ (resp. $N_{x}^{(a)}$) for $x \in \calX$ and $a
    \in \Z_{2}^{\kappa}$. A standard result about synchronous correlations,
    following immediately from the fact that $M^{(a)}_x \ket{\psi} = N^{(a)}_x
    \ket{\psi}$, is that $\calA \ket{\psi} = \calA_0 \ket{\psi} = \calA_1 \ket{\psi}$.
    If $R \in \calA_1$ satisfies $R \ket{\psi} = 0$, then $R T \ket{\psi}
    = T R \ket{\psi} = 0$ for all $T \in \calA_0$, and thus $R v = 0$ for all
    $v \in \calH_0 = \overline{\calA \ket{\psi}} = \overline{\calA_0 \ket{\psi}}$. 
    Define $\Phi_N : \calF(x_1,\ldots,x_n) \to U(\calH_0) : x_i \mapsto N(x_i)|_{\calH_0}$, and
    suppose $r \in \calF(x_1,\ldots,x_n)$ is a defining relation for $\Gamma(A)$
    from \cref{def:sol_grp}. We've shown that $\Phi_N(r) \ket{\psi} = \ket{\psi}$,
    and hence $\1 - \Phi_N(r)$ is $0$ on $\calH_0$. It follows that $\Phi_N$ induces a
    representation of $\Gamma(A)$ on $\calH_0$ sending $x_i \mapsto N(x_i)|_{\calH_0}$.
    Switching $M$ and $N$ in the above argument, we see that there is also a
    representation of $\Gamma(A)$ on $\calH_0$ sending $x_i \mapsto M(x_i)|_{\calH_0}$.
    The same argument shows that if $\Phi_M(r) \ket{\psi} = \ket{\psi}$ for $r \in \Gamma(A)$,
    then $\Phi_M(r) = \1_{\calH_0}$.
\end{proof}
We can also construct perfect correlations for $Ax = 0$ from representations of the solution group $\Gamma(A)$.
We do this for particular representations in \cref{sec:main_cqa}.

\subsection{Embedding groups in solution groups}\label{subsec:embeddings}

As mentioned in \cref{sec:overview}, for our proof we embed Kharlampovich-Myasnikov-Sapir (KMS) groups into solution groups using the results of
\cite{slofstra2017}. 
Recall the following technical definition:
\begin{definition}[Definition $32$ of \cite{slofstra2017}]
    \label{def:ex_homo_lcg}
    Let $A$ be an $m \times n$ matrix over $\Z_2$, $C_0 \subseteq [n] \times [n] \times [n]$, $C_1 \subseteq [\ell] \times [n] \times [n]$ for some $\ell \geq 1$, and
    $L$ be an $\ell \times \ell$ lower-triangular matrix with non-negative integer entries. Let
    \begin{align*}
        E\Gamma(A,C_0,C_1,L):=\ip{\Gamma(A),y_0,\ldots,y_{\ell-1}:&
        x_i x_j x_i = x_k \text{ for all } (i,j,k) \in C_0, \\
            &y_i^{-1} x_j y_i = x_k \text{ for all } (i,j,k) \in C_1, \\
            & y_i^{-1} y_j y_i = y_j^{L_{ij}} \text{ for all } i > j \text{ with } L_{ij} > 0 }.
    \end{align*}
    We say a group $G$ is an \textbf{$m \times n \times \ell$ extended homogeneous-linear-plus-conjugacy group}
    if it has a presentation of this form.
\end{definition}

By \cite[Propositions $27$ and $33$]{slofstra2017}, extended
homogenous-linear-plus-conjugacy groups can be $fa$-embedded into solution groups. 
We use the following version of that result:
\begin{proposition}\label{prop:lpc_to_sol}
    Let $G = E \Gamma(A,C_0,C_1,L)$ be an $m \times n \times \ell$ extended homogeneous
    linear-plus-conjugacy group. Then there is an $m' \times n'$ matrix $A'$
    over $\Z_2$ for some $m' \geq m$ and $n' \geq n$, and a homomorphism
    \begin{equation*}
        \widetilde{\phi} : \calF(x_0,\ldots,x_{n-1},y_0,\ldots,y_{\ell-1})
            \to \calF(x_0,\ldots,x_{n'-1}), 
    \end{equation*}
    such that:
    \begin{enumerate}[(a)]
        \item Each row in $A'$ has only three non-zero entries.
        \item $\widetilde{\phi}(x_i) = x_i$ for all $i \in [n]$. 
        \item For all $i \in [\ell]$ there are $j, k \in [n']$ such that
            $\widetilde{\phi}(y_i) = x_j x_k$.
        \item If $r$ is a defining relation of $G$, then $\widetilde{\phi}(r)$
            is in the normal subgroup generated by the defining relations 
            of $\Gamma(A')$, so $\widetilde{\phi}$ induces a homomorphism $\phi
            : G \to \Gamma(A')$.
        \item There are integers $n_1,n_2,n_3$ such that if $\gamma$ is an
            $\epsilon$-representation of $G$, then there is an
            $O(\epsilon)$\nobreakdash-representation $\alpha$ of $\Gamma(A')$ with
            \begin{equation*}
                \alpha(\widetilde{\phi}(g)) = \gamma(g)^{\oplus n_1} \oplus \overline{\gamma(g)}^{\oplus n_2} \oplus \1_{n_3}
            \end{equation*} 
            for all $g \in \calF(x_0,\ldots,x_{n-1},y_0,\ldots,y_{\ell-1})$. As a 
            result, the homomorphism $\phi : G \to \Gamma(A')$ is an 
            $fa$-embedding.
    \end{enumerate}
    Furthermore, both $A'$ and $\widetilde{\phi}$ are constructible, in the sense that
    there's a Turing machine which, given $A$, $C_0$, $C_1$, and $L$, will
    output $A'$ and $\widetilde{\phi}(y_i)$ for all $i \in [\ell]$.
\end{proposition}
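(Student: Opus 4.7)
The plan is to build $A'$ and $\widetilde{\phi}$ by invoking the solution-group encoding machinery from \cite[Propositions 27 and 33]{slofstra2017}, and then verify that the construction can be arranged to meet the extra structural demands (a)--(e). At a high level, the construction proceeds in two reductions: first, eliminate the non-homogeneous relations of $G$ (the triple-product, conjugacy, and torsion-like relations involving the $y_i$) by rewriting them as pure solution-group relations over an enlarged generating set; second, reduce every resulting linear relation to weight three.

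First I would eliminate the extra relations. The triple-product relations $x_i x_j x_i = x_k$ become $x_i x_j x_i x_k = e$, already a linear relation in the $x$'s. For the conjugacy relations $y_i^{-1} x_j y_i = x_k$ and the torsion-like relations $y_i^{-1} y_j y_i = y_j^{L_{ij}}$, I would represent each $y_i$ as a product of two order-two generators $x_{a_i} x_{b_i}$ in the target solution group, which is what produces property (c); the conjugacy and torsion-like behaviour of the $y_i$'s is then implemented by imposing commuting and linear relations on the constituent generators $x_{a_i}, x_{b_i}$ together with further auxiliary generators, exactly as in \cite{slofstra2017}. Second, I would reduce every linear relation of weight $k > 3$ to a chain of weight-three relations by introducing auxiliary order-two generators $z_1, \ldots, z_{k-3}$ with $x_{i_1} x_{i_2} z_1 = e$, $z_1 x_{i_3} z_2 = e$, and so on, adding the necessary commutators so that each $z_s$ commutes with the generators appearing in its defining row. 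Properties (a)--(d) are then manifest from the construction.

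The main obstacle is property (e), which controls how a finite-dimensional $\epsilon$-approximate representation $\gamma$ of $G$ lifts to an $O(\epsilon)$-approximate representation $\alpha$ of $\Gamma(A')$ with the explicit direct-sum form on the original generators. For each auxiliary generator introduced in the reductions above, one must specify a unitary image that satisfies every new defining relation up to $O(\epsilon)$ error in the normalized Hilbert--Schmidt norm. Following the template of \cite{slofstra2017}, the auxiliary generators are assembled as block matrices built from $\gamma$, its conjugate $\overline{\gamma}$, and identity blocks, with the integers $n_1, n_2, n_3$ absorbed from these choices; the role of the $\overline{\gamma}$-block is to realize the inversion needed by the conjugation relations. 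The $O(\epsilon)$ bound on each new defining relation then follows by a telescoping triangle-inequality argument using \cref{lm:norms}. Once (e) is established, the fa-embedding property is automatic: any element of $G$ nontrivial in approximate representations yields, via $\gamma \mapsto \alpha$, an element of $\Gamma(A')$ nontrivial in approximate representations. Finally, constructibility of $A'$ and $\widetilde{\phi}$ is immediate because each reduction step above is a finite computation on the input data $(A, C_0, C_1, L)$.
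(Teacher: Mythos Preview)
Your overall approach matches the paper's: defer parts (b)--(e) to Propositions~27 and~33 of \cite{slofstra2017}, and handle part~(a) by a separate row-reduction on the resulting linear system. Two details are off, however.

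First, the triple-product relation $x_i x_j x_i = x_k$ is not a linear solution-group relation: in $\Gamma(A)$ each variable appears at most once per row, so $x_i x_j x_i x_k = e$ is not of that form. It is a conjugacy relation $x_j^{x_i} = x_k$ and is handled by the same mechanism as the $y$-conjugacy relations in Proposition~27, not by reading it as a weight-four row.

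Second, and more substantively, your weight-three reduction as written would not give an isomorphic solution group. In $\Gamma(A)$, two generators commute precisely when they share a row; so when you split a row $x_{j_1}\cdots x_{j_r}=e$ into a chain $x_{j_1}x_{j_2}z_1=e$, $z_1x_{j_3}z_2=e$, \dots, the commutation of each $z_s$ with its neighbours is automatic, but the original commutations like $[x_{j_1},x_{j_3}]=e$ are lost, since $x_{j_1}$ and $x_{j_3}$ no longer share a row. The paper reduces one entry at a time and, crucially, adds auxiliary length-three rows $z_{1t}x_{j_1}x_{j_t}=e$ and $z_{2t}x_{j_2}x_{j_t}=e$ for $t=3,\dots,r$ solely to reinstate these pairwise commutations. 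It also treats rows of length one or two separately, padding them up to length three. With these fixes your argument goes through.
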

Although we don't need this fact, $A'$ and $\widetilde{\phi}$ can be
constructed in polynomial time. 
\begin{proof}
    Parts (b)-(e) follow from the proofs of Propositions 27 and 33 in \cite{slofstra2017}.
    Proposition 27 in \cite{slofstra2017} is actually about non-homogeneous
    solution groups $\Gamma(A,b)$, which have an additional central element $J$
    of order $2$ representing a scalar in $\Z_2$. However, the non-homogeneous
    solution group $\Gamma(A,b)$ associated with an $m \times n$ linear system
    $Ax=b$ can be regarded as a homogeneous solution group by adding variables
    $x_{n},\ldots,x_{2n}$, replacing $J$ with $x_{2n}$ wherever it occurs in 
    the presentation, and adding linear relations $x_{i} x_{n+i} x_{2n} = e$
    for all $0 \leq i < n$ to force $x_{2n}$ to be central. Alternatively, the
    proof of Proposition 27 can be adapted to the homogeneous case by
    replacing $J$ with $e$ wherever it occurs in group presentations, and
    ignoring assignments to $J$ in approximate representations. 

    For part (a), given an $m_0 \times n_0$ matrix $A_0$ over $\Z_2$, we can
    find an $m_1 \times n_1$ matrix $A_1$, where $m_1 \geq m_0$ and $n_1 \geq
    n_0$, such that there is an isomorphism $\Gamma(A_0) \to \Gamma(A_1)$
    sending $x_i \mapsto x_i$ for all $i \in [n_0]$, and $A_1$ has exactly
    three non-zero entries in each row. Indeed, suppose the $i$th row
    has non-zero entries in columns $j_1,\ldots,j_r$, where $r > 3$. 
    Adding variables $z_{1t}$,$z_{2t}$,$z_3$ and equations $z_3 = x_{j_1}
    x_{j_2}$, $z_{1t} x_{j_1}x_{j_t} = e$, and $z_{2t} x_{j_2} x_{j_t} = e$ for all
    $t=3,\ldots,r$, 
    and replacing the $i$th row of $A$ with the equation $z_3 x_{j_3} \cdots x_{j_r} = e$,
    we get an isomorphic solution group where equation $i$ has the number of
    non-zero entries reduced by one, and all the added equations have length
    exactly three (the equations $z_{ij} x_{j_i} x_{j_t}$ are needed to force
    $x_{j_i}$ and $x_{j_t}$ to commute). If row $i$ of $A_0$ has exactly one
    non-zero entry in column $j$, then we can add variables $z_1,z_2,z_3$ and
    replace the $i$th equation with equations $x_j z_1 z_2 = x_j z_1 z_3 = x_j
    z_2 z_3 = z_1 z_2 z_3 = e$ (which together force $x_j = z_1 = z_2 = z_3 =
    e$). If row $i$ has exactly two non-zero entries in columns $j$ and $k$,
    then we can add variables $z_1$, $z_2$, and replace the $i$th equation with
    equations $x_j z_1 z_2 = x_k z_1 z_2 = e$. Iterating these steps, we 
    eventually get $A_1$ as desired.

\end{proof}

\section{Minsky machines and Kharlampovich-Myasnikov-Sapir groups}
\label{sec:kms}

\subsection{Minsky machines}
A \emph{$k$-glass Minsky Machine} \cite{minsky}, denoted by $\MM$, consists
of $k$ glasses where each glass can hold arbitrarily many coins, a set of
states $[N]$, and a finite list of commands. Just like a Turing machine, a
configuration of $\MM$ describes which state the machine is in and how many
coins are in each of the glasses. A computation running on $\MM$ is a
sequence of commands, where each command maps one configuration to another.
Commands can leave a glass unchanged, add a coin to a glass, or remove a coin
from a non-empty glass, as well as change the state. Glasses are numbered starting
from $1$. In formal language, this
means that a \emph{configuration of $\MM$} is an element $(i; n_1, n_2,
\ldots n_{k}) \in [N] \times (\Z_{\geq 0})^{[k]}$. The state $0$ is
regarded as a final halt state, and $1$ is regarded as a start state. The
\emph{accept configuration} is $(0;0,0, \ldots 0)$ and the \emph{starting
configuration} with input $m$ is $(1;m,0, \ldots 0)$.  There are four types of
commands:
\begin{enumerate}
    \item Adding coins: When the state is $i$, add a coin to each of the glasses numbered $j_1,j_2 \ldots j_{\ell}$ where $\ell \leq k$, and go
    to state $j$.  This command is encoded as
    \begin{align*}
        i; \quad \to \quad j; Add(j_1, j_2 \ldots j_{\ell}).
    \end{align*}
    \item Removing coins: When the state is $i$,  if the glasses numbered $j_1,j_2 \ldots j_{\ell}$, $\ell \leq k$, are all nonempty, 
    then remove a coin from each of the glasses numbered $j_1,j_2 \ldots j_{\ell}$, and go to state $j$. This command is encoded as
    \begin{align*}
        i; n_{j_1} > 0, \ldots, n_{j_{\ell}} > 0 \quad \to \quad j; Sub(j_1,j_2,\ldots j_{\ell}).
    \end{align*}
    \item Empty check: When the state is $i$, if the glasses numbered $j_1,j_2 \ldots j_{\ell}$, $\ell \leq k$, are empty, go to state $j$.
    This command is encoded as
    \begin{align*}
         i; n_{j_1} = 0, n_{j_2} = 0, \ldots, n_{j_{\ell}} = 0 \quad \to\quad  j.
    \end{align*}
    \item Stop: When the state is $i$, change state to $0$.  This command is encoded as
    \begin{align*}
        i;\quad  \to\quad  0.
    \end{align*}
\end{enumerate}
In addition, the input state $i$ to each command must be non-zero, so there are
no commands leaving the halt state.  
A command can only be applied to configurations that match the description of the command. 
Some configurations may not have any applicable commands,
while some configurations can have more than one applicable command. If for
every configuration, there is at most one command that can be applied, the
Minsky machine is \emph{deterministic}. Otherwise, the Minsky machine is
\emph{non-deterministic}. A $k$-glass Minsky machine $\MM$ \emph{accepts} an
input $n \in \Z_{\geq 0}$ if there is a sequence of configurations
\begin{equation*}
    (1;n,0,\ldots,0) =: C_0 \to C_1 \to \ldots \to C_N := (0;0,\ldots,0)
\end{equation*}
from the input configuration to the accept configuration such that for
every $0 \leq i < N$, there is a command of $\MM$ that applies to $C_i$
and transforms it to $C_{i+1}$. In general, we let $\equiv_{\MM}$ be the
equivalence relation on configurations generated by the relations $C
\equiv_{\MM} C'$, where $C$ and $C'$ are configurations for which there is a command
of $\MM$ which applies to $C$ and transforms it to $C'$. Note that the
equivalence relation generated by these relations also includes the relations
$C' \equiv_{\MM} C$ whenever there is a command transforming $C$ to $C'$,
as well as the transitive closure of both types of relations.
So $C \equiv_{\MM} C'$ does not necessarily mean that $C$ can be transformed
to $C'$ by applying commands from $\MM$, but rather that $C$ can be
transformed to $C'$ by applying commands or the inverses of commands. 
If $\MM$ is deterministic, then it's easy to see that $C_1 \equiv_{\MM} C_2$
if and only if $C_1$ and $C_2$ can both be transformed to the same configuration 
$C_3$ by the operations of $\MM$. Since there are no commands with input
state $0$, a deterministic Minsky machine $\MM$ accepts an input $n$ if and
only if $(1;n,0,\ldots,0) \equiv_{\MM} (0;0,\ldots,0)$.

Recall that a subset $S$ of natural numbers is \emph{recursively enumerable}, or
$\RE$, if there is an algorithm such that the algorithm halts on input $s$ if
and only if $s \in S$. Minsky machines can recognize any recursively enumerable
set (or in other words, Minsky machines are Turing complete):
\begin{theorem}[Theorem 2.7, part (a) of \cite{sim_group}]
    $X$ is recursively enumerable if and only if there exists a $3$-glass
    deterministic Minsky machine $\MM$ such that $n \in X$ if and only if
    $\MM$ accepts $n$.
\end{theorem}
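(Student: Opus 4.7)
The plan is to prove both directions; the forward direction is essentially immediate, while the backward direction reduces to the well-known Turing-completeness of multi-counter machines plus Minsky's prime-power encoding trick.

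For the forward direction, suppose $\MM$ is a $3$-glass deterministic Minsky machine. Since there is at most one command applicable at each configuration, the trajectory from the start configuration $(1;n,0,0)$ is a single (possibly infinite) deterministic sequence, and $\MM$ accepts $n$ iff this trajectory reaches $(0;0,0,0)$. Simulating the trajectory step by step gives a semi-decision procedure for the set $\set{n : \MM \text{ accepts } n}$, so this set is in $\RE$. Hence if $X$ arises this way it is recursively enumerable.

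For the converse, start from a deterministic Turing machine $T$ that halts on $n$ iff $n\in X$. I would first convert $T$ into a deterministic Minsky machine $\MM_0$ with some fixed number $k$ of glasses by encoding a $T$-configuration as a tuple of natural numbers: the two half-tapes left and right of the head become two integers written in base $b$ (where $b$ is the tape alphabet size), the head state is absorbed into the Minsky state set $[N]$, and one or two additional glasses serve as scratch space for the arithmetic operations (base-$b$ digit extraction, shifting) needed to simulate one $T$-step by a constant-length program of Minsky commands. Determinism of $T$ carries over to $\MM_0$ directly.

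The main technical step is then to push $k$ down to $3$. Here I would apply Minsky's classical encoding $(n_1,\dots,n_k)\mapsto N := p_1^{n_1}\cdots p_k^{n_k}$ for fixed distinct primes $p_i$, keeping $N$ in one glass and using a single auxiliary glass as a scratchpad. Incrementing $n_i$ becomes ``multiply $N$ by $p_i$'' (transfer $N$ to the scratch glass while adding $p_i$ copies to the main glass), decrementing becomes ``divide $N$ by $p_i$'' (attempted repeated subtraction of $p_i$ cycles), and the test $n_i=0$ becomes ``$p_i\nmid N$'' (same cycle, using the empty-check commands to branch). Each primitive is a finite deterministic gadget on $2$ glasses. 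A third glass is reserved to hold the input $n$ at the beginning so that the initial Minsky configuration expected by $\MM_0$ (encoded as a single number in the main glass) can be constructed without consuming $n$ prematurely, after which the third glass is emptied and plays the role of a secondary scratch slot during later gadgets that need it.

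The hard part, and the one that requires real care rather than principle, is designing each gadget so that (i) it is genuinely deterministic when expressed in terms of the four command types in the definition, including correct use of the empty-check instruction; (ii) it leaves the scratch glass empty on exit so gadgets compose; and (iii) the final halt sequence zeroes out all three glasses so that the accept configuration $(0;0,0,0)$ is reached exactly when $T$ halts. With these gadgets in place, the composition gives a $3$-glass deterministic Minsky machine $\MM$ with $\MM$ accepting $n$ iff $T$ halts on $n$ iff $n\in X$, completing the proof.
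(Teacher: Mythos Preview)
Your sketch is a reasonable outline of the classical argument (Turing machine $\to$ $k$-counter machine $\to$ $3$-counter machine via prime-power encoding), and the points you flag as needing care --- deterministic gadgets in the given command set, scratch glasses left empty on exit, and zeroing out to reach $(0;0,0,0)$ --- are indeed the places where the details live. As written it is a proof sketch rather than a proof, but the skeleton is sound.

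However, you should be aware that the paper does not actually prove this statement at all. The theorem is quoted from \cite{sim_group}, and the paper's ``proof'' is purely a bibliographic remark: Theorem~2.7(a) of \cite{sim_group} is stated there for $2$-glass machines with a different input encoding, and the $3$-glass version used here appears inside the proof of that $2$-glass result. So the paper's approach is simply to cite the source; your approach is to reconstruct the classical Minsky argument from scratch. There is nothing wrong with the latter, but if you were aiming to match the paper, a one-line citation is all that is expected here.
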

\begin{proof}
    In Theorem 2.7, part (a) of \cite{sim_group}, the theorem is stated 
    for $2$-glass Minsky machines with a different encoding of the input. However,
    the statement above for $3$-glass Minsky machines is used as part of the proof
    of the $2$-glass case. 
\end{proof}

If we change the input encoding, then $2$-glass Minsky machines are also
sufficiently powerful to recognize all RE sets. However, we use $3$-glass
Minsky machines as the starting point for our construction, since the simpler
input encoding is necessary to connect with other parts of the construction.

\begin{lemma}\label{lem:oneglassmore}
    Let $\MM$ be a $k$-glass deterministic Minsky machine. Then there is a
    $(k+1)$-glass deterministic Minsky machine $\MM'$ such that 
    \begin{enumerate}[(a)]
        \item $n$ is accepted by $\MM'$ if and only if $n$ is accepted by $\MM$, and 
        \item if $n$ is not accepted by $\MM$, then
                $(1;n,0,\ldots,0) \equiv_{\MM'} (1;m_1,\ldots,m_{k+1})$
            if and only if $(n,0\ldots,0) = (m_1,\ldots,m_{k+1})$.
    \end{enumerate}
\end{lemma}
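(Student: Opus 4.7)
The plan is to build $\MM'$ by augmenting $\MM$ with a single auxiliary glass (glass $k+1$) and a short preprocessing routine that copies the initial contents of glass $1$ into glass $k+1$. Concretely, $\MM'$ has a brand-new start state $1$ whose only command is an empty-check demanding that glasses $2,\ldots,k,k+1$ are all empty, with target a fresh preprocessing state $P_1$. From $P_1$ a two-pass copy subroutine first transfers glass $1$ into glasses $2$ and $k+1$ simultaneously via the loop $P_1;\, n_1>0 \to P_2;\, \mathrm{Sub}(1)$, $P_2;\, \to P_1;\, \mathrm{Add}(2,k+1)$, $P_1;\, n_1 = 0 \to P_3$, and then transfers glass $2$ back into glass $1$ via the loop $P_3;\, n_2>0\to P_4;\, \mathrm{Sub}(2)$, $P_4;\, \to P_3;\, \mathrm{Add}(1)$, $P_3;\, n_2 = 0 \to \widetilde{1}$, leaving the machine at $(\widetilde{1}; m_1, 0, \ldots, 0, m_1)$ whenever it started at $(1; m_1, 0, \ldots, 0, 0)$. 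Here $\widetilde{1}$ is a fresh relabelling of $\MM$'s start state. From $\widetilde{1}$ onward, $\MM'$ simulates $\MM$ on glasses $1,\ldots,k$ verbatim --- with $\MM$'s state $1$ replaced by $\widetilde{1}$ everywhere, and $\MM$'s halt commands rerouted to a cleanup state $c$ --- never touching glass $k+1$. The cleanup state runs $c;\, n_{k+1}>0\to c;\, \mathrm{Sub}(k+1)$, $c;\, n_{k+1}=0\to 0$, draining glass $k+1$ before $\MM'$ halts. Determinism of $\MM'$ is immediate by inspection.

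Condition (a) is then straightforward: starting from $(1;n,0,\ldots,0,0)$, preprocessing produces $(\widetilde{1};n,0,\ldots,0,n)$, the subsequent simulation faithfully executes $\MM$ on input $n$, and the cleanup terminates iff $\MM$ halts.

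For condition (b), fix $C_0 := (1;n,0,\ldots,0,0)$ and a competing state-$1$ configuration $C' = (1;m_1,\ldots,m_{k+1}) \neq C_0$. Because no command of $\MM'$ targets state $1$, any equivalence $C_0 \equiv_{\MM'} C'$ must be witnessed by a common forward descendant. If some $m_j \neq 0$ for $j \in \{2,\ldots,k+1\}$, the opening empty-check already fails, so $C'$ has empty forward orbit and its equivalence class is $\{C'\}$; it suffices to handle $C'=(1;m_1,0,\ldots,0,0)$ with $m_1\neq n$. During the copy subroutine the invariant $\mathrm{glass}_1 + \mathrm{glass}_{k+1} = (\text{initial value of glass }1)$ --- equal to $n$ on $C_0$'s trajectory and to $m_1$ on $C'$'s --- keeps the two preprocessing trajectories disjoint, and they arrive at the simulation with $\mathrm{glass}_{k+1}=n$ versus $\mathrm{glass}_{k+1}=m_1$ respectively. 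Since the simulation never modifies glass $k+1$, this disparity persists at every simulation configuration, ruling out a common descendant in the simulation phase. Finally, because $\MM$ does not accept $n$, the $C_0$ trajectory remains in the simulation phase forever and never reaches cleanup, so no common descendant can exist in the cleanup phase either. The main obstacle is the bookkeeping to ensure that the state labels of the start, preprocessing, simulation, and cleanup phases are pairwise disjoint, so that coincidences in state alone cannot produce spurious mergers of the two orbits; this is routine but must be done carefully for the argument above to go through.
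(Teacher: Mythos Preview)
Your construction and argument are almost identical to the paper's, but there is one genuine gap that breaks part (b). You reroute $\MM$'s halt commands directly to the cleanup state $c$, and then argue that ``because $\MM$ does not accept $n$, the $C_0$ trajectory remains in the simulation phase forever and never reaches cleanup.'' This inference conflates \emph{not accepting} with \emph{not halting}. A deterministic Minsky machine can execute a Stop command $i;\to 0$ while some glass is nonzero, reaching a configuration $(0;a_1,\ldots,a_k)$ with some $a_j\neq 0$; that is halting without accepting. In your $\MM'$, such a run reaches $(c;a_1,\ldots,a_k,n)$ and then drains glass $k+1$, destroying the invariant that distinguishes inputs.

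Here is a concrete failure. Take $k=2$ and let $\MM$ have commands
\[
1;\,n_1>0 \to 1;\,\mathrm{Sub}(1),\qquad 1;\,n_1=0\to 2,\qquad 2;\to 3;\,\mathrm{Add}(1),\qquad 3;\to 0.
\]
Then every input $n$ halts at $(0;1,0)$, so $\MM$ accepts nothing. In your $\MM'$, input $n$ reaches $(c;1,0,n)$ and drains through $(c;1,0,n-1),\ldots,(c;1,0,0)$. Hence the forward orbits of $(1;2,0,0)$ and $(1;3,0,0)$ meet at $(c;1,0,2)$, giving $(1;2,0,0)\equiv_{\MM'}(1;3,0,0)$ with $2$ not accepted, contradicting (b).

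The paper's construction fixes exactly this point: it routes $\MM$'s halt state to an intermediate state $0'$ and inserts an extra empty-check $0';\,n_1=0,\ldots,n_k=0\to 6'$ before cleanup. Thus cleanup (and the loss of the glass-$(k+1)$ tag) is reached only when the simulated $\MM$ has genuinely accepted, which cannot happen when $n\notin X$. Adding this one empty-check to your construction makes your argument go through verbatim.
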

\begin{proof}
    To construct $\MM'$, we add seven new states $0'$,$1',$2',$3'$,$4'$,$5'$ and $6'$
    to $\MM$. We keep the commands of $\MM$ the same, except that we replace
    states $0$ and $1$ with $0'$ and $1'$ wherever they occur in commands. We then
    add commands
    \begin{align*}
         1; n_2=0, n_3=0, \ldots=n_{k+1}=0  \quad &\to \quad 2' \\
         2'; n_1>0   \quad&\to \quad 3'; Sub(1) \\
         3' ; \quad &\to \quad 2' ; Add(2,k+1) \\
         2'; n_1=0   \quad&\to \quad 4' \\
         4';n_2 > 0 \quad & \to \quad 5'; Sub(2) \\
         5'; \quad & \to \quad 4'; Add(1) \\
        4'; n_2=0 \quad & \to \quad 1' \\
        0' ; n_1=0,n_2=0,\ldots,n_{k}=0 \quad & \to \quad 6' \\
        6'; n_{k+1} > 0 \quad & \to \quad 6'; Sub(k+1) \\
        6'; n_{k+1} =0 \quad & \to \quad 0.
    \end{align*}
    To explain these commands, suppose we start in configuration
    $(1;m_1,\ldots,m_{k+1})$ for $\MM'$. We are allowed to move onto state $2'$
    if and only if $m_2=\ldots=m_{k+1}=0$, or in other words we are in an input
    configuration $(1;n,0,\ldots,0)$. In states $2'$ and $3'$, we copy the
    $1$st glass to the $2$nd and $k+1$th glass. Once the configuration is
    $(2';0,n,0,\ldots,0,n)$, we can move on to state $4'$, where we copy
    back the $2$nd glass to the $1$st glass to get configuration
    $(4';n,0,\ldots,0,n)$. At this point, we can move on to state $1'$,
    where we can now start applying commands from $\MM$. Note that commands
    from $\MM$ do not affect the $k+1$th glass. If input $n$ is accepted by 
    $\MM$, then at some point $\MM'$ will end up in configuration
    $(0';0,\ldots,0,n)$. At this point the state changes to $6'$, and we
    empty out the $k+1$th glass and then move to the accept configuration
    $(0;0,\ldots,0)$. 

    Since $0$ is not an allowed input state for commands of $\MM$, the commands
    coming from $\MM$ in $\MM'$ do not share input states with any of the added
    commands above, so $\MM'$ is deterministic. The explanation of the commands
    above shows that if $\MM$ accepts $n$, then so does $\MM'$. If $\MM$
    does not accept $n$, then $\MM'$ will  never reach configuration
    $(0';0,\ldots,0,n)$, and thus cannot proceed to state $0$, so $\MM'$ does
    not accept $n$ either.

    Finally, suppose $c_1 \equiv_{\MM'} c_2$, where $c_1 := (1;n,0,\ldots,0)$
    and $c_2 := (1;m_1,\ldots,m_{k+1})$, and $n$ is not accepted by $\MM'$. Since
    $\MM'$ is deterministic, this means that $\MM'$ will take
    $(1;n,0,\ldots,0)$ and $(1;m_1,\ldots,m_{k+1})$ to a common configuration $c_3
    = (s;p_1,\ldots,p_{k+1})$. If $s=1$ then $c_1 = c_3 = c_2$. Otherwise
    $m_2=\ldots=m_{k+1}=0$. If $s \in \{1',2',3',4',5'\}$ then $\MM'$ will send $c_1$
    and $c_2$ to $(1';n,0,\ldots,0,n) = (1';m_1,0,\ldots,0;m_1)$. Since $\MM$
    does not accept $n$, we cannot have $s \in \{6',0\}$. Since commands from $\MM$ do not
    change the $k+1$th glass of $\MM'$, if $s \not\in\{1',2',3',4',5'\}$ then
    we must have $n=p_{k+1}=m_1$. In each case, we conclude that $(n,0,\ldots,0)
    = (m_1,m_2,\ldots,m_{k+1})$. 
    
\end{proof}

\subsection{Kharlampovich-Myasnikov-Sapir groups}
For a $k$-glass Minsky machine $\MM$, deterministic or non-deterministic, the
\emph{Kharlampovich-Myasnikov-Sapir (KMS) group} $G(\MM)$ is
a finitely presented group defined in \cite{sim_group}, based on an earlier
construction of Kharlampovich \cite{kharlampovich1982}. The definition of this group depends
on a parameter $p$, which we always take to be $2$. We let $S(\MM)$ and
$R(\MM)$ denote the generating set and relations for the presentation given
in \cite[Section 4.1]{sim_group}. 

The point of KMS groups is the following theorem:
\begin{theorem}[Properties 3.1 and 3.2 and Theorem 4.3 of \cite{sim_group}]
	\label{thm:kms_group}
    Let $\MM$ be a Minsky machine. Then $G(\MM)$ is solvable, and 
there is a computable function $w$ from configurations $c$ of $\MM$ to 
words $w(c)$ in the free group $\mathcal{F}(S(\MM))$, such that
\begin{align*}
	w(c) = w(c') \text{ in } G(\MM) \text{ if and only if } c \equiv_{\MM} c'.
\end{align*}
\end{theorem}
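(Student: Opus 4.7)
Since the theorem is cited from \cite{sim_group}, the plan is to follow the Kharlampovich--Myasnikov--Sapir construction at a high level; I only sketch the main ingredients. The group $G(\MM)$ is built so that each of the four command types of $\MM$ is realized as an elementary group-theoretic move. I would introduce, for each glass $j \in [k]$, a generator $a_j$ that tracks the number of coins by exponentiation, and for each state $i \in [N]$ a state generator $s_i$; the full set $S(\MM)$ also contains auxiliary generators needed to assemble the solvable layering below. The word $w(c)$ associated with a configuration $c = (i; n_1,\ldots,n_k)$ is a fixed normal form of the shape $s_i \cdot a_1^{n_1} \cdots a_k^{n_k}$ (up to a standard conjugation/layering convention), and is clearly computable from $c$.

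Next I would install the dynamics in $R(\MM)$ by inserting, for each command of $\MM$ that transforms $c$ to $c'$, relations that force $w(c) = w(c')$ in $G(\MM)$. Add-coin commands become multiplication by the appropriate $a_j$ together with a state-change relation $s_i = s_j \cdot (\text{coin factors})$; subtract-coin commands are inverted versions; empty-check commands are state-change relations that only make sense when the tracked glass exponent vanishes. With the relations set up this way, the forward direction $c \equiv_\MM c' \Longrightarrow w(c) = w(c')$ is immediate by induction on the length of an $\equiv_\MM$-chain, using symmetry and transitivity of equality in the group. Solvability is arranged by assembling the group as a bounded-length tower of extensions with abelian quotients of exponent $p=2$ (iterated semidirect-product/wreath-type constructions of the Kharlampovich type), so that each defining relation respects the layering.

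The genuine difficulty, and where I expect the main obstacle, is the converse: if $w(c) = w(c')$ in $G(\MM)$ then $c \equiv_\MM c'$. This is the faithfulness statement, and it fails for naive attempts because additional relations (coming from commutators forced by solvability, or from interaction between commands) could accidentally collapse inequivalent configurations. To handle this I would follow the KMS strategy of constructing an explicit faithful model: realize $G(\MM)$ (or a suitable quotient containing the image of $w$) as a matrix group or iterated wreath product over a polynomial ring, in which one can read off, from a given group element, both the current state and the glass contents. Concretely, one shows that rewriting any word representing $w(c)$ using the defining relations of $G(\MM)$ yields a sequence of intermediate words of the same ``$s_i \cdot \text{coin monomial}$'' shape, each pair of which corresponds to a legal $\MM$-move or its inverse; projecting to the explicit model guarantees that no other identifications occur.

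Overall, solvability and the forward implication are largely bookkeeping once the encoding is fixed; the hard and technical step is producing the faithful solvable model that certifies that the only words identified by $G(\MM)$ are those coming from configurations in the same $\equiv_\MM$-class. Everything else, including computability of $w$ and finiteness of the presentation, is manifest from the construction.
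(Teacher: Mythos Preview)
The paper does not give its own proof of this theorem; it is stated as a citation to \cite{sim_group} (Properties 3.1, 3.2 and Theorem 4.3 there), and the subsequent text only records enough of the KMS construction to be used later. So strictly speaking there is nothing to compare your argument against in this paper.

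That said, your sketch diverges from the actual KMS encoding in a way that matters. You propose $w(i;n_1,\dots,n_k)$ to be essentially $s_i\cdot a_1^{n_1}\cdots a_k^{n_k}$, with commands implemented as multiplicative relations among these pieces. The KMS construction (and the details the paper records right after the theorem) does \emph{not} use ordinary powers of $a_j$; it uses the operation $f\oast a_j := f^{-1} f^{a_j}(f^{-1})^{a_j^{-1}} f^{a_j'^{-1}}$ and $f\oast A_j := [f,A_j]$, so that the input word is $w(n)=x(q_1A_0)\oast a_1^{\oast n}\oast A_1\oast\cdots\oast A_k$. This iterated-commutator encoding is what allows one simultaneously to (i) keep the group inside a bounded-length derived series (solvability), and (ii) separate inequivalent configurations. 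With your naive multiplicative encoding, either the relations needed for solvability would collapse distinct coin-counts, or the command relations would force unwanted commutativity. You correctly flag faithfulness as the hard step, but the mechanism you outline (a ``matrix or wreath'' model in which one can read off state and coin contents) is not how \cite{sim_group} proceeds; the argument there goes through a careful normal-form analysis for elements of the subgroup $\calN(L_0,S)$ generated by the configuration words, using the specific $\oast$-algebra. So the high-level plan (encode configurations, install one relation per command, prove solvability via a fixed tower, then prove faithfulness) is right, but the concrete encoding you wrote down would not survive the faithfulness step.
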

In particular, if $w(n) := w((1;n,0,\ldots,0))$ is the word for input configuration
and $w_{accept} := w((0;0,\ldots,0))$ is the word for the accept configuration,
then 
\begin{align*}
w(n) = w_{accept} \text{ in } G(\MM) \text{ if and only if } (1;n,0,\ldots,0) \equiv_{\MM}
(0;0,\ldots,0).
\end{align*}
If $\MM$ is deterministic, then we can replace this last condition with the condition
that $\MM$ accepts $n$.

For our purposes, we need some details of the definition of $G(\MM)$ and the
function $w$ (we also include some additional details for context).  We use the
notation from \cite{sim_group} for ease of reference.  Suppose $\MM$ has
$k$-glasses and state set $[N+1]$.  The generating set $S(\MM)$ is divided into
subsets $L_0$, $L_1$, and $L_2$, where
\begin{align*}
    L_0 & = \{ x(q_i A_{i_1} \cdots A_{i_m}) \mid i \in [N+1], 0 \leq m \leq k, 0 \leq i_1 < i_2 < \ldots < i_m \leq k \}, \\ 
    L_1 & = \{A_i \mid 0 \leq i \leq k\}, \text{ and } \\
    L_2 & = \{a_i, a_i', \tilde{a}_i, \tilde{a}_i' \mid 1 \leq i \leq k\}. 
\end{align*}

Intuitively, the generators $x(u)$ are used to keep track of state, the
generators $A_i$ represent the bottom of the $i$th glass (along with an additional $A_0$ for bookkeeping purposes), and the generators
$a_i$ are used to keep track of coins in the $i$th glass. 

The relations $R(\MM)$ include a number of relations (marked as (G1)-(G7) in
\cite[Section 4.1]{sim_group}) which are common to all $k$-glass Minsky
machines with state set $[N+1]$, and then a number of relations (marked as (G8)
in the same section) for the commands. For the purpose of discussion, we'll
call these \emph{common relations} and \emph{command relations}.
The common relations include relations specifying that the elements of $L_0$
and $L_1$ have order two, and that $[x,y]=e$ for all $x,y \in L_i$, $i=0,1,2$.
The order of the generators in $L_2$ is unspecified.  The common relations also
have the property that if $\MM'$ has state set $[N'+1]$ with $N \leq N'$, then
the common relations of $R(\MM)$ are common relations of $R(\MM')$. 

For the command relations, there is one relation for each command. To specify these
relations, the following notation is used:
 if $f \in \calF(S(\MM))$ and $1 \leq j \leq k$, let
\begin{align*}
    f \oast a_j := f^{-1} f^{a_j} (f^{-1})^{a_j^{-1}} f^{a_j'^{-1}}, 
\end{align*}
and 
\begin{align*}
    f \oast A_j := [f, A_j]
\end{align*}
(recall that $f^{a} := a^{-1} f a$). Also, let 
\begin{equation*}
    t_1 \oast t_2 \ldots \oast t_m := (\ldots (t_1 \oast t_2) \oast \ldots) \oast t_m \text{ and }t_1 \oast t_2^{\oast n} := t_1 \oast \underbrace{ t_2 \oast \ldots \oast t_2}_\text{n times}.
\end{equation*}
Then with this notation, the relation for $i; \to j; Add(j_1,\ldots,j_{\ell})$ is 
\begin{equation*}
    x(q_i A_0) = x(q_j A_0) \oast a_{j_1} \oast \cdots \oast a_{j_\ell},
\end{equation*}
the relation for $i;n_{j_1}>0,\ldots,n_{j_{\ell}} > 0 \to j; Sub(j_1,\ldots,j_{\ell})$ is 
\begin{equation*}
    x(q_i A_0) \oast a_{j_1} \oast \cdots \oast a_{j_\ell} = x(q_j A_0),
\end{equation*}
the relation for $i;n_{j_1}=0,n_{j_2}=0,\ldots,n_{j_\ell}=0 \to j$ is 
\begin{equation*}
    x(q_i A_0) \oast A_{j_1} \oast \cdots \oast A_{j_\ell} = x(q_j A_0) \oast
    A_{j_1} \oast \cdots \oast A_{j_\ell},
\end{equation*}
and the relation for $i; \to 0$ is 
\begin{equation*}
    x(q_i A_0) = x(q_0 A_0).
\end{equation*}

For $n \in \Z_{\geq 0}$, the word corresponding to the input configuration for $n$ is  
\begin{align*}
	w(n) := x(q_1A_0) \oast a_1^{\oast n} \oast A_1 \oast \ldots \oast A_k.
\end{align*}
In particular, $w(0) := x(q_1A_0) \oast A_1 \oast \ldots \oast A_k$.
For the accept configuration, the group element is
\begin{align*}
	w_{accept} := x(q_0A_0) \oast A_1 \oast \ldots \oast A_k.
\end{align*}
By \cite[Relations $(G5a)$ and $(G1)$]{sim_group}, we can actually reduce these two words to 
\begin{align*}
	w(0) = x(q_1A_0A_1\ldots A_k) \text{ and } w_{accept} = x(q_0A_0A_1\ldots A_k). 
\end{align*}
In particular, $w_{accept}^2 = w(0)^2 = [w(0),w_{accept}] = e$.

Suppose we start with a $k$-glass Minsky machine $\MM_0$, and add states
and commands to form a $k$-glass Minsky machine $\MM_1$. Then the generating
set $S(\MM_0) \subseteq S(\MM_1)$. As previously mentioned, all the common
relations for $R(\MM_0)$ belong to $R(\MM_1)$, and since all the commands
of $\MM_0$ are commands of $\MM_1$, the same is true for the command relations.
Hence, we also have $R(\MM_0) \subseteq R(\MM_1)$. This leads immediately to the
following lemma:

\begin{lemma}\label{lem:hom}
    Let $\MM_0$ and $\MM_1$ be $k$-glass Minsky machines with state sets
    $[N_0]$ and $[N_1]$ respectively, where $N_0 \leq N_1$. If every command
    of $\MM_0$ is a command of $\MM_1$, then there is a homomorphism
    $G(\MM_0) \to G(\MM_1)$ sending $x \mapsto x$ for all $x \in S(\MM_0)
    \subseteq S(\MM_1)$. In particular, this homomorphism sends the elements
    $w(0)$ and $w_{accept}$ for $G(\MM_0)$ to the same elements for $G(\MM_1)$.
\end{lemma}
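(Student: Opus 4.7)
The plan is to verify that both the generating set and the defining relations of $G(\MM_0)$ sit inside those of $G(\MM_1)$, and then apply the universal property of group presentations to obtain the desired homomorphism.

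First, I would check the inclusion $S(\MM_0) \subseteq S(\MM_1)$ at the level of generators. Since both machines have $k$ glasses, the sets $L_1 = \{A_i : 0 \leq i \leq k\}$ and $L_2 = \{a_i, a_i', \tilde{a}_i, \tilde{a}_i' : 1 \leq i \leq k\}$ are literally the same for $\MM_0$ and $\MM_1$. For $L_0$, the generators are indexed by state labels $i \in [N_0+1]$ and $[N_1+1]$ respectively, so the assumption $N_0 \leq N_1$ gives an inclusion of the $L_0$-generators of $\MM_0$ into those of $\MM_1$. Hence $S(\MM_0) \subseteq S(\MM_1)$.

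Next I would check that every defining relation of $R(\MM_0)$ is a defining relation of $R(\MM_1)$. The common relations (G1)--(G7) are, as noted in the excerpt, specified in a uniform way depending only on $k$ and the state set, and they are preserved when the state set is enlarged: any such relation for $\MM_0$ is also listed as a common relation for $\MM_1$. For the command relations (G8), the assumption that every command of $\MM_0$ is a command of $\MM_1$ immediately gives that each command relation for $\MM_0$ appears in $R(\MM_1)$. Thus $R(\MM_0) \subseteq R(\MM_1)$.

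With both containments in hand, the universal property of the presentation $G(\MM_0) = \ip{S(\MM_0) : R(\MM_0)}$ applies: the map $S(\MM_0) \to G(\MM_1)$ sending each generator to itself extends to a homomorphism $\calF(S(\MM_0)) \to G(\MM_1)$, and since every $r \in R(\MM_0)$ is already trivial in $G(\MM_1)$, this descends to a homomorphism $G(\MM_0) \to G(\MM_1)$ with the stated property on generators. For the final assertion, the words $w(0) = x(q_1 A_0) \oast A_1 \oast \cdots \oast A_k$ and $w_{accept} = x(q_0 A_0) \oast A_1 \oast \cdots \oast A_k$ are written purely in terms of generators from $S(\MM_0)$, so the homomorphism sends them to the corresponding elements of $G(\MM_1)$. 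There is no real obstacle here; the proof is essentially bookkeeping, and the only subtlety is confirming that the common relations listed in \cite[Section 4.1]{sim_group} really are stable under enlarging the state set, which is precisely the property highlighted in the paragraph preceding the lemma.
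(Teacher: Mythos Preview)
Your proposal is correct and follows exactly the approach the paper takes: the paper's proof is the paragraph immediately preceding the lemma, which establishes $S(\MM_0) \subseteq S(\MM_1)$ and $R(\MM_0) \subseteq R(\MM_1)$ and then declares the lemma immediate. Your write-up simply spells out the universal-property step and the handling of $w(0)$ and $w_{accept}$ more explicitly.
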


We finish this section by stating how KMS groups connect to correlations:

\begin{lemma}[Lemma 42 of \cite{slofstra2017}]
	\label{lm:kms_group}
    Let $\MM$ be a Minsky machine, and let $S(\MM) = L_0 \cup L_1 \cup L_2$ be the
    partition described above. Then the KMS group $G(\MM)$ has a
    presentation as an $m \times n \times \ell$ extended homogeneous
    linear-plus-conjugacy group $E\Gamma(A,C_0,C_1,L)$, in which:
    \begin{enumerate}[(a)]
        \item the generators in $L_0$ and $L_1$ and the elements
            $x(q_1 A_0) \oast a_1$ and $w(0)w_{accept}$ all
             belong to the generating set $\{x_0,\ldots,x_{n-1}\}$, and
        \item the generators in $L_2$ belong to $\{y_0,\ldots,y_{\ell-1}\}$.
    \end{enumerate}
\end{lemma}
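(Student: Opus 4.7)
The strategy is to take the finite presentation $(S(\MM), R(\MM))$ of $G(\MM)$ and massage it, through Tietze transformations, into a presentation of the shape required by \cref{def:ex_homo_lcg}. Crucially, (i) $L_0 \cup L_1$ already consists of involutions, so they want to be $x$-generators; (ii) $L_2$ has no order constraints, so it wants to be the $y$-generators; and (iii) the command relations (G8), which are the only relations involving conjugation by $L_2$-elements, need to be unfolded step-by-step with auxiliary involutions introduced to name intermediate subwords.

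First, I would take all elements of $L_0 \cup L_1$ as an initial batch of $x$-generators. The common relations (G1)--(G3) that declare them involutions and pairwise commuting translate directly: order-2 conditions $x_j^2=e$ give rows of $A$ with a single non-zero entry, and each pairwise commutation among $L_0 \cup L_1$ becomes a $C_0$-triple $(i,j,j)$ recording $x_ix_jx_i=x_j$. Any product-form common relation becomes a row of $A$ of appropriate length together with enough $C_0$-triples to force its factors to commute pairwise. Analogously, I would take $L_2 = \{a_i, a_i', \tilde a_i, \tilde a_i'\}$ as initial $y$-generators, encoding every $L_2$-pair commutation as $y_i^{-1} y_j y_i = y_j$ by setting $L_{ij}=1$.

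Next, I would translate the command relations (G8). The operation $f \oast A_j = [f, A_j]$ is simply commutation with the involution $A_j$, so once $f$ is itself an $x$-generator this is a $C_0$-triple $(A_j, f, f)$. The more delicate operation
\[
 f \oast a_j = f^{-1}\, f^{a_j}\, (f^{-1})^{a_j^{-1}}\, f^{a_j'^{-1}}
\]
I would unfold inductively. Starting from an $x$-generator $f$, introduce auxiliary $x$-generators (each equipped with a single-entry row of $A$ enforcing order~2) to name each of the conjugates $f^{a_j}$, $f^{a_j^{-1}}$, $f^{a_j'^{-1}}$, where each naming is a $C_1$-triple $y_i^{-1} x_{k} y_i = x_{k'}$. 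Then name the four-fold product of these involutions as a further new $x$-generator, using one extra row of $A$ together with $C_0$-triples to ensure the factors pairwise commute. Iterating this expansion across each chain of $\oast$'s in a command relation, every command ultimately becomes a single linear equation among involutions recorded in $A$, with all conjugations recorded in $C_1$. Along the way, both of the distinguished elements $x(q_1A_0)\oast a_1$ and $w(0)w_{accept} = x(q_1A_0A_1\cdots A_k)\cdot x(q_0A_0A_1\cdots A_k)$ arise as intermediate subwords and so appear as dedicated $x$-generators in the final list, securing part~(a) of the conclusion.

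The main obstacle is essentially verification rather than construction: one must confirm that the resulting extended homogeneous-linear-plus-conjugacy group is genuinely isomorphic to $G(\MM)$, not some quotient or extension of it. This reduces to checking that each step of the construction is a legitimate Tietze move --- every auxiliary $x$-generator is equated (via its defining $C_0$- or $C_1$-triple) to an explicit word in the original generators of $S(\MM)$, and every relation of $R(\MM)$ is derivable in the new presentation, while no new relation is introduced among the original generators. Since this is routine once the expansion scheme above is in place, the lemma follows.
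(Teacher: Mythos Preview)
Your plan is essentially the content of Lemma 42 of \cite{slofstra2017}, which is precisely what the paper cites rather than re-deriving. The paper's own proof is much shorter: it invokes that lemma to get the presentation with $L_0 \cup L_1 \subseteq \{x_0,\ldots,x_{n'-1}\}$, $L_2 \subseteq \{y_0,\ldots,y_{\ell-1}\}$, and $x(q_1A_0)\oast a_1$ already among the $x$-generators (because it lies in the abelian group $\mathcal{N}(L_0,S)$, which is what the construction in \cite{slofstra2017} processes), and then separately adjoins one further generator $x_{n'+1}$ together with the linear relation $x_{n'+1}\,w(0)\,w_{accept}=e$ to place $w(0)w_{accept}$ in the generating set.

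There is one concrete gap in your sketch. You assert that $x(q_1A_0)\oast a_1$ and $w(0)w_{accept}$ ``arise as intermediate subwords'' while unfolding the command relations, so they come for free. This is not justified. The command relations (G8) involve $x(q_iA_0)\oast a_{j_1}\oast\cdots$ for the specific states $i$ and glasses $j_1,\ldots$ appearing in commands of $\MM$; there is no reason the particular word $x(q_1A_0)\oast a_1$ must occur unless $\MM$ happens to have a command producing it. More seriously, $w(0)w_{accept}=x(q_1A_0A_1\cdots A_k)\cdot x(q_0A_0A_1\cdots A_k)$ is a product of two $L_0$-generators and does not appear in any relation (G1)--(G8) at all, so it cannot surface from your unfolding scheme. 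Both elements must be added explicitly: introduce a fresh $x$-generator and a single row of $A$ equating it to the desired product (this is legitimate since $w(0)$ and $w_{accept}$ commute, and $x(q_1A_0)\oast a_1$ lies in an abelian subgroup). The paper does exactly this for $w(0)w_{accept}$. With that correction your argument goes through.
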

\begin{proof}
    By Lemma 42 of \cite{slofstra2017}, $G(\MM)$ has a presentation as an
    $m' \times n' \times \ell$ extended homogeneous linear-plus-conjugacy group,
    in which $x(q_1 A_0) \oast a_1$ belongs to the generating set
    $\{x_1,\ldots,x_{n'}\}$ (in the notation of \cite{slofstra2017}, $x(q_1 A_0)
    \oast a_1$ belongs to $\calN(L_0, S)$, which is abelian 
    \cite[Theorem 40, part (c)]{slofstra2017}). Looking at the proof of Lemma
    42, this presentation is constructed by starting with the generators for
    $G(\MM)$, and then adding additional generators so that all the relations
    for $G(\MM)$ can be rewritten as either linear or conjugacy relations. The
    generators from $L_0$ and $L_1$ end up in the generating set
    $\{x_1,\ldots,x_{n'}\}$ for the constructed presentation, and the generators
    $L_2$ end up in the generating set $\{y_1,\ldots,y_{\ell}\}$. 

    The element $w(0) w_{accept}$ also belongs to $\calN(L_0, S)$, and hence we can also
    add additional generators and relations to include this in the generating set
    of our presentation,
    following the produre detailed in the proof of Lemma 42. Alternatively, since
    $w(0) = x(q_1A_0A_1\ldots A_k)$ and $w_{accept} = x(q_0A_0A_1 \ldots A_k)$ are already
    generators in the presentation, and $w(0)$ and $w_{accept}$ commute, we can
    also just add one additional generator $x_{n'+1}$, along with the linear
    relation $x_{n'+1} w(0) w_{accept} = e$, to get a presentation with
    $w(0) w_{accept}$ in the generating set as required. 
\end{proof}

\subsection{An extension of the Kharlampovich-Myasnikov-Sapir group}
This section is devoted to proving the following proposition. 
\begin{proposition}\label{prop:exist}
    Let $p(n)$, $n \geq 1$ be an increasing sequence of prime numbers, where the
    function $p : \Z_{\geq 1} \to \Z_{\geq 1}$ is computable, let $X$ be a recursively
    enumerable set of positive integers, and let $r$ be a positive integer which is coprime to
    $p(n)$ for all $n \geq 1$. Then there exists an $m \times n' \times \ell$
    extended homogeneous linear-plus-conjugacy group $H = E
    \Gamma(A,C_0,C_1,L)$ and generators $x \in \{x_0,\ldots,x_{n'-1}\}$ and
    $u,t \in \{y_0,\ldots,y_{\ell-1}\}$ satisfying the following properties:
    \begin{enumerate}[(a)]
        \item $u^{-1} t u = t^r$ in $H$, 
        \item $H/\ip{ t^{p(n)} = e}$ is sofic for all $n \geq 1$, 
        \item $x = e \text{ in } H/\ip{ t^{p(n)} = e}$ if and only if $n \in X$,
            and 
        \item $t$ has order $p(n)$ in $H/\ip{ t^{p(n)} = e}$.
    \end{enumerate}
\end{proposition}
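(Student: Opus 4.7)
The plan is to embed a Minsky machine for $X$ into a KMS-style group, then augment with a Baumslag--Solitar pair $(u,t)$ whose parameter $t$ encodes the input index $n$ via its order $p(n)$.

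First, since $X$ is recursively enumerable and $p$ is computable, the set $\widetilde{X} := \{ p(n) : n \in X \}$ is also recursively enumerable. By \cref{thm:kms_group} combined with \cref{lem:oneglassmore}, I obtain a $4$-glass deterministic Minsky machine $\MM$ that accepts exactly $\widetilde{X}$ and whose non-accepted inputs sit in singleton equivalence classes among input configurations. Let $G := G(\MM)$ be its KMS group. By \cref{thm:kms_group}, $G$ is solvable (hence sofic) and $w(p(n)) = w_{accept}$ in $G$ if and only if $n \in X$. By \cref{lm:kms_group}, $G$ has a presentation as an extended homogeneous linear-plus-conjugacy group $E\Gamma(A, C_0, C_1, L)$ in which $a_1 \in \{y_0,\ldots,y_{\ell-1}\}$ and $w(0) w_{accept} \in \{x_0,\ldots,x_{n'-1}\}$.

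Now I enlarge this presentation to define $H$: add two new $y$-generators $t$ and $u$, extend $L$ with the single entry giving the Baumslag--Solitar relation $u^{-1} t u = t^r$ (fitting the schema $y_i^{-1} y_j y_i = y_j^{L_{ij}}$), and add further conjugacy relations tying $t$ to the coin generators $a_1, a_1', \widetilde{a}_1, \widetilde{a}_1'$ of $G$ (through auxiliary $y$-generators and entries of $L$) so that each of them acquires order dividing $p(n)$ whenever $t$ does. Set $x := w(0) w_{accept}$. Property (a) holds by construction. For property (d), collapse $H$ onto $\langle t : t^{p(n)} \rangle \cong \Z_{p(n)}$ by sending all other generators to $e$, so that the image of $t$ has order exactly $p(n)$. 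For property (b), write $H / \langle t^{p(n)} \rangle$ as an iterated sequence of HNN and $\Z_{p(n)}$-HNN extensions of $G$ over cyclic (hence amenable) subgroups, and apply \cref{lem:ZnHNN} and \cref{lem:ZnHNN2} repeatedly, starting from the solvable group $G$, to preserve soficity at each stage. For property (c), the additional conjugacy relations introduced in the construction ensure that in $H / \langle t^{p(n)} \rangle$ the $\oast a_1$ operation becomes periodic with period dividing $p(n)$, so that $w(p(n)) = w(0)$ in the quotient; therefore $x = w(p(n)) w_{accept}$ in the quotient, which vanishes if and only if $\MM$ accepts $p(n)$, if and only if $n \in X$.

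The technical heart of the proof, and the main obstacle, lies in choosing the auxiliary conjugacy relations tying $t$ to the $a_1$-family so that the periodicity of $\oast a_1$ in the quotient actually closes up after $p(n)$ steps. Because $f \oast a_1 = f^{-1} f^{a_1}(f^{-1})^{a_1^{-1}} f^{a_1'^{-1}}$ depends on both $a_1$ and $a_1'$, a naive identification is insufficient, and we need a coupled order structure on the full $L_2$-family of $\MM$ that propagates through the $\oast$-iteration. Simultaneously, these relations must be weak enough to preserve the $fa$-embedding behaviour of $G$ inside $H$ (so that $w(p(n)) w_{accept}$ remains nontrivial in the quotient when $n \notin X$) and to keep the iterated HNN description from collapsing. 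The computability of $p$ is essential in the first step, but the entire construction of $H$, $x$, $u$, $t$ is independent of $n$, which is precisely what is needed for the downstream undecidability applications.
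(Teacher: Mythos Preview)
Your overall architecture matches the paper's --- take a KMS group for the Minsky machine recognizing $\{p(n):n\in X\}$, adjoin $t$ encoding the input and $u$ giving the Baumslag--Solitar relation --- but the proposal has genuine gaps at exactly the points you flag as obstacles.

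First, the coupling of $t$ to the $\oast a_1$ operation is not achieved by giving the $a_1$-family finite order. The paper's key relations are instead
\[
t^{-1} x(q_1A_0)\, t = x(q_1A_0)\oast a_1, \qquad [t,a_1]=[t,a_1']=e,
\]
so that conjugation by $t$ \emph{implements} one step of $\oast a_1$ on $x(q_1A_0)$. Then $t^{p(n)}=e$ directly forces $x(q_1A_0)\oast a_1^{\oast p(n)}=x(q_1A_0)$, and hence $w(p(n))=w(0)$ in the quotient. The first relation is of type $y_i^{-1}x_j y_i = x_k$ (once $x(q_1A_0)\oast a_1$ is itself an $x$-generator, as \cref{lm:kms_group} arranges), so it fits the linear-plus-conjugacy schema. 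Trying instead to impose finite order on the $L_2$-generators does not produce periodicity of $\oast a_1$, for the reason you yourself identify.

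Second, your argument for (d) is incorrect: the map collapsing all generators except $t$ to $e$ does not respect $u^{-1}tu=t^r$, since in $\Z_{p(n)}$ this would force $t=t^r$. The paper instead shows that $\overline{G(\pMM)}/\langle t^{p(n)}\rangle$ is a genuine $\Z_{p(n)}$-HNN extension of $G_{p(n)}(\pMM):=\langle G(\pMM): x(q_1A_0)\oast a_1^{\oast p(n)}=x(q_1A_0)\rangle$ over the subgroup $K_{p(n)}=\langle x(q_1A_0),a_1,a_1'\rangle$ (which is solvable, not cyclic), and that $H/\langle t^{p(n)}\rangle$ is then an ordinary HNN extension of this over $\langle t\rangle\cong\Z_{p(n)}$. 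Both (b) and (d) follow from this layered structure via \cref{lem:ZnHNN,lem:ZnHNN2}; proving that the conjugation map on $K_{p(n)}$ has order exactly $p(n)$ requires a separate argument (\cref{lm:hnn-hom,lm:hnn-ext}).

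Third, the ``only if'' direction of (c) needs real work: one must show the extra relation $x(q_1A_0)\oast a_1^{\oast p(n)}=x(q_1A_0)$ does not accidentally identify $w(0)$ with $w_{accept}$ when $n\notin X$. The paper does this by exhibiting a homomorphism $G_{p(n)}(\pMM)\to G(\pMM^{p(n)})$, where $\pMM^{p(n)}$ is a non-deterministic machine obtained by adjoining a $p(n)$-cycle of ``add a coin'' states, and then invoking the singleton-equivalence-class property from \cref{lem:oneglassmore} to control $\equiv_{\pMM^{p(n)}}$. This is where that lemma is actually used; merely citing it at the outset does not suffice.
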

The function of the different generators $x$, $t$, and $u$ will be explained in the
next section.
We prove \cref{prop:exist} in a number of steps, starting with:
\begin{lemma}\label{lm:primeenc}
    Let $p(n)$, $n \geq 1$ be an increasing sequence of prime numbers, where the
    function $p : \Z_{\geq 1} \to \Z_{\geq 1}$ is computable, and let $X$ be a
    recursively enumerable set of positive integers. Then the set
	\begin{align*}
        P_{X} := \set{p(n) \;|\; n \in X}
	\end{align*}
    is recursively enumerable.
\end{lemma}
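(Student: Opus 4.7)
The plan is to show that $P_X$ is recursively enumerable by exhibiting a semi-decider for it, using both of the given computability hypotheses on $p$ and $X$. Since $X$ is r.e., fix a Turing machine $M_X$ that halts on input $n$ if and only if $n \in X$. Since $p$ is computable, we have a Turing machine $M_p$ that on input $n$ outputs $p(n)$.

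I would describe the following algorithm $A$ on input $m \in \Z_{\geq 1}$: compute $p(1), p(2), p(3), \ldots$ in turn using $M_p$, stopping as soon as we find either an $n$ with $p(n) = m$, or an $n$ with $p(n) > m$. Because $p$ is strictly increasing, one of these conditions must eventually occur, and the search terminates in finitely many steps. In the second case, or if none of the computed values equals $m$, reject (enter a non-halting loop). In the first case, with $n$ the unique index satisfying $p(n) = m$, simulate $M_X$ on $n$ and halt if and only if $M_X$ halts on $n$. Then $A$ halts on $m$ if and only if $m = p(n)$ for some $n \in X$, i.e., if and only if $m \in P_X$, which certifies that $P_X$ is recursively enumerable.

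There is no significant obstacle here: the argument is a standard use of the fact that the image of an r.e.\ set under a computable function is r.e., with the added convenience that monotonicity of $p$ makes the preimage step terminate without dovetailing. If one prefers to avoid monotonicity entirely, one could instead produce an enumerator for $P_X$ by dovetailing: run $M_X$ on $1, 2, 3, \ldots$ in parallel for increasing numbers of steps, and every time $M_X$ halts on some input $n$, output $p(n)$ using $M_p$. Either formulation suffices to conclude the lemma.
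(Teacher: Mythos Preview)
Your proposal is correct and takes essentially the same approach as the paper: use the computability and monotonicity of $p$ to decide whether $m$ lies in the range of $p$ (and find its preimage if so), then run the semi-decider for $X$ on that preimage. The only cosmetic difference is that the paper bounds the search by computing $p(n)$ for all $n \leq m$ (using that an increasing sequence of positive integers satisfies $p(k) \geq k$), whereas you stop as soon as $p(n) > m$; both terminate for the same reason.
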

\begin{proof}
    Let $A_X$ be a Turing machine that accepts $x \in \N$ if and only if $x
    \in X$. Consider the Turing machine which does the following: Given $q \in
    \N$, it computes $p(n)$ for all $n \leq q$. If $q = p(k)$ 
    for some $k \leq q$, then it runs $A_X$ on $k$ and accept if $A_X$ accepts.
    If $q \neq p(k)$ for all $k \leq q$, then it runs forever without accepting. 
    Since $p(n)$ is an increasing sequence, if $q = p(k)$ for some $k
    \in \Z_{\geq 1}$ then $k \leq q$. As a result, this algorithm accepts $q$
    if and only if $q = p(k)$ for $k \in X$.
\end{proof}

Let $\pMM_0$ be a $3$-glass deterministic Minsky machine that accepts $q \in
\N$ if and only if $q \in P_{X}$ (where $P_X$ is the set from
\cref{lm:primeenc}), and let $\pMM$ be a $4$-glass deterministic Minksy machine satisfying
parts (a) and (b) of \cref{lem:oneglassmore} with respect to $\MM = \pMM_0$.
Let $G(\pMM)
=\ip{S(\pMM) : R(\pMM)}$ be the KMS group of $\pMM$. Although $w(p(n)) =
w_{accept}$ in $G(\pMM)$ if and only if $n \in X$, for the group $H$ in
\cref{prop:exist} we need elements $x$ and $t$ which do not depend on $n$, such
that $x=e$ in $H / \ip{ t^{p(n)}=e}$ if and only if $n \in X$. To get these
elements, we'll start with the group
\begin{align}\label{eq:G}
	\overline{G(\pMM)} := \ip{G(\pMM), t\ :\ [t, a_1] = [t, a_1'] = e, t^{-1} x(q_1A_0) t = x(q_1A_0) \oast a_1},
\end{align} 
where $a_1$, $a_1'$, and $x(q_1 A_0)$ are the generators of $G(\pMM)$ described above. We'll then
construct $H$ at the end of the subsection by adding $u$ to $\overline{G(\pMM)}$. 
For our construction, we also want to consider two other groups
\begin{align*}
	G_{p(n)}(\pMM) := &\ip{ G(\pMM) \ :\ x(q_1A_0) \oast a_1^{\oast p(n)}= x(q_1A_0)} \text{ and} \\
    	\overline{G_{p(n)}(\pMM)} := &\ip{ \overline{G(\pMM)} \ :\ x(q_1A_0) \oast a_1^{\oast p(n)}
    = x(q_1A_0), t^{p(n)} = e},
\end{align*}
defined for every $n \geq 1$. The group $\overline{G_{p(n)}(\pMM)}$ is the 
quotient of $\overline{G(\pMM)}$ by $\ip{t^{p(n)}=e}$. To show this, we need
to explain the definition of $\overline{G(\pMM)}$ a little more:
\begin{lemma}\label{lm:hnn-hom}
    Let $\MM$ be a Minsky machine, and let $K$ be the subgroup of $G(\MM)$
    generated by $x(q_1 A_0)$, $a_1$, and $a_1'$. Then there is a homomorphism
    $\alpha : K \to K$ sending $a_1 \mapsto a_1$, $a_1' \mapsto a_1'$, and
    $x(q_1 A_0) \mapsto x(q_1 A_0) \oast a_1$.
\end{lemma}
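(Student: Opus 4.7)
My plan is to construct $\alpha$ via the universal property. Let $F$ be the free group on three abstract generators $X, A, A'$, and let $\pi : F \to K$ be the surjection with $\pi(X) = x(q_1 A_0)$, $\pi(A) = a_1$, $\pi(A') = a_1'$. Define $\beta : F \to G(\MM)$ by $\beta(X) = x(q_1 A_0) \oast a_1$, $\beta(A) = a_1$, $\beta(A') = a_1'$, which is automatically well-defined on the free group and whose image lies in $K$. To produce $\alpha$, it suffices to show $\ker(\pi) \subseteq \ker(\beta)$; then $\alpha$ is well-defined on cosets and is the desired endomorphism of $K$.

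The substance of the proof is therefore to enumerate the defining relations of $K$ as a subgroup of $G(\MM)$, and verify each is preserved by $\beta$. From the presentation of $G(\MM)$ in \cite[Section 4.1]{sim_group}, the common relations yield $X^2 = e$ (from the order-two relations in $L_0$) and $[A, A'] = e$ (from the abelian nature of $L_2$). I would argue that these, together with their consequences, generate all of $\ker(\pi)$: the command relations in $R(\pMM)$ always involve multiple state generators from $L_0$ or multiple glass tokens from $L_2$, so no command relation reduces to a word purely in $\{x(q_1 A_0), a_1, a_1'\}$. This structural step likely follows from the analysis in \cite[Section 4]{sim_group}, in particular from the description of the subgroup $\calN(L_0, S)$.

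The verification that $\beta$ preserves these relations is then short. For $[A, A'] = e$ this is immediate, since $\beta$ fixes both generators. The key check is $(x(q_1 A_0) \oast a_1)^2 = e$ in $G(\MM)$. Here I would invoke the structural fact from \cite[Theorem 40(c)]{slofstra2017} that $x(q_1 A_0) \oast a_1$ lies in $\calN(L_0, S)$, which is abelian and generated by products of conjugates of elements of $L_0$; since every element of $L_0$ has order two in $G(\MM)$, and $\calN(L_0, S)$ is abelian, every element of $\calN(L_0, S)$ has order dividing two, so in particular $(x(q_1 A_0) \oast a_1)^2 = e$.

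The main obstacle is the structural claim that no hidden relations among $\{x(q_1 A_0), a_1, a_1'\}$ arise in $K$ beyond the two obvious ones — a naive direct computation of $(x(q_1 A_0) \oast a_1)^2$ from the expansion $f \oast a_1 = f^{-1} f^{a_1} (f^{-1})^{a_1^{-1}} f^{a_1'^{-1}}$ does \emph{not} reduce to the identity in a free group modulo just $f^2 = e$ and $[a_1, a_1'] = e$, so the vanishing really does depend on the specific global structure of $\calN(L_0, S)$ in $G(\pMM)$. Hence both the identification of $\ker(\pi)$ and the order-two property rely substantively on the prior KMS structure theorems rather than on any elementary manipulation.
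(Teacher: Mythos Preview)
Your approach has a genuine gap: the claimed presentation of $K$ is incorrect. You assert that $\ker(\pi)$ is normally generated by $X^2$ and $[A,A']$, i.e.\ that $K \cong \Z_2 * (\Z \times \Z)$. But this is false. For example, by \cite[Lemma~4.1]{sim_group} the normal closure $\langle x(q_1 A_0)\rangle^K$ is abelian in $G(\MM)$, so $[x(q_1 A_0),\, a_1 x(q_1 A_0) a_1^{-1}] = e$ in $K$; yet in $\Z_2 * (\Z\times\Z)$ the element $[X, AXA^{-1}]$ is nontrivial (the words $XAXA^{-1}$ and $AXA^{-1}X$ are distinct reduced forms in the free product). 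So there are infinitely many relations in $\ker(\pi)$ beyond the two you identify, and verifying that $\beta$ kills just $X^2$ and $[A,A']$ does not suffice to define $\alpha$. You even notice a symptom of this yourself in the last paragraph --- that $(x(q_1 A_0)\oast a_1)^2 = e$ cannot be deduced from $X^2 = e$ and $[A,A'] = e$ alone --- but the correct conclusion is that your enumeration of $\ker(\pi)$ is wrong, not merely that the order-two check needs outside input.

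The paper avoids this problem by not attempting to present $K$ at all. Instead it establishes the structural decomposition $K \cong \langle x(q_1 A_0)\rangle^K \rtimes (\Z\times\Z)$ (via a retraction $G(\MM)\to\Z\times\Z$ sending $a_1,a_1'$ to generators and everything else to $e$), uses the abelianness of $\langle x(q_1 A_0)\rangle^K$ to see that $f \mapsto f\oast a_1$ is a homomorphism on that normal subgroup, checks equivariance of this map under the $\Z\times\Z$-action, and then assembles $\alpha$ from the semidirect product universal property. This sidesteps the need to know any explicit relations of $K$.
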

\begin{proof}
    As we've already noted, the generators of $K$ satisfy the relations
    $x(q_1A_0)^2 = [a_1, a_1'] = e$.  Let 
	\begin{align*}
		\psi: \calF( S(\MM) ) \to \ip{ b_1, b_2 : [b_1, b_2] = e} = \Z \times \Z
	\end{align*}
    be the homomorphism defined by $\psi(a_1) = b_1$, $\psi(a_1')=b_2$, and
    $\psi(s) = e$ for all $s \in S(\MM) \setminus \set{a_1,a_1'}$. Checking
    the relations in \cite{sim_group}, we see that $\psi(r)=e$ for all $r \in R(\MM)$. 
    Hence $\psi$ descends to a homomorphism $G(\MM) \to \Z \times \Z$, and this
    homomorphism restricts to a surjective homomorphism $\psi : K \to \Z \times \Z$. Let
    $\ip{ x(q_1 A_0) }^{K}$ be the normal subgroup generated by $x(q_1 A_0)$ in $K$. 
    Since $\psi(x(q_1 A_0)) = e$, this normal subgroup is contained in the kernel of
    $\psi$, and hence there is a surjective homomorphism $K / \ip{ x(q_1 A_0) }^{K} \to
    \Z \times \Z$. Since $[a_1,a_1'] = e$, we conclude that there is also a homomorphism
    $\Z \times \Z \to K$ sending $b_1 \mapsto a_1$ and $b_2 \mapsto a_1'$. Hence
    $K / \ip{ x(q_1 A_0) }^{K} \iso \Z \times \Z$, and thus $\ip{x(q_1 A_0)}^{K}$ is
    the kernel of $\psi$ in $K$. We conclude that $K \iso \ip{ x(q_1 A_0)}^{K} \rtimes (\Z \times \Z)$,
    and in particular every element of $K$ can be written uniquely as $g a_1^{n} (a_1')^{m}$
    for some $g \in \ip{ x(q_1 A_0) }^{K}$ and $n, m \in \Z$.

    By \cite[Lemma 4.1]{sim_group}, $\ip{ x(q_1 A_0) }^{K}$ is abelian. Hence the functions
    $\ip{ x(q_1 A_0) }^{K} \to \ip{ x(q_1 A_0) }^{K}$ sending $f \mapsto f^{-1}$, $f \mapsto
    f^{a_1}$, $f \mapsto (f^{-1})^{a_1}$, and $f \mapsto f^{(a_1')^{-1}}$ are all homomorphisms,
    and so we conclude that
    \begin{equation*}
        \ip{ x(q_1 A_0) }^{K} \to \ip{ x(q_1 A_0) }^{K} : f \mapsto f \oast a_1
    \end{equation*}
    is a homomorphism. Using the relation $[a_1,a_1'] = e$ again, we also see that
    \begin{equation*}
        f^{a_1^n(a_1')^m} \oast a_1 = (f \oast a_1)^{a_1^n(a_1')^m} 
    \end{equation*}
    for all $n, m \in \Z$. Using the fact that $K \iso \ip{x(q_1 A_0)}^{K} \rtimes (\Z \times \Z)$,
    we see that there is a homomorphism $\alpha : K \to K$ sending $a_1 \mapsto a_1$,
    $a_1' \mapsto a_1'$, and $f \in \ip{x(q_1 A_0)}^{K}$ to $f \oast a_1$ as desired.
\end{proof}
Let $K$ and $\alpha$ be the group and homomorphism from \cref{lm:hnn-hom}, with
$\MM = \pMM$.  The presentation of $\overline{G(\pMM)}$ states that $t^{-1} f t
= \alpha(f)$ for $f \in \{a_1,a_1',x(q_1 A_0)\}$, so this identity holds for
all $f \in K$. In other words, $\overline{G(\pMM)}$ is analogous to the HNN-extension
of $G(\pMM)$ by $\alpha$, although strictly speaking we do not know if
$\overline{G(\pMM)}$ is an HNN extension, since we do not know if $\alpha$ is
injective. This is enough to show:
\begin{corollary}
    \label{cor:iso}
    $\overline{G(\pMM)} / \ip{t^{p(n)}=e} \cong \overline{G_{p(n)}(\pMM)}$. 
\end{corollary}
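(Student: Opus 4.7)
The plan is to show that both sides are presented as quotients of $\overline{G(\pMM)}$ by the same set of relations. By definition, $\overline{G_{p(n)}(\pMM)}$ is $\overline{G(\pMM)}$ with the two extra relations $t^{p(n)} = e$ and $x(q_1A_0) \oast a_1^{\oast p(n)} = x(q_1A_0)$ imposed, whereas $\overline{G(\pMM)}/\ip{t^{p(n)}=e}$ imposes only the first. So the universal property immediately gives a surjection $\overline{G(\pMM)}/\ip{t^{p(n)}=e} \twoheadrightarrow \overline{G_{p(n)}(\pMM)}$ that is the identity on generators, and the corollary reduces to showing that the second relation is a consequence of the first in $\overline{G(\pMM)}$; this yields the inverse map.

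To prove redundancy, I would bring in the homomorphism $\alpha : K \to K$ supplied by \cref{lm:hnn-hom}, where $K \subseteq G(\pMM)$ is the subgroup generated by $a_1, a_1', x(q_1A_0)$, with $\alpha$ fixing $a_1$ and $a_1'$ and sending $x(q_1 A_0) \mapsto x(q_1 A_0) \oast a_1$. The defining conjugation relation of $\overline{G(\pMM)}$, combined with $[t,a_1] = [t,a_1'] = e$, gives $t^{-1} f t = \alpha(f)$ on the generators of $K$, hence for every $f \in K$. Iterating produces
\begin{equation*}
    t^{-k} x(q_1 A_0) t^k = \alpha^k(x(q_1 A_0))
\end{equation*}
in $\overline{G(\pMM)}$ for every $k \geq 0$.

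The remaining step is a short induction identifying $\alpha^k(x(q_1 A_0))$ with $x(q_1 A_0) \oast a_1^{\oast k}$. Because $\alpha$ is a homomorphism fixing both $a_1$ and $a_1'$, and $f \oast a_1$ is built out of $f$ using only inversion and conjugation by $a_1^{\pm 1}$ and $(a_1')^{\pm 1}$, direct unwinding of the definition gives $\alpha(f \oast a_1) = \alpha(f) \oast a_1$; then $\alpha^{k}(x(q_1 A_0)) = \alpha^{k-1}(x(q_1 A_0) \oast a_1) = \alpha^{k-1}(x(q_1 A_0)) \oast a_1$, and the inductive hypothesis closes the step. Specializing to $k = p(n)$ in $\overline{G(\pMM)}/\ip{t^{p(n)}=e}$ collapses the left-hand side of the displayed equation to $x(q_1 A_0)$, giving $x(q_1A_0) \oast a_1^{\oast p(n)} = x(q_1A_0)$ as required.

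No conceptual obstacle is expected. The only thing that needs a little care is keeping track of the left-nested convention for $a_1^{\oast k}$ when running the induction, but this is purely symbolic. Composing the two generator-preserving surjections gives the claimed isomorphism.
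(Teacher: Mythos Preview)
Your proposal is correct and follows essentially the same approach as the paper: both arguments reduce the corollary to showing that the relation $x(q_1A_0)\oast a_1^{\oast p(n)}=x(q_1A_0)$ is forced by $t^{p(n)}=e$, and both establish this via the identity $t^{-k}x(q_1A_0)t^k=\alpha^k(x(q_1A_0))=x(q_1A_0)\oast a_1^{\oast k}$ coming from \cref{lm:hnn-hom}. The only cosmetic difference is that the paper derives $\alpha^k(f)=f\oast a_1^{\oast k}$ by invoking the description of $\alpha$ on $\ip{x(q_1A_0)}^K$ from the proof of \cref{lm:hnn-hom}, whereas you obtain it from the cleaner observation $\alpha(f\oast a_1)=\alpha(f)\oast a_1$, valid for any homomorphism fixing $a_1,a_1'$.
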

\begin{proof}
    Let $K$ and $\alpha$ be the group and homomorphism from \cref{lm:hnn-hom},
    with $\MM = \pMM$. Since $t^{-1} f t = \alpha(f)$ for all $f \in K$,
    $t^{-k} f t^k = \alpha^k(f)$ for all $k \geq 0$. If
    $f \in \ip{x(q_1 A_0)}^{K}$, then the proof of Lemma \ref{lm:hnn-hom} shows
    that $\alpha(f) = f \oast a_1 \in  \ip{x(q_1 A_0)}^{K}$, so 
    $\alpha^k(f) = f \oast a_1^{\oast k}$ for all $k \geq 0$. Thus, in $\overline{G(\pMM)} /
    \ip{t^{p(n)}=e}$ we have
    \begin{equation*}
        x(q_1 A_0) \oast a_1^{\oast p(n)} = t^{-p(n)} x(q_1 A_0) t^{p(n)}
        = x(q_1 A_0).
    \end{equation*}
    So $\overline{G(\pMM)} / \ip{t^{p(n)}=e} = \ip{\overline{G(\pMM)} : t^{p(n)}=e,
             x(q_1 A_0) \oast a_1^{\oast p(n)} = x(q_1 A_0)} = \overline{G_{p(n)}(\pMM)}$.
\end{proof}

Although we can't show that $\overline{G(\pMM)}$ is an HNN-extension, we can
show:
\begin{lemma}
	\label{lm:hnn-ext}
	$\overline{G_{p(n)}(\pMM)}$ is a $\Z_{p(n)}$-HNN-extension of $G_{p(n)}(\pMM)$
    over the subgroup $K_{p(n)}$ generated by $x(q_1 A_0)$, $a_1$, and $a_1'$ in 
    $G_{p(n)}(\pMM)$. In addition, $K_{p(n)}$ is amenable.
\end{lemma}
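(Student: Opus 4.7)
The plan is to establish the two assertions separately, starting with the amenability of $K_{p(n)}$. The structural analysis inside the proof of \cref{lm:hnn-hom} shows that $K \iso \ip{x(q_1 A_0)}^K \rtimes (\Z \times \Z)$, where the normal factor $\ip{x(q_1A_0)}^K$ is abelian. Hence $K$ is metabelian, in particular solvable, and therefore amenable. Since $K_{p(n)}$ is by definition the image of $K$ under the quotient map $G(\pMM) \to G_{p(n)}(\pMM)$, it is a homomorphic image of $K$, and so $K_{p(n)}$ is also amenable.

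For the HNN structure, I would first descend $\alpha$ from \cref{lm:hnn-hom} to an automorphism $\alpha_{p(n)} : K_{p(n)} \to K_{p(n)}$, and then match presentations. Recall that $\alpha$ fixes $a_1$ and $a_1'$ and acts on $\ip{x(q_1A_0)}^K$ by $f \mapsto f \oast a_1$, so that $\alpha^k$ acts on that abelian subgroup by $f \mapsto f \oast a_1^{\oast k}$. For any $h \in \Z \times \Z = \ip{a_1, a_1'}$, a direct calculation from the definition of $\oast$ gives $(h y h^{-1}) \oast a_1^{\oast k} = h (y \oast a_1^{\oast k}) h^{-1}$, since every factor in the $\oast$-expression conjugates only by $a_1$ or $a_1'$, both of which commute with $h$. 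Combining this with $x(q_1A_0)^2 = e$ and the fact that the abelian group $\ip{x(q_1A_0)}^K$ is generated by $\{h x(q_1A_0) h^{-1} : h \in \Z \times \Z\}$, the imposed relation $x(q_1A_0) \oast a_1^{\oast p(n)} = x(q_1A_0)$ in $G_{p(n)}(\pMM)$ forces $\alpha^{p(n)}$ to act as the identity on the image of $\ip{x(q_1A_0)}^K$ in $K_{p(n)}$, and hence (since $\alpha$ already fixes $a_1, a_1'$) on all of $K_{p(n)}$. This simultaneously shows that $\alpha$ preserves the kernel of $K \to K_{p(n)}$, so that $\alpha_{p(n)}$ is well defined, and that $\alpha_{p(n)}^{p(n)} = \mathrm{id}$; in particular $\alpha_{p(n)}$ is an automorphism.

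With $\alpha_{p(n)}$ in hand, the identification of $\overline{G_{p(n)}(\pMM)}$ as a $\Z_{p(n)}$-HNN extension is a presentation check. By construction $\overline{G_{p(n)}(\pMM)}$ is obtained from $G_{p(n)}(\pMM)$ by adjoining a new generator $t$ together with the relations $t^{p(n)} = e$, $[t, a_1] = [t, a_1'] = e$, and $t^{-1} x(q_1A_0) t = x(q_1 A_0) \oast a_1$. The last three of these say precisely that conjugation by $t$ agrees with $\alpha_{p(n)}$ on the generators $a_1, a_1', x(q_1A_0)$ of $K_{p(n)}$, and since both operations are homomorphisms in their $h$-argument, this extends automatically to the relation $t^{-1} h t = \alpha_{p(n)}(h)$ for every $h \in K_{p(n)}$. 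Thus $\overline{G_{p(n)}(\pMM)}$ coincides with $\ip{G_{p(n)}(\pMM), t : t^{p(n)} = e,\ t^{-1} h t = \alpha_{p(n)}(h) \text{ for } h \in K_{p(n)}}$, which is precisely the $\Z_{p(n)}$-HNN extension of \cref{lem:ZnHNN}.

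The main obstacle is the descent step: the relation we impose mentions only the single generator $x(q_1A_0)$, yet we must deduce that $\alpha^{p(n)}$ becomes trivial on the entire normal closure $\ip{x(q_1A_0)}^{K_{p(n)}}$. This is exactly where the split-metabelian structure of $K$ from \cref{lm:hnn-hom} is essential: the fact that $\Z \times \Z$ is generated by $a_1$ and $a_1'$, which already commute, is what lets conjugation by $\Z\times\Z$ commute with the $\oast a_1$ operation, and hence lets a relation at one generator propagate across the whole abelian normal subgroup.
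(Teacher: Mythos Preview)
Your amenability argument is clean and correct: $K_{p(n)}$ is a quotient of the metabelian group $K$, hence solvable, hence amenable. This is arguably tidier than the paper's route, which re-derives the structure of $K_{p(n)}$ and invokes a different lemma from \cite{sim_group}.

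The descent argument for $\alpha_{p(n)}$, however, has a genuine gap. What you establish is that $\pi \circ \alpha^{p(n)} = \pi$ as maps $K \to K_{p(n)}$ (equivalently, $\alpha^{p(n)}$ acts trivially after passing to the quotient). From this you can only conclude that $\alpha^{p(n)}$ preserves $\ker\pi$, not that $\alpha$ does. Your claim ``This simultaneously shows that $\alpha$ preserves the kernel of $K \to K_{p(n)}$'' would follow if $\ker\pi$ were the normal closure of the single relator $r = (x(q_1A_0)\oast a_1^{\oast p(n)})\,x(q_1A_0)^{-1}$ \emph{inside $K$}; in that case your computation $\pi(\alpha(r)) = e$ (which does follow from your conjugation-compatibility identity) would suffice. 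But $\ker\pi = K \cap \langle\!\langle r\rangle\!\rangle_{G(\pMM)}$, and there is no reason this intersection should equal $\langle\!\langle r\rangle\!\rangle_K$. Since $\alpha$ is only an endomorphism of $K$ (the paper explicitly notes it is not known to be injective), you cannot recover information about $\alpha$ from information about $\alpha^{p(n)}$.

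The paper sidesteps this entirely by constructing $\widetilde{\alpha}$ \emph{directly} on $K_{p(n)}$, rather than descending it from $K$. The extra ingredient you are missing is that the retraction $\psi : G(\pMM) \to \Z\times\Z$ (sending $a_1,a_1'$ to free generators and everything else to $e$) still factors through $G_{p(n)}(\pMM)$, because $\psi(r)=e$. This re-establishes the splitting $K_{p(n)} \cong \ip{x(q_1A_0)}^{K_{p(n)}} \rtimes (\Z\times\Z)$, and then your conjugation-compatibility identity is exactly what is needed to define $\widetilde{\alpha}$ on the two factors and check it respects the semidirect product. Once $\widetilde\alpha$ exists, your third paragraph (matching presentations) goes through unchanged.
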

\begin{proof}
    We continue with the notation from the proof of Lemma \ref{lm:hnn-hom}, with
    $\MM = \pMM$. Note that the map $\psi : \calF(S(\pMM)) \to \ip{b_1,b_2 : [b_1,b_2]=e}$
    sending $a_1 \mapsto b_1$, $a_1' \mapsto b_2$, and $s \mapsto e$ for all
    $s \in S(\pMM) \setminus \{a_1,a_1'\}$ sends $r \mapsto e$ for all $r$ in
    the normal subgroup of $\calF(S(\pMM))$ generated by $S(\pMM) \setminus \{a_1,a_1'\}$.
    If $f$ belongs to this subgroup, then $f \oast a_1$ also belongs to this 
    subgroup, and hence 
    \begin{equation*}
        \psi(x(q_1 A_0) \oast a_1^{\oast p(n)}) = e = \psi(x(q_1 A_0)).
    \end{equation*}
    We conclude that $\psi$ induces a homomorphism $G_{p(n)}(\pMM) \to \Z \times \Z$.
    We can then follow the proof of \cref{lm:hnn-hom} 
    exactly to show that $K_{p(n)} = \ip{ x(q_1 A_0)}^{K_{p(n)}} \rtimes (\Z \times \Z)$,
    and that there is a homomorphism $\widetilde{\alpha} : K_{p(n)} \to K_{p(n)}$
    such that $\widetilde{\alpha}(a_1) = a_1$, $\widetilde{\alpha}(a_1') = a_1'$,
    and $\widetilde{\alpha}(f) = f \oast a_1$ for all $f \in  \ip{ x(q_1 A_0)}^{K_{p(n)}}$.

    If $f \in  \ip{ x(q_1 A_0)}^{K_{p(n)}}$, then $\widetilde{\alpha}(f) =  f
    \oast a_1 \in  \ip{ x(q_1 A_0)}^{K_{p(n)}}$, so $\widetilde{\alpha}^k(f) = 
    f \oast a_1^{\oast k}$ for all $k \geq 0$. Hence
    \begin{equation*}
        \widetilde{\alpha}^{p(n)}(x(q_1 A_0)) = x(q_1 A_0) \oast a_1^{\oast p(n)}
        = x(q_1 A_0).
    \end{equation*}
    Since $\widetilde{\alpha}^{p(n)}(a_1) = a_1$ and $\widetilde{\alpha}^{p(n)}
    (a_1') = a_1'$ as well, we conclude that $\widetilde{\alpha}^{p(n)} = \1$
    on $K_{p(n)}$. In other words, $\widetilde{\alpha}$ is an automorphism of order
    $p(n)$. Looking at the presentations, we see 
    that $\overline{G_{p(n)}(\pMM)}$ is the $\Z_{p(n)}$-HNN extension of $G_{p(n)}(\pMM)$
    by $\widetilde{\alpha}$.

    To see that $K_{p(n)}$ is amenable, observe that $\ip{ x(q_1 A_0)}^{K_{p(n)}}$ is
    a subgroup of the group $T$ from \cite[Lemma 4.5]{sim_group}, and hence is 
    abelian. As a semidirect product of two abelian groups, $K_{p(n)}$ is solvable,
    and hence amenable.
\end{proof}

Towards proving part (b) of \cref{prop:exist}, we get:
\begin{corollary}
	\label{cor:sofic}
	 The group $\overline{G_{p(n)}(\pMM)}$ is sofic.
\end{corollary}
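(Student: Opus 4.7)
The plan is to combine the structural description of $\overline{G_{p(n)}(\pMM)}$ from \cref{lm:hnn-ext} with the soficity-preservation properties of $\Z_n$-HNN-extensions established in \cref{lem:ZnHNN2}. The only missing ingredient is that the base group $G_{p(n)}(\pMM)$ is itself sofic, which I would establish first.

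First, I would recall that by \cref{thm:kms_group}, the KMS group $G(\pMM)$ is solvable. Since $G_{p(n)}(\pMM)$ is defined as the quotient of $G(\pMM)$ by the normal subgroup generated by the single relator $x(q_1A_0) \oast a_1^{\oast p(n)} \cdot x(q_1A_0)^{-1}$, and quotients of solvable groups are solvable, $G_{p(n)}(\pMM)$ is solvable. By property (1) of the closure properties listed in \cref{sec:prelim} (solvable $\Rightarrow$ amenable $\Rightarrow$ sofic), we conclude that $G_{p(n)}(\pMM)$ is sofic.

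Next, \cref{lm:hnn-ext} identifies $\overline{G_{p(n)}(\pMM)}$ as the $\Z_{p(n)}$-HNN-extension of $G_{p(n)}(\pMM)$ by the automorphism $\widetilde{\alpha}$ of order $p(n)$ on the subgroup $K_{p(n)} = \ip{x(q_1A_0),a_1,a_1'}$, and also tells us that $K_{p(n)}$ is amenable. Since $G_{p(n)}(\pMM)$ is sofic, $K_{p(n)}$ is an amenable subgroup, and $\widetilde{\alpha}$ has finite order $p(n)$, \cref{lem:ZnHNN2} applies directly and yields that $\overline{G_{p(n)}(\pMM)}$ is sofic.

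There is essentially no obstacle here, since all the hard work has been done: \cref{lm:hnn-ext} established the HNN-extension description together with the amenability of $K_{p(n)}$, and \cref{lem:ZnHNN2} was proved specifically for this type of application. The proof is simply a matter of assembling these ingredients in the right order.
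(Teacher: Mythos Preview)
Your proposal is correct and follows essentially the same approach as the paper: establish that $G_{p(n)}(\pMM)$ is solvable (hence sofic) as a quotient of the solvable group $G(\pMM)$, then invoke \cref{lm:hnn-ext} for the $\Z_{p(n)}$-HNN-extension structure over the amenable subgroup $K_{p(n)}$, and conclude via \cref{lem:ZnHNN2}.
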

\begin{proof}
    Since $G(\pMM)$ is solvable and $G_{p(n)}(\pMM)$ is a quotient of
    $G(\pMM)$, $G_{p(n)}(\pMM)$ is solvable.  Since $\overline{G_{p(n)}(\pMM)}$
    is a $\Z_{p(n)}$-$HNN$-extension of $G_{p(n)}(\pMM)$ over the amenable subgroup
    $K_{p(n)}$,
    $\overline{G_{p(n)}(\pMM)}$ is sofic by \cref{lem:ZnHNN2}. 
\end{proof}

For the proof of \cref{prop:exist}, we'll take $x = w(0) w_{accept}$. We already
have the ingredients to show that $x=e$ in $\overline{G_{p(n)}(\pMM)}$ if $n \in X$.
However, we also need to ensure that $x \neq e$ if $n \not\in X$. For this, we
introduce a non-deterministic modification of $\pMM$, denoted by
$\pMM^{(p(n))}$. To construct $\pMM^{(p(n))}$ from $\pMM$, we add $p(n)-1$ additional
states, which we denote by $2', \ldots, p(n)'$. We include all the commands of $\pMM$
in $\pMM^{p(n)}$, and we add $p(n)$ new commands:
\begin{align*}
    	&1; \quad \rightarrow \quad 2';Add(1) \\
    	&i'; \quad \rightarrow \quad (i+1)';Add(1) \text{ for } 2 \leq i < p(n) \\
    	&p(n)'; \quad \rightarrow \quad 1;Add(1).
\end{align*}
In other words, in configuration $(1;k,0,0,0)$ in $\pMM^{p(n)}$, we have two
choices. We can either apply commands from $\pMM$, or add a coin to the first
glass and proceed to state $2'$. After this choice, we are forced to go through
the states $i'$ for $i=2,\ldots,p(n)$, adding a coin to the first glass each
time, until we return to state $1$ in configuration $(1;k+p(n),0,0,0)$. 
Since $\pMM$ was constructed using \cref{lem:oneglassmore}, we can show:
\begin{lemma}\label{lem:pnequiv}
    Let $n \geq 1$. Then
    $(1;0,0,0,0) \equiv_{\pMM^{p(n)}} (0;0,0,0,0)$ if and only if $n \in X$.
\end{lemma}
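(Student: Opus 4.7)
The plan is to handle the two implications separately. For $n \in X \Rightarrow (1;0,0,0,0) \equiv_{\pMM^{p(n)}} (0;0,0,0,0)$, I will iterate the added loop commands of $\pMM^{p(n)}$ forward from $(1;0,0,0,0)$, passing through the primed states $2', 3', \ldots, p(n)'$ to arrive at $(1;p(n),0,0,0)$. Since $n \in X$ forces $p(n) \in P_X$, the machine $\pMM_0$ accepts $p(n)$, and by part (a) of \cref{lem:oneglassmore} so does $\pMM$, giving $(1;p(n),0,0,0) \equiv_{\pMM} (0;0,0,0,0)$. Because $\pMM^{p(n)}$ retains every command of $\pMM$, composing the two chains produces the desired equivalence in $\pMM^{p(n)}$.

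For the converse, assume $n \notin X$. A preparatory observation is that no integer $j\,p(n)$ with $j \geq 0$ lies in $P_X$: the case $j = 0$ is ruled out because primes are positive; $j = 1$ would give $p(n) = p(m)$ for some $m \in X$, forcing $n = m$ by the strict monotonicity of $p$, a contradiction; and $j \geq 2$ makes $j\,p(n)$ composite and hence not prime. Building on this, I will construct an invariant set $\mathcal{T}$ containing $(1;0,0,0,0)$, closed under forward and backward $\pMM^{p(n)}$-moves, but excluding $(0;0,0,0,0)$. Set
\begin{equation*}
	\mathcal{T}_0 := \{(1;\,j\,p(n),0,0,0) : j \geq 0\},
\end{equation*}
let $\mathcal{T}_1$ be the union of the $\equiv_{\pMM}$-equivalence classes of the elements of $\mathcal{T}_0$, and put
\begin{equation*}
	\mathcal{T} := \mathcal{T}_1 \cup \{(i';\, j\,p(n) + i - 1,\, 0, 0, 0) : j \geq 0,\ 2 \leq i \leq p(n)\}.
\end{equation*}

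Closure of $\mathcal{T}$ is then verified by case analysis. The $\pMM$-commands never involve primed states, so they preserve $\mathcal{T}_1$ by definition and are inapplicable on the primed block. Each loop command, in both its forward and backward direction, shuttles between $\mathcal{T}_0$ and the primed block exactly in the pattern prescribed by the definition of $\mathcal{T}$; part (b) of \cref{lem:oneglassmore}, combined with the preparatory observation, is what guarantees that every state-$1$ configuration in $\mathcal{T}_1$ actually lies in $\mathcal{T}_0$, so a forward loop move from state $1$ lands in the correct primed slot. Backward loop moves that would require a negative glass count are simply inapplicable. To finish, $(0;0,0,0,0) \notin \mathcal{T}$: it is not in a primed state, and membership in $\mathcal{T}_1$ would imply $(0;0,0,0,0) \equiv_{\pMM} (1;\,j\,p(n),0,0,0)$ for some $j \geq 0$, which by determinism of $\pMM$ and the absence of outgoing commands from the halt state would force $\pMM$ to accept $j\,p(n)$, contradicting the preparatory observation. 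The main obstacle will be the methodical but routine closure verification, which crucially uses both parts of \cref{lem:oneglassmore}.
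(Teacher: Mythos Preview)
Your proof is correct and takes a genuinely different route from the paper's argument for the harder direction. The paper argues by taking a \emph{shortest} witnessing sequence $C_0 \to C_1 \to \cdots \to C_N$ for the equivalence, decomposing it into an initial segment using only the new loop commands (ending at some $C_k = (1;mp(n),0,0,0)$) followed by a maximal segment using only $\pMM$-commands (ending at $C_\ell$), and then showing via part~(b) of \cref{lem:oneglassmore} that if $\ell < N$ one must have $C_\ell = C_k$, contradicting minimality; while if $\ell = N$ then $\pMM$ accepts $mp(n)$, forcing $m=1$ and $n \in X$. Your argument instead builds an explicit invariant set $\mathcal{T}$ containing $(1;0,0,0,0)$ and closed under both forward and backward $\pMM^{p(n)}$-moves, then shows $(0;0,0,0,0) \notin \mathcal{T}$. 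Both proofs hinge on exactly the same two ingredients---part~(b) of \cref{lem:oneglassmore} to pin down the state-$1$ configurations reachable under $\equiv_{\pMM}$, and the observation that no $jp(n)$ is accepted---but your invariant-set packaging makes the closure verification modular (each command type is handled once), whereas the paper's minimality argument is more hands-on with the path structure. Neither is strictly simpler; they are two standard and equally valid techniques for establishing non-reachability.
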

\begin{proof}
    If $n \in X$, then $(1;p(n),0,0,0)$ is accepted by $\pMM$. So in $\pMM^{p(n)}$
    there is a computation path going from $(1;0,0,0,0)$ to $(1;p(n),0,0,0)$,
    and from there to the accept configuration. Hence $(1;0,0,0,0) \equiv_{\pMM^{p(n)}}
    (0;0,0,0,0)$.
    
    For the other direction, suppose $(1;0,0,0,0) \equiv_{\pMM^{p(n)}}
    (0;0,0,0,0)$.  Let $N$ be the smallest integer such that there is a
    sequence of configurations
    \begin{equation*}
        (1;0,0,0,0) =: C_0 \to C_1 \to \ldots \to C_N := (0;0,0,0,0),
    \end{equation*}
    where for all $i=1,\ldots,N$, either $C_{i-1}$ can be transformed to $C_i$
    by a command of $\pMM^{p(n)}$, or $C_i$ can be transformed to $C_{i-1}$.
    Let $k$ be the largest integer such that $C_0 \to C_1 \to \ldots \to
    C_k$ does not use any commands from $\pMM$. The states of the
    configurations $C_i$ for $0 \leq i \leq k$ must belong to
    $\{1,2',3',\ldots,p(n)'\}$, so in particular $k < N$. In addition, since
    $C_{k} \to C_{k+1}$ must use a command from $\pMM$, $C_{k}$ must be in 
    in state $1$. The sequence $C_0 \to \ldots \to C_{k}$ shows that
    $(1;0,0,0,0) \equiv_{\MM'} C_{k}$ in the Minsky machine $\MM'$ with
    states $\{1,2',\ldots,p(n)'\}$ and commands
\begin{align*}
    	&1; \quad \rightarrow \quad 2';Add(1) \\
    	&i'; \quad \rightarrow \quad (i+1)';Add(1) \text{ for } 2 \leq i < p(n) \\
    	&p(n)'; \quad \rightarrow \quad 1;Add(1).
\end{align*}
    $\MM'$ is deterministic and sends $(1;a,0,0,0)$ to $(1;a+mp(n),0,0,0)$ for $m \geq 0$.
    Thus $(1;a,0,0,0) \equiv_{\MM'} (1;b,0,0,0)$ if and only if $a = b \text{ mod } p(n)$.    We conclude that $C_{k} = (1;mp(n),0,0,0)$ for some $m \geq 0$.

    Let $\ell$ be the largest integer such that $C_k \to \ldots \to C_\ell$ involves
    only commands from $\pMM$. If $\ell = N$, then $(1;mp(n),0,0,0)
    \equiv_{\pMM} (0;0,0,0,0)$. Since $\pMM$ is deterministic, that would mean
    that $mp(n)$ is accepted by $\pMM$, so $m = 1$ and $n \in X$ as desired. 

    Suppose $\ell < N$, so $C_{\ell} \to C_{\ell+1}$ involves a command not in
    $\pMM$. The configurations $C_i$ for $k \leq i \leq \ell$ cannot be in
    states $\{2',\ldots,p(n)'\}$, so $C_{\ell}$ must be in state $1$. The
    sequence $C_{k} \to \ldots \to C_{\ell}$ shows that $C_k \equiv_{\pMM}
    C_{\ell}$. If $n \not\in X$, then $mp(n)$ is not accepted by $\pMM$ for
    all $m \geq 0$, so $C_\ell = C_{k}$ by \cref{lem:oneglassmore}.
    But then $C_0 \to \ldots \to C_k \to C_{\ell+1} \to \ldots C_N$ is a
    sequence showing that $(1;0,0,0,0) \equiv_{\pMM^{p(n)}} (0;0,0,0,0)$.
    Since $k < \ell$ by the definition of $k$, this contradicts the minimality
    of $N$. So again, we conclude that $n \in X$ as desired. 
\end{proof}

We are now ready to show:
\begin{lemma}\label{lem:acceptidentity}
	\label{prop:neq}
	In $G_{p(n)}(\pMM)$, $w(0) = w_{accept}$ if and only if $n \in X$.
\end{lemma}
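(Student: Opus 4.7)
The plan is to prove the two directions separately, using the auxiliary non-deterministic machine $\pMM^{(p(n))}$ to bridge the relation imposed in $G_{p(n)}(\pMM)$ with acceptance data. The forward direction is fairly direct. Suppose $n \in X$. Since $\pMM_0$ accepts $p(n)$ by the defining property of $P_X$, part (a) of \cref{lem:oneglassmore} gives that $\pMM$ also accepts $p(n)$, so $(1;p(n),0,0,0) \equiv_{\pMM} (0;0,0,0,0)$. Applying \cref{thm:kms_group} to $\pMM$, we get $w(p(n)) = w_{accept}$ in $G(\pMM)$, and hence in the quotient $G_{p(n)}(\pMM)$. Now use the added relation $x(q_1A_0) \oast a_1^{\oast p(n)} = x(q_1A_0)$ of $G_{p(n)}(\pMM)$ to simplify
\begin{equation*}
    w(p(n)) = (x(q_1A_0) \oast a_1^{\oast p(n)}) \oast A_1 \oast A_2 \oast A_3 \oast A_4 = x(q_1A_0) \oast A_1 \oast A_2 \oast A_3 \oast A_4 = w(0),
\end{equation*}
which delivers $w(0) = w_{accept}$ in $G_{p(n)}(\pMM)$.

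For the reverse direction, the plan is to build a homomorphism $\Phi : G_{p(n)}(\pMM) \to G(\pMM^{(p(n))})$ that fixes the common generators. Since every command of $\pMM$ is also a command of $\pMM^{(p(n))}$, \cref{lem:hom} provides a homomorphism $G(\pMM) \to G(\pMM^{(p(n))})$ sending each generator to itself; in particular, $w(0) \mapsto w(0)$ and $w_{accept} \mapsto w_{accept}$. To descend this to $G_{p(n)}(\pMM)$, I need to verify that the added relation $x(q_1A_0) \oast a_1^{\oast p(n)} = x(q_1A_0)$ holds in $G(\pMM^{(p(n))})$. The extra commands $1 \to 2'; Add(1)$, $i' \to (i{+}1)'; Add(1)$ for $2 \leq i < p(n)$, and $p(n)' \to 1; Add(1)$ give command relations
\begin{equation*}
    x(q_1A_0) = x(q_{2'}A_0) \oast a_1, \qquad x(q_{i'}A_0) = x(q_{(i+1)'}A_0) \oast a_1, \qquad x(q_{p(n)'}A_0) = x(q_1A_0) \oast a_1,
\end{equation*}
and telescoping these from left to right yields $x(q_1A_0) = x(q_1A_0) \oast a_1^{\oast p(n)}$ directly from the definition of the iterated $\oast$ notation. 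Hence $\Phi$ is well defined.

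With $\Phi$ in hand, the reverse direction finishes quickly: if $w(0) = w_{accept}$ in $G_{p(n)}(\pMM)$, then applying $\Phi$ gives $w(0) = w_{accept}$ in $G(\pMM^{(p(n))})$. By \cref{thm:kms_group} applied to $\pMM^{(p(n))}$, this forces $(1;0,0,0,0) \equiv_{\pMM^{(p(n))}} (0;0,0,0,0)$, and then \cref{lem:pnequiv} yields $n \in X$. The main conceptual step is the construction of $\Phi$ — specifically the telescoping that verifies the added relation of $G_{p(n)}(\pMM)$ holds in $G(\pMM^{(p(n))})$. This is the precise point at which the group-theoretic quotient by $x(q_1A_0) \oast a_1^{\oast p(n)} = x(q_1A_0)$ is interpreted as the Minsky-machine side effect of cycling through $p(n)$ fresh states that each add one coin to glass $1$, and it is what makes $\pMM^{(p(n))}$ the right machine to use in place of $\pMM$ on the non-accepting side.
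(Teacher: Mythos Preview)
Your proof is correct and follows essentially the same approach as the paper: both directions match the paper's argument, including the use of \cref{lem:hom} to get $G(\pMM) \to G(\pMM^{(p(n))})$, the telescoping of the added command relations to verify the extra relation of $G_{p(n)}(\pMM)$ holds in $G(\pMM^{(p(n))})$, and the appeal to \cref{lem:pnequiv} via \cref{thm:kms_group}. The only cosmetic difference is that the paper phrases the reverse direction contrapositively.
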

\begin{proof}	
    Because $x(q_1 A_0) \oast a_1^{\oast p(n)} = x(q_1 A_0)$ in $G_{p(n)}(\pMM)$,
	\begin{equation*}
        		w(0) = x(q_1A_0) \oast A_1 \oast A_2 \oast A_3 \oast A_4 
                           			= x(q_1A_0) \oast a_1^{\oast p(n)} \oast A_1 \oast A_2 \oast A_3 \oast A_4 = w(p(n)).
    	\end{equation*}
    If $n \in X$, then $p(n)$ is accepted by $\pMM$, so $w(p(n)) = w_{accept}$ in $G(\pMM)$, and
    hence the same is true in $G_{p(n)}(\pMM)$. Thus we conclude that $w(0) = w_{accept}$
    in $G_{p(n)}(\pMM)$ when $n \in X$. 

    For the other direction,
    by \cref{lem:hom} there is a homomorphism
    $G(\pMM) \to G(\pMM^{p(n)})$ which is the identity on generators.
    From the states and commands added to $\pMM^{p(n)}$, we have generators 
    $x(q_{i'} A_0)$ for $2 \leq i \leq p(n)$, and relations
	\begin{align*}
    		&x(q_1A_0) = x(q_{2'}A_0) \oast a_1, \\
            &x(q_{i'} A_0) = x(q_{(i+1)'} A_0) \oast a_1 \text{ for all } 2 \leq i < p(n), \\
    		&x(q_{p(n)'}A_0) = x(q_1A_0) \oast a_1. 
	\end{align*}
	Combining these relations, we see that 
	\begin{align*}
		\label{eq:addpn}
    		x(q_1A_0) \oast a_1^{\oast p(n)} = x(q_1A_0)
	\end{align*}
    in $G(\pMM^{(p(n))})$, so the homomorphism $G(\pMM) \to G(\pMM^{p(n)})$
    descends to a homomorphism $G_{p(n)}(\pMM) \to G(\pMM^{p(n)})$. 
    If $n \not\in X$, then by \cref{lem:pnequiv}, $(1;0,0,0,0)
    \not\equiv_{\pMM^{p(n)}} (0;0,0,0,0)$, so $w(0) \neq w_{accept}$ in 
    $G(\pMM^{p(n)})$. Hence $w(0) \neq w_{accept}$ in $G_{p(n)}(\pMM)$.
\end{proof}

The relations between $\overline{G(\pMM)}/\ip{t^{p(n)} = e}$, $G_{p(n)}(\pMM)$, $\overline{G_{p(n)}(\pMM)}$
and $G(\pMM^{(p(n))})$ are summarized in the following equation:
 \begin{equation*}
    G(\pMM^{p(n)}) \longleftarrow G_{p(n)}(\pMM) \lhook\joinrel\longrightarrow \overline{G_{p(n)}(\pMM)} \iso \overline{G(\pMM)} / \ip{ t^{p(n)} = e }.
\end{equation*}

We are finally ready to prove:
\begin{proof}[Proof of \cref{prop:exist}]
    Let
    \begin{equation*}
        H = \ip{ \overline{G(\pMM)}, u \ :\ u^{-1} t u = t^r }.
    \end{equation*}
    We take $x = w(0) w_{accept} \in H$, and $u$ and $t$ to be the generators already defined in $H$.
    Part (a) follows immediately from the definition. For parts (b)-(d), observe that 
    \begin{equation*}
        H / \ip{ t^{p(n)} = e} = \ip{ \overline{G_{p(n)}(\pMM)}, u\ :\ u^{-1} t u = t^r }
    \end{equation*}
    by \cref{cor:iso}. By \cref{lm:hnn-ext}, $\overline{G_{p(n)}(\pMM)}$ is a
    $\Z_{p(n)}$-HNN extension of $G_{p(n)}(\pMM)$. Hence $t$ has order $p(n)$
    in  $\overline{G_{p(n)}(\pMM)}$ by \cref{lem:ZnHNN}. Since $r$ is coprime
    to $p(n)$, $t^i \mapsto t^{ri}$ is an automorphism of the subgroup generated
    by $t$ in $\overline{G_{p(n)}(\pMM)}$, so $H / \ip{t^{p(n)}=e}$ is an HNN-extension
    of $\overline{G_{p(n)}(\pMM)}$ over the subgroup generated by $t$. Since the subgroup
    generated by $t$ is finite (and hence amenable), $H / \ip{t^{p(n)}=e}$ is sofic by
    \cite[Proposition 2.4.1]{capraro2015}, proving part (b).
    We also get that $\overline{G_{p(n)}(\pMM)}$ is a subgroup of $H /
    \ip{t^{p(n)}=e}$, so $t$ has order $p(n)$ in $H / \ip{t^{p(n)}=e}$ as well, proving part (d). 
    And since $G_{p(n)}(\pMM)$ is a subgroup of  $\overline{G_{p(n)}(\pMM)}$ by
    \cref{lem:ZnHNN}, $x = e$ in $H / \ip{t^{p(n)}=e}$ if and only
    if $n \in X$ by \cref{lem:acceptidentity}, proving part (c).

    It remains to show that $H$ is an exended homogeneous linear-plus-conjugacy group,
    with $u$, $t$, and $x$ in the generating set. By \cref{lm:kms_group}, $G(\pMM)$
    has a presentation as an $m \times n' \times \ell'$ extended homogeneous
    linear-plus-conjugacy group $E \Gamma(A,C_0,C_1,L)$, in which $x(q_1 A_0)$, $x = w(0)
    w_{accept}$ and $x(q_1 A_0) \oast a_1$ belong to the generating set
    $\{x_0,\ldots,x_{n'-1}\}$, and $a_1$ and $a_1'$ belong to the generating
    set $\{y_0,\ldots,y_{\ell'-1}\}$. To present $H$ as an
    extended homogeneous linear-plus-conjugacy group, we can add two additional
    generators $y_{\ell'}$ and $y_{\ell'+1}$ for $t$ and $u$ respectively, and add the conjugacy relations
    \begin{equation*}
        y_{\ell'}^{-1} x(q_1 A_0) y_{\ell'} = x(q_1 A_0) \oast a_1,
        y_{\ell'}^{-1} a_1 y_{\ell'} = a_1, 
        y_{\ell'}^{-1} a_1' y_{\ell'} = a_1',
        \text{ and } y_{\ell'+1}^{-1} y_{\ell'} y_{\ell'+1} = y_{\ell'}^r.
    \end{equation*}
    This gives a presentation of $H$ as an $m \times n' \times \ell$ extended
    homogeneous linear-plus-conjugacy group, where $\ell = \ell'+2$.
\end{proof}

\subsection{Embedding KMS groups in solution groups}

\begin{proposition}\label{prop:existsg}
    Let $p(n)$, $n \geq 1$ be an increasing sequence of prime numbers, where the
    function $p : \Z_{\geq 1} \to \Z_{\geq 1}$ is computable, let $X$ be a recursively
    enumerable set of positive integers, and let $r$ be a positive integer which is coprime to
    $p(n)$ for all $n \geq 1$. Then there is an $m' \times n'$ solution group
    $\Gamma(A)$ for some $m'$, $n'$ with generators $x, t_1,t_2,u_1,u_2 \in \{x_0,\ldots,x_{n'-1}\}$ such that
    \begin{enumerate}[(a)]
        \item $u_2 u_1 t_1 t_2 u_1 u_2 = (t_1 t_2)^r$ in $\Gamma(A)$,
        \item $x=e$ in $\Gamma(A) / \ip{(t_1 t_2)^{p(n)} = e}$ if and only if $n \in X$, 
        \item if $n \not\in X$, then
            $x$ is non-trivial in approximate representations of $\Gamma(A) / \ip{(t_1 t_2)^{p(n)} = e}$, 
        \item $t_1 t_2$ has order $p(n)$ in $\Gamma(A) / \ip{(t_1 t_2)^{p(n)}=e}$, 
        \item every $w \in \ip{ t_1, t_2 } \setminus \set{e}$ is nontrivial in approximate representations of 
        $\Gamma(A) / \ip{(t_1 t_2)^{p(n)}=e}$, and 
        \item each row of $A$ only has three nonzero entries.
    \end{enumerate}
\end{proposition}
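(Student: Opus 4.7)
The plan is to combine \cref{prop:exist} and \cref{prop:lpc_to_sol}. First I would apply \cref{prop:exist} to $p$, $X$, $r$ to obtain an extended homogeneous linear-plus-conjugacy group $H = E\Gamma(A_0,C_0,C_1,L)$ with distinguished $x$-generator $x$ and $y$-generators $u,t$ satisfying parts (a)--(d) of that proposition. Then I would apply \cref{prop:lpc_to_sol} to $H$ to obtain the desired solution group $\Gamma(A)$: part (a) of that proposition gives property (f) directly, and parts (b)--(c) tell us $\widetilde{\phi}(x) = x$, $\widetilde{\phi}(t) = t_1 t_2$, and $\widetilde{\phi}(u) = u_1 u_2$ for generators $t_1,t_2,u_1,u_2$ of $\Gamma(A)$, all of which are involutions. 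Property (a) is then immediate: applying the induced homomorphism $\phi:H\to\Gamma(A)$ to $u^{-1} t u = t^r$ (from \cref{prop:exist}(a)) and using $(u_1 u_2)^{-1} = u_2 u_1$ gives $u_2 u_1 t_1 t_2 u_1 u_2 = (t_1 t_2)^r$ in $\Gamma(A)$.

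The next step is to argue that $\phi$ descends to an $fa$-embedding $\phi_Q : H/\ip{t^{p(n)} = e} \to Q := \Gamma(A)/\ip{(t_1 t_2)^{p(n)} = e}$. Given an $\epsilon$-representation $\gamma$ of $H/\ip{t^{p(n)}=e}$, part (e) of \cref{prop:lpc_to_sol} produces an $O(\epsilon)$-representation $\alpha$ of $\Gamma(A)$ with $\alpha(\widetilde{\phi}(g)) = \gamma(g)^{\oplus n_1} \oplus \overline{\gamma(g)}^{\oplus n_2} \oplus \1_{n_3}$. In particular $\alpha((t_1 t_2)^{p(n)}) \approx \1$, so $\alpha$ is also an $O(\epsilon)$-representation of $Q$. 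Replacing $\gamma$ by a large direct sum $\gamma^{\oplus k}$ to inflate its dimension $d$, a direct computation gives
\begin{equation*}
  \|\alpha(\widetilde{\phi}(g)) - \1\|^2 = \frac{(n_1+n_2)d}{(n_1+n_2)d + n_3} \|\gamma(g) - \1\|^2,
\end{equation*}
which approaches $\|\gamma(g) - \1\|^2$ as $d \to \infty$; this yields the $fa$-embedding.

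Properties (b), (c), (d), and the ``rotation'' case of (e) now follow quickly. By \cref{prop:exist}(b), $H/\ip{t^{p(n)} = e}$ is sofic, so every non-identity element is nontrivial in approximate representations. Combined with the $fa$-embedding, this gives: $x = e$ in $Q$ iff $x = e$ in $H/\ip{t^{p(n)}=e}$ iff $n \in X$ (by \cref{prop:exist}(c)), yielding (b) and (c); and $(t_1 t_2)^k$ for $1 \leq k < p(n)$ is nontrivial in approximate representations of $Q$ (since $t$ has order $p(n)$ in $H/\ip{t^{p(n)}=e}$ by \cref{prop:exist}(d)), yielding (d) and the rotation case of (e).

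The main obstacle is the ``reflection'' case of (e), i.e., elements $w \in \ip{t_1,t_2}$ of the form $t_2(t_1 t_2)^k$ or $t_1(t_1 t_2)^k$, which lie outside the image of $\widetilde{\phi}$. To handle these I would unpack the explicit construction in \cref{prop:lpc_to_sol}, which encodes each $y$-generator of $H$ as a product of two involutions via a ``swap trick'' on two copies of the representation Hilbert space. Under this construction $\alpha(t_2)$ takes a block-swap form that makes $\alpha(t_2(t_1 t_2)^k)$ block-antidiagonal on the main $(n_1+n_2)d$-dimensional block, hence traceless there. Combined with the fact that the $n_3$-dimensional block contributes at most $n_3$ to the unnormalized trace in absolute value, inflating $d$ makes $|\tTr(\alpha(w))|$ arbitrarily small, so $\|\alpha(w) - \1\|$ is bounded below by a constant approaching $\sqrt{2}$. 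Verifying that the explicit construction from Propositions 27 and 33 of \cite{slofstra2017} indeed produces $\alpha(t_1),\alpha(t_2)$ of this swap-trick form, and tracking the $n_3$-block carefully, is the main technical step.
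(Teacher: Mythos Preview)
Your approach for parts (a)--(d) and the rotation case of (e) is essentially the paper's; the dimension-inflation trick is unnecessary (the paper just observes that $\sqrt{(n_1+n_2)d/((n_1+n_2)d+n_3)} \geq \sqrt{(n_1+n_2)/(n_1+n_2+n_3)} > 0$ is a fixed constant) but harmless.

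For the reflection case of (e), however, you are working much harder than needed. The paper does \emph{not} open up the explicit construction from \cite{slofstra2017} to analyze the block form of $\alpha(t_1),\alpha(t_2)$. Instead it uses a one-line group-theoretic observation. Write $\Gamma_n := \Gamma(A)/\ip{(t_1t_2)^{p(n)}=e}$. In $\Gamma_n^{fa}$ the solution-group relations give $t_1^2 = t_2^2 = e$, and you have already shown that $t_1t_2$ has order exactly $p(n)$ there (the rotation case). Since $p(n)$ is odd, neither $t_i$ can be trivial (else $t_1t_2$ would have order dividing $2$), so $t_1,t_2$ both have order $2$ and hence $\ip{t_1,t_2}_{\Gamma_n^{fa}} \cong D_{p(n)}$. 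The same relations hold in $\Gamma_n$, so $\ip{t_1,t_2}_{\Gamma_n}$ is a quotient of $D_{p(n)}$ that surjects onto $\ip{t_1,t_2}_{\Gamma_n^{fa}} \cong D_{p(n)}$; therefore the quotient map $\Gamma_n \to \Gamma_n^{fa}$ is an isomorphism on $\ip{t_1,t_2}$. That is exactly statement (e), reflections included.

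Your proposed route---tracking the swap-trick form of $\alpha(t_2)$ through the iterated constructions of Propositions~27 and~33 of \cite{slofstra2017} and then bounding the trace of $\alpha(t_2(t_1t_2)^k)$---may well be pushed through, but it is substantially more delicate (those constructions involve several layers of direct sums and auxiliary generators, so the ``block-antidiagonal'' claim needs real work) and is entirely avoidable given the shortcut above.
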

\begin{proof}
    Let $H$ be the $m \times n'' \times \ell$ extended homogeneous linear-plus-conjugacy group from \cref{prop:exist},
    with elements $x,t,u \in H$. Let $\Gamma(A)$ be the $m' \times n'$ solution group 
    corresponding to $H$ from
    \cref{prop:lpc_to_sol}, and let 
    \begin{equation*}
        \widetilde{\phi} : \calF(x_0,\ldots,x_{n''-1},y_{0},\ldots,y_{\ell-1})
            \to \calF(x_0,\ldots,x_{n'-1}) \text{ and } \phi : H \to \Gamma(A)
    \end{equation*}
     be the homomorphisms from that proposition. By parts (b) and (c) of \cref{prop:lpc_to_sol}, 
    $\widetilde{\phi}(x) = x$, $\widetilde{\phi}(t) = t_1 t_2$, and $\widetilde{\phi}(u)
    = u_1 u_2$ for some generators $x, t_1, t_2, u_1, u_2 \in \{x_{0},\ldots,x_{n'-1}\}$. 
    Since $\phi$ is a homomorphism, 
    \begin{equation*}
        u_2 u_1 t_1 t_2 u_1 u_2 = \phi(u^{-1} t u) = \phi(t^r) = (t_1 t_2)^r,
    \end{equation*}
    proving part (a). 
    Let $\Gamma_n := \Gamma(A) / \langle (t_1 t_2)^{p(n)}=e \rangle$.
    If $n \in X$, then $x = e$ in $H / \langle t^{p(n)}=e
    \rangle$, so $x = e$ in $\Gamma_n$. 
    
    By part (e) of \cref{prop:lpc_to_sol}, there are integers $n_1$, $n_2$, $n_3$ such
    that for any $\ep$-representation $\psi$ of $H$, there is an $O(\ep)$-representation
    $\alpha$ of $\Gamma(A)$ such that 
    \begin{equation*}
        \alpha(\widetilde{\phi}(g)) = \psi(g)^{\oplus n_1} \oplus \overline{\psi(g)}^{\oplus n_2} \oplus \1_{n_3}
        \text{ for all } g \in \calF(x_0,\ldots,x_{n''-1},y_{0},\ldots,y_{\ell-1}).
    \end{equation*}
    If $\psi$ is an $\ep$-representation
    of $H / \langle t^{p(n)}=e \rangle$, then $\norm{\psi(t^{p(n)}) - \1}  \leq \ep$, so
    \begin{equation*}
        \norm{\alpha((t_1 t_2)^{p(n)}) - \1} = \norm{\alpha(\widetilde{\phi}(t^{p(n)})) - \1}
        = \norm{ \psi(t^{p(n)})^{\oplus n_1} \oplus
    \overline{\psi(t^{p(n)})}^{\oplus n_2} \oplus \1_{n_3} - \1 } \leq \ep
    \end{equation*}
    as well, and $\alpha$ is an $O(\ep)$-representation of $\Gamma_n$.
    
    Suppose $w \neq e$ in $H / \langle t^{p(n)}=e\rangle$. Since $H / \langle
    t^{p(n)}=e \rangle$ is sofic, $w$ is non-trivial in approximate
    representations of $H / \langle t^{p(n)}=e \rangle$, which means that there is 
    $\delta > 0$ such that for all $\ep > 0$, there is an $\ep$-representation
    $\psi$ with $\norm{\psi(w) - \1} \geq \delta$. If $\alpha$ is an
    $O(\ep)$-representation of $\Gamma_n$
    as above, then 
    \begin{equation*}
        \norm{ \alpha(\widetilde{\phi}(w)) - \1} \geq \sqrt{\frac{n_1 d + n_2 d}{n_1 d + n_2 d + n_3}} \delta
            \geq \sqrt{\frac{n_1 + n_2}{n_1 + n_2 + n_3}} \delta,
    \end{equation*}
    where $d$ is the dimension of $\psi$.
    Since $n_1$, $n_2$, and $n_3$ are independent of $\ep$, $\widetilde{\phi}(w)$ is non-trivial in approximate
    representations of $\Gamma_n$.

    If $n \not\in X$, then $x \neq e$ in $H / \langle t^{p(n)}=e \rangle$, so $x$ is non-trivial
    in approximate representations of $\Gamma_n$. In particular,
    $x \neq e$ in $\Gamma_n$, so parts (b) and (c) hold.
    Similarly, $t^i \neq e$ in $H / \langle t^{p(n)} = e \rangle$ for all $0 < i < p(n)$, and thus
    $(t_1 t_2)^i \neq e$ in $\Gamma_n$, proving part (d).
    This also shows that $(t_1t_2)^i$ is nontrivial in approximate representations of $\Gamma_n$
    for $0 < i < p(n)$, or 
    in other words that $(t_1t_2)^i \neq e$ in $\Gamma_n^{fa}$.
    This means that $t_1$ and $t_2$ must have order $2$ in $\Gamma_n^{fa}$.
    Thus $\ip{t_1, t_2}_{\Gamma_n^{fa}} \cong D_{p(n)}$.
    Since $\ip{t_1, t_2}_{\Gamma_n} \cong D_{p(n)}$ as well, this proves part (e).
    Finally, part (f) follows immediately from part (a) of \cref{prop:lpc_to_sol}.
\end{proof}

\section{Constant-sized correlations $\fC_p$ for the dihedral groups}
\label{sec:dihedral}

In \cref{S:embeddings}, we showed that any strategy for a perfect correlation of a
solution group must come from a representation of that group. The dihedral
group $D_n$ has a presentation as an extended homogeneous linear-plus-conjugacy
group, and thus can be embedded in solution groups. However, the size of this
solution group will depend on $n$. In this section we write down a constant-sized
correlation $\fC_p$ for the dihedral group $D_p$, $p$ a prime, such that any
commuting-operator strategy (meeting a condition which we will enforce in the
next section using perfect correlations for the solution group from
\cref{prop:existsg}) induces a representation of the dihedral group. The
construction we use comes from \cite{fu2019}, although we modify the
construction slightly so that the correlation comes from the regular
representation, rather than the representation used in \cite{fu2019}. This
modification is necessary for the next section, where we use the correlation
from this section in conjunction with perfect correlations for the solution
group in \cref{prop:existsg}.

To define $\fC_p$, recall that $D_p$ is generated by $t_1$ and $t_2$, and consists
of the elements $(t_1 t_2)^j$ and $t_2 (t_1 t_2)^j$ for $j \in [p]$. As in \cref{sec:prelim},
let $L$ and $R$ denote the left and right regular representations of $D_p$ on 
\begin{align*}
	\ell^2 D_{p} = \spn( \set{ \ket{(t_1t_2)^j}, \ket{t_2(t_1t_2)^j} \;|\; j \in [p]}).
\end{align*}
Define elements
\begin{equation}
\label{eq:idem}
\begin{aligned}
     	&\pi_{0}^{(0)} = \frac{1}{p} \sum_{j \in [p]} (t_1t_2)^j, \\
    	&\pi_{0}^{(1)} = \frac{2}{p} \sum_{j \in [p]} \cos(\frac{2j\pi}{p}) (t_1t_2)^j, \\
    	&\pi_{0}^{(2)} = e - \pi_{0}^{(0)} - \pi_{0}^{(1)},\\
	&\pi_{1}^{(0)} = \frac{1}{2} \pi_{0}^{(1)} + \frac{1}{p} \sum_{j \in [p]} \cos(\frac{(2j+1)\pi}{p}) t_2(t_1t_2)^j, \\
	&\pi_{1}^{(1)} = \pi_{0}^{(1)} - \pi_{1}^{(0)}, \\ 
	&\pi_{1}^{(2)} = e - \pi_{0}^{(1)},\\
	&\pi_{2}^{(0)} = \frac{1}{2} \pi_{0}^{(1)} + \frac{1}{p} \sum_{j \in [p]} \sin(\frac{(2j+1)\pi}{p}) t_2(t_1t_2)^j, \\
	&\pi_{2}^{(1)} =  \pi_{0}^{(1)} - \pi_{2}^{(0)},\text{ and} \\ 
	&\pi_{2}^{(2)} = e - \pi_{0}^{(1)} 
\end{aligned}
\end{equation}
in the group algebra $\C[D_p]$. These elements are all projections. For
instance, $\pi^{(0)}_0$ is the sum of central projections for the trivial and
sign representations.  $\pi^{(1)}_0$ is the central projection for the
$2$-dimensional irreducible representation $V^{(1)}$ sending 
\begin{equation*}
    t_1 \mapsto \begin{pmatrix} 0 & \omega_p \\ \omega_p^{-1} & 0 \end{pmatrix}
    \text{ and } t_2 \mapsto \begin{pmatrix} 0 & 1
\\ 1 & 0 \end{pmatrix}.
\end{equation*} 
The projections $\pi^{(0)}_1$ and $\pi^{(0)}_2$ are more complicated: they correspond to the
rank-one projections
\begin{equation*}
        \frac{1}{2} \begin{pmatrix} 1 & \omega_{2p} \\ \omega_{2p}^{-1} & 1 \end{pmatrix}
        \text{ and } 
        \frac{1}{2} \begin{pmatrix} 1 & -i\omega_{2p} \\ i\omega_{2p}^{-1} & 1 \end{pmatrix}
\end{equation*}
in $V^{(1)}$.

Before we define $\fC_p$, 
we first consider the correlation $\fC_p'$ for the
scenario $([5],[5],[3], [3])$ defined by 
\begin{align*}
	 \fC_{p}' (a,b | x,y ) =\frac{\bra{e}L(e - \pi_0^{(0)})  \tP_x^{(a)} \tQ_y^{(b)} L(e - \pi_0^{(0)}) \ket{e}}{\norm{L(e - \pi_0^{(0)})\ket{e}}^2}, x, y \in [5], a,b \in [3],
\end{align*}
where 
\begin{align*}
	& \tP_{x}^{(a)} = \begin{dcases}
			L(\pi_0^{(a+1)}) &\text{ if } x =0, a \in \set{0,1} \\
			\frac{L(e) + (-1)^{a} L(t_x)}{2} &\text{ if } x \in \{1,2\}, a \in \set{0,1} \\
			L(\pi_1^{(a)}) & \text{ if } x = 3 \\
			L(\pi_2^{(a)}) & \text{ if } x = 4 \\
            0 & \text{otherwise}
        \end{dcases}, \text{ and } \\
		& \tQ_y^{(b)} = \begin{dcases}
			R(\pi_0^{(b+1)}) &\text{ if } y = 0 , b \in \set{0,1} \\
			\frac{R(e) + (-1)^{b} R(t_y)}{2} &\text{ if } y \in \{1,2\}, b\in \set{0,1} \\
			R(\pi_1^{(b)}) & \text{ if } y = 3 \\
			R(\pi_2^{(b)}) & \text{ if } y = 4 \\
            0 & \text{otherwise}
        \end{dcases}.
\end{align*}	
Direct calculation gives us the following.
\begin{lemma}
	The correlation $\fC_p'$ is the correlation $\hat{P}_{-\pi/p}$ defined in \cite[Definition 3.2]{fu2019}.
\end{lemma}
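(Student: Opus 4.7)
The plan is to establish the identity by a direct entrywise computation, using the Peter--Weyl decomposition of $\ell^2 D_p$ to reduce matters to matrix calculations on the two-dimensional irreducible $V^{(1)}$. Under the commuting action of $L$ and $R$, $\ell^2 D_p \iso \bigoplus_\lambda V_\lambda \otimes (V_\lambda)^*$, and $\ket{e}$ corresponds to the Plancherel-weighted sum $\sum_\lambda \sqrt{d_\lambda / 2p}\, \ket{\Phi_\lambda}$ of normalized maximally entangled states in each block. Under this decomposition, $\pi_0^{(0)}$, $\pi_0^{(1)}$, and $\pi_0^{(2)}$ are the central idempotents projecting onto the two $1$-dimensional irreducibles, onto the chosen $2$-dimensional irreducible $V^{(1)}$, and onto the remaining $2$-dimensional irreducibles respectively. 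In particular $\norm{L(e - \pi_0^{(0)})\ket{e}}^2 = (p-1)/p$ follows immediately from counting representation dimensions.

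Next I would verify that the elements $\pi_1^{(a)}$ and $\pi_2^{(a)}$ are dominated by $\pi_0^{(1)}$, and that under the identification $V^{(1)} \iso \C^2$ fixed by the matrix representatives given in the text they become precisely the rank-one projections onto the vectors $\tfrac{1}{\sqrt{2}}(1, \omega_{2p})\tp$ and $\tfrac{1}{\sqrt{2}}(1, -i\omega_{2p})\tp$, together with their orthogonal complements inside $V^{(1)}$. These are exactly the measurement operators appearing in the family $\hat{P}_\theta$ of \cite[Definition 3.2]{fu2019} at angle $\theta = -\pi/p$. Consequently, for $x, y \in \set{0, 3, 4}$, the product $\tP_x^{(a)} \tQ_y^{(b)}$ is supported on the single block $V^{(1)} \otimes (V^{(1)})^*$, and on that block it agrees with the operator defining $\hat{P}_{-\pi/p}$.

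The remaining entries are those where $x \in \set{1, 2}$ or $y \in \set{1, 2}$, and here the observables $L(t_x)$ and $R(t_y)$ act nontrivially on every isotypic component. For these I would compute the inner product block by block: the $V^{(1)} \otimes (V^{(1)})^*$ block gives exactly the corresponding entry of $\hat{P}_{-\pi/p}$ by a direct matrix multiplication, while the contribution from the other $2$-dimensional isotypic components is handled via the character identity $\sum_{j \in [p]} \cos(2kj\pi/p) = -1$ for $k \not\equiv 0 \pmod{p}$. That contribution either vanishes (when the opposing side uses a projector supported on $V^{(1)}$) or combines with the $V^{(1)}$ contribution to produce the correct total.

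The main obstacle is bookkeeping rather than ideas: most of the $225 = 5 \times 5 \times 3 \times 3$ entries vanish automatically because $\tP_x^{(2)} = 0$ for $x \in \set{0, 1, 2}$ and similarly for Bob, while the remaining entries split into a handful of cases indexed by $(\set{0}, \set{1,2}, \set{3,4})^2$ that can be verified uniformly. No conceptual difficulty arises because $\fC_p'$ has been engineered precisely to realize $\hat{P}_{-\pi/p}$ inside the regular representation of $D_p$.
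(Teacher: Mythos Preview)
Your proposal is correct and is essentially the approach the paper takes: the paper's entire proof is the sentence ``Direct calculation gives us the following,'' and your Peter--Weyl organization is a clean way to carry out that calculation. One small imprecision: your claim that for $x,y \in \{0,3,4\}$ the product $\tP_x^{(a)}\tQ_y^{(b)}$ is supported on the single block $V^{(1)} \otimes (V^{(1)})^*$ is not literally true when $x=0$, $a=1$ (since $\tP_0^{(1)} = L(\pi_0^{(2)})$ projects onto the \emph{other} two-dimensional isotypics), but this does not affect the argument---either the product vanishes or lands in blocks you can handle the same way.
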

One important property of $\hat{P}_{-\pi/p}$ is summarized in the following proposition.
\begin{proposition}[Proposition 3.3 of \cite{fu2019}]
	\label{prop:phi1}
	Let $(\ket{\psi}, \set{ \set{P_x^{(a)} }}, \set{\set{Q_y^{(b)} }})$ be an inducing strategy 
	of $\hat{P}_{-\pi/p}$, and let 
	\begin{align}
		\label{eq:phi1}
		\ket{\phi_1} = \frac{1}{2}\left( P_3^{(0)} + iP_4P_3^{(1)} -iP_4P_3^{(0)} +P_3^{(1)} \right) \ket{\psi},
	\end{align}
	where $P_4 = P_4^{(0)} - P_4^{(1)}$. 
	Then
	\begin{align*}
		\norm{\ket{\phi_1}}^2 = 1/(p-1),\,
		P_1P_2 \ket{\phi_1} = \omega_p^{-1} \ket{\phi_1},
		\text{ and }
		Q_1Q_2 \ket{\phi_1} = \omega_p \ket{\phi_1},
	\end{align*}
	where $P_x = P_x^{(0)} - P_x^{(1)}$ and $Q_y = Q_y^{(0)} - Q_y^{(1)}$
	for $x,y \in \set{1,2}$.
\end{proposition}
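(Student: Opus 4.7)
The plan is to reduce the statement directly to the rigidity result in the cited reference. By the preceding lemma, the correlation $\fC_p'$ is literally equal to $\hat{P}_{-\pi/p}$ from \cite[Definition 3.2]{fu2019}. Hence any inducing strategy $(\ket{\psi}, \{P_x^{(a)}\}, \{Q_y^{(b)}\})$ of $\fC_p'$ is an inducing strategy of $\hat{P}_{-\pi/p}$, and \cite[Proposition 3.3]{fu2019} applies verbatim, delivering the three claimed properties of $\ket{\phi_1}$.

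For completeness I would sketch why the statement is true. First, consider the canonical strategy obtained from the left and right regular representations of $D_p$ on $\ell^2 D_p$ with shared state proportional to $L(e-\pi_0^{(0)})\ket{e}$. By construction this state is supported on the isotypic component of the $2$-dimensional irreducible representation $V^{(1)}$ of $D_p$, on which $t_1t_2$ acts with eigenvalues $\omega_p$ and $\omega_p^{-1}$. The projections $L(\pi_1^{(a)})$ and $L(\pi_2^{(a)})$ restrict on this component to the rank-one projections onto the two pairs of rotated eigenvectors of the $2\times 2$ matrices representing $t_1,t_2$; the cosine and sine coefficients in \eqref{eq:idem} are precisely tuned so that the relative angle between these pairs is $\pi/(2p)$. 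A direct $2\times 2$ computation then verifies that the operator $\tfrac12(P_3^{(0)} + iP_4P_3^{(1)} - iP_4P_3^{(0)} + P_3^{(1)})$ projects onto the $\omega_p^{-1}$-eigenline of $L(t_1t_2)$ inside $V^{(1)}$, and simultaneously onto the $\omega_p$-eigenline of $R(t_1t_2)$ (using that $L$ and $R$ commute and that on the isotypic component $V^{(1)}$ the left/right actions of $t_1t_2$ are complex conjugate). Counting dimensions of the isotypic component of $V^{(1)}$ in $L(e - \pi_0^{(0)})\ket{e}$ gives $\norm{\ket{\phi_1}}^2 = 1/(p-1)$.

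For a general inducing strategy the content of the proposition is a self-testing/rigidity claim, which is the main technical work in \cite[Proposition 3.3]{fu2019}. The rigidity argument proceeds by exploiting the correlation values at questions $(0,0)$, which pin down the spectral decomposition of $P_1P_2$ and $Q_1Q_2$ on $\ket{\psi}$ up to a unitary identification with the $2$-dimensional irrep, and at question pairs $(3,3)$ and $(4,4)$, which fix the relative angles of $P_3$ and $P_4$ with respect to $P_1P_2$ modulo the null space of $\ket{\psi}$. Together these reduce every general strategy on the support of $\ket{\psi}$ to (a direct sum of copies of) the canonical one, and the computation carried out above transfers to $\ket{\phi_1}$ as defined in \eqref{eq:phi1}.

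The step I expect to be the main obstacle, if one were to reprove the result from scratch rather than cite it, is exactly this rigidity reduction: verifying that the correlation values of $\hat{P}_{-\pi/p}$ are restrictive enough to force the measurement operators to act on $\ket{\psi}$ as the regular-representation model does. Since the cited proposition already establishes this, our proof reduces to invoking it after the identification of correlations in the preceding lemma.
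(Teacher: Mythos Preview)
Your proposal is correct and matches the paper's approach exactly: this proposition is quoted directly from \cite{fu2019} and is not proved in the present paper, so invoking \cite[Proposition~3.3]{fu2019} after the identification $\fC_p' = \hat{P}_{-\pi/p}$ is precisely what is done. Your additional sketch of the canonical-strategy computation and the rigidity mechanism is helpful context but goes beyond what the paper itself provides.
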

Note that the proof of Proposition 3.3 of \cite{fu2019} also tells us that
\begin{align}
	\label{eq:phi1_q} 
	\ket{\phi_1} = \frac{1}{2}\left( Q_3^{(0)} - iQ_4Q_3^{(1)} + iQ_4Q_3^{(0)} +Q_3^{(1)} \right) \ket{\psi},
\end{align}
where $Q_4 = Q_4^{(0)} - Q_4^{(1)}$.

Now we are ready to define $\fC_p$.
Let $I := \set{0,1,2, t_1, t_2, (0, t_1), (0, t_2)}$. We define $\fC_p$ as a correlation for the
scenario $(I,I,[3] \times [2], [3] \times [2])$ by 
\begin{align*}
    \fC_{p}(a,b | x,y ) = \bra{e} \tM_x^{(a)} \tN_y^{(b)} \ket{e}, x, y \in I, a,b \in [3] \times [2],
\end{align*}
where $\ket{e} \in L_2 D_p$, 
\begin{align*}
    & \tM_{x}^{(a_0,a_1)} = \begin{dcases}
			L(\pi_x^{(a_0)}) &\text{ if } x \in \{0,1,2\}, a_1=0, \\
			\frac{L(e) + (-1)^{a_1} L(x)}{2} &\text{ if } x \in \{t_1,t_2\}, a_0 =0, \\
			\tM_0^{(a_0,0)} \tM_t^{(0,a_1)} & \text{ if } x = (0,t) \text{ for } t \in \set{ t_1, t_2}  \\
            0 & \text{otherwise}
        \end{dcases}, \text{ and } \\
		& \tN_y^{(b_0,b_1)} = \begin{dcases}
			R(\pi_y^{(b_0)}) &\text{ if } y \in \{0,1,2\}, b_1 =0, \\
			\frac{R(e) + (-1)^{b_1} R(y)}{2} &\text{ if } y \in \{t_1,t_2\}, b_0=0, \\
			\tN_0^{(b_0,0)} \tN_t^{(0,b_1)} & \text{ if } y = (0,t) \text{ for } t \in \set{t_1, t_2} \\
            0 & \text{otherwise}
        \end{dcases}.
\end{align*}
It's easy to see that the families $\{\tM_x^{(a_0,a_1)} : (a_0,a_1) \in [3] \times [2]\}$ and
$\{\tN_y^{(b_0,b_1)} : (b_0,b_1) \in [3] \times [2]\}$ are projective measurements for
all $x,y \in \{0,1,2,t_1,t_2\}$. For $x,y \in \{(0,t_1),(0,t_2)\}$, this follows from the
fact that $\pi_0^{(a)}$ is a central projection for all $a \in [3]$. 
So $\fC_p$ is in $C_{qc}(I,I,[3]\times [2], [3] \times [2])$. 
Some of the most important entries of $\fC_p$ are shown in \cref{tb:fcp}.
We summarize some
additional properties in the following lemma:

\begin{lemma} \
    \begin{enumerate}[(a)]
        \item $\fC_p$ is a synchronous correlation in $C_{qc}(I,I,[3]\times
            [2], [3] \times [2])$.
        \item The entries of $\fC_p$ are computable elements of $\overline{\Q}$.
    \end{enumerate}
\end{lemma}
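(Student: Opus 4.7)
The plan is to verify each part directly from the construction. For part (a) I would check three things: each family $\set{\tM_x^{(a)} : a}$ and $\set{\tN_y^{(b)} : b}$ is a projective measurement, $\tM_x^{(a)}$ commutes with $\tN_y^{(b)}$, and the strategy is synchronous. Commutativity is automatic because every $\tM$ is a polynomial in the left-regular operators $\set{L(g)}$ while every $\tN$ is a polynomial in the right-regular operators $\set{R(g)}$. That the $\pi_x^{(a_0)}$ themselves are mutually orthogonal projections summing to $e$ is already asserted in the text after \eqref{eq:idem}; the only new case is the composite question $x = (0, t_i)$, which requires the observation that $\pi_0^{(a_0)}$ is a central idempotent of $\C[D_p]$ (it is the projection onto an isotypic block of the regular representation). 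Centrality of $\pi_0^{(a_0)}$ makes $L(\pi_0^{(a_0)})$ commute with $L(t_i)$, so the products
\begin{equation*}
    \tM_{(0,t_i)}^{(a_0,a_1)} = L(\pi_0^{(a_0)}) \cdot (L(e) + (-1)^{a_1} L(t_i))/2
\end{equation*}
are commuting projections which, summed over $(a_0, a_1) \in [3] \times [2]$, give $\1$.

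For synchronicity it suffices to prove $\tM_x^{(a)} \ket{e} = \tN_x^{(a)} \ket{e}$ for all $x \in I$ and all $a$, since then
\begin{equation*}
    \sum_a \fC_p(a,a|x,x) = \sum_a \bra{e} \tM_x^{(a)} \tN_x^{(a)} \ket{e} = \sum_a \bra{e} \tM_x^{(a)} \ket{e} = 1,
\end{equation*}
using $[\tM_x^{(a)}, \tN_x^{(a)}] = 0$ and that $\tM_x^{(a)}$ is a projection. The structural point is that every group-algebra element $f$ appearing on the $\tM$ side is invariant under the involution $g \mapsto g^{-1}$ of $\C[D_p]$: each coefficient $\cos(2j\pi/p)$ or $\cos((2j+1)\pi/p)$ is symmetric under $j \mapsto -j$, and every reflection $t_2 (t_1 t_2)^j \in D_p$ is its own inverse. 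Any such symmetric $f$ satisfies $L(f) \ket{e} = R(f) \ket{e}$ directly from the definitions of the regular representations. For the composite case $x = (0, t_i)$ I would chain this with $[L(\cdot), R(\cdot)] = 0$: if $L(f_j) \ket{e} = R(f_j) \ket{e}$ for $j = 1, 2$, then
\begin{equation*}
    L(f_1) L(f_2) \ket{e} = L(f_1) R(f_2) \ket{e} = R(f_2) L(f_1) \ket{e} = R(f_2) R(f_1) \ket{e},
\end{equation*}
which matches $\tN_{(0,t_i)}^{(a_0,a_1)} \ket{e}$ by the analogous expansion.

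Part (b) is then immediate: each entry $\fC_p(a,b|x,y) = \bra{e} \tM_x^{(a)} \tN_y^{(b)} \ket{e}$ expands into a finite linear combination of basic inner products $\bra{e} L(g) R(h) \ket{e} = \braket{e}{gh^{-1}} \in \set{0,1}$, with coefficients that are rational combinations of $\cos(2j\pi/p)$ and $\cos((2j+1)\pi/p)$. These coefficients lie in the real subfield of the cyclotomic field $\Q(\omega_{2p})$, hence in $\overline{\Q} \cap \R$, and can be computed symbolically from $p$. I do not expect any real obstacle here; the main content of the section is the choice of $\fC_p$, and this lemma is a routine verification, with the only mildly delicate bookkeeping being the centrality of $\pi_0^{(a_0)}$ and the $g \mapsto g^{-1}$ symmetry used for synchronicity.
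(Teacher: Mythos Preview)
Your proposal is correct and follows essentially the same approach as the paper: the key step for synchronicity is precisely the observation that each group-algebra element underlying $\tM_x^{(a)}$ is fixed by the involution $\iota : g \mapsto g^{-1}$, which yields $\tN_x^{(a)}\ket{e} = \tM_x^{(a)}\ket{e}$, and part (b) follows from the coefficients lying in $\overline{\Q}[D_p]$. Your account is slightly more detailed (you spell out the projective-measurement and commutativity checks, which the paper handles just before the lemma), and in the composite case $x=(0,t_i)$ you implicitly use centrality of $\pi_0^{(a_0)}$ a second time to match $R(f_2)R(f_1)\ket{e}$ with $\tN_{(0,t_i)}^{(a_0,a_1)}\ket{e} = R(f_1)R(f_2)\ket{e}$; this is fine but worth saying explicitly.
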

\begin{proof}
    Let $\iota : \C[D_p] \to \C[D_p]$ be the linear involution sending $g
    \mapsto g^{-1}$ for all $g \in D_p$. It is not hard to see that
    $\pi^{(i)}_j = \iota(\pi^{(i)}_j)$ for all $i$ and $j$. In addition,
    $R(\alpha) \ket{e} = L(\iota(\alpha)) \ket{e}$ for all $\alpha \in \C[D_p]$, 
    so $\tN_y^{(b)} \ket{e} = \tM_{y}^{(b)} \ket{e}$ for all $y \in I$ and $b
    \in [3] \times [2]$. So
    \begin{equation*}
        \bra{e} \tM_{x}^{(a)} \tN_x^{(b)} \ket{e} = \bra{e} \tM_{x}^{(a)}
        \tM_{x}^{(b)} \ket{e} = 0
    \end{equation*}
    for all $x \in I$ and $a \neq b \in [3] \times [2]$. We conclude that
    $\fC_p$ is synchronous. Part (b) follows from the fact that the elements
    $\pi^{(i)}_j$ belong to $\overline{\Q}[D_p]$. 
\end{proof}

We now come to the main theorem of this section.
To state this theorem, recall that an integer $r$ is a primitive root of a prime $p$ if all the integers between $1$ and $p-1$ are congruent
modulo $p$ to some power of $r$.
\begin{theorem}
\label{thm:fcp}
Let $S = (\ket{\psi}, \set{ M_x^{(a_0,a_1)} }, \set{ N_y^{(b_0,b_1)} }  )$ be a good strategy for $\fC_{p}$ and
let $r$ be a primitive root of $p$.
Suppose there exist unitaries $U_A$ and $U_B$ such that 
\begin{align*}
	&U_AU_B = U_BU_A, \\
	&U_A N_y^{(b_0, b_1)} = N_y^{(b_0, b_1)} U_A  \text{ for all } y, b_0, b_1, \\
	&U_B M_x^{(a_0, a_1)} = M_x^{(a_0, a_1)} U_B \text{ for all } x, a_0, a_1, \\
	&U_AU_B\ket{\psi}  = \ket{\psi}, \\
	&(N_{t_1}N_{t_2})U_B \ket{\psi} = U_B (N_{t_1}N_{t_2})^{r} \ket{\psi}, \text{ and } \\
	&(M_{t_1}M_{t_2}) U_A \ket{\psi} = U_A (M_{t_1}M_{t_2})^{r} \ket{\psi},
\end{align*}
where $M_{x} = M_{x}^{(0,0)} - M_x^{(0,1)}$ and $N_{y} = N_{y}^{(0,0)} - N_y^{(0,1)}$ for $x,y \in \set{t_1, t_2}$. Then
\begin{align*}
	(M_{t_1}M_{t_2})^{p}\ket{\psi} = \ket{\psi}.
\end{align*}
\end{theorem}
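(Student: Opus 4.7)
The plan is to show that $\ket{\psi}$ decomposes as a sum of eigenvectors of $W_A := M_{t_1}M_{t_2}$ whose eigenvalues are all $p$-th roots of unity; this immediately gives $W_A^p\ket{\psi} = \ket{\psi}$. The model is the regular representation $L$ of $D_p$ on $\ell^2 D_p$, where $L(t_1t_2)$ has spectrum exactly the $p$-th roots of unity: eigenvalue $1$ on the trivial and sign summands (which together are the image of the central projection $L(\pi_0^{(0)})$), and eigenvalues $\omega_p^{\pm k}$ on the two-dimensional irreducibles. Our job is to mimic this spectral decomposition inside the abstract strategy using the correlation data and the $U_A, U_B$ conditions.

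First, following the proof of \cite[Proposition 3.3]{fu2019}, I will build one nontrivial eigenvector of $W_A$. Restricted to the questions $\{0,1,2,t_1,t_2\}$ and to the outcomes with nonzero probability, the strategy $S$ satisfies the same algebraic identities among $M_1^{(a,0)}, M_2^{(a,0)}, M_{t_1}, M_{t_2}$ and their $N$-counterparts as an inducing strategy for $\hat{P}_{-\pi/p}$ does (these identities do not see the extra $\pi_0^{(0)}$-component of the state). The proof of \cite[Proposition 3.3]{fu2019} uses only these identities, so it carries over: setting $M_2 := M_2^{(0,0)} - M_2^{(1,0)}$ and
\begin{equation*}
\ket{\phi_1} \;:=\; \tfrac{1}{2}\bigl(M_1^{(0,0)} + iM_2 M_1^{(1,0)} - iM_2 M_1^{(0,0)} + M_1^{(1,0)}\bigr)\ket{\psi},
\end{equation*}
I obtain $W_A\ket{\phi_1} = \omega_p^{-1}\ket{\phi_1}$, $W_B\ket{\phi_1} = \omega_p\ket{\phi_1}$ with $W_B := N_{t_1}N_{t_2}$, and $\|\ket{\phi_1}\|^2 = 1/(p-1)$; by synchrony and \eqref{eq:phi1_q}, $\ket{\phi_1}$ also equals $T(N)\ket{\psi}$ for an explicit polynomial $T$ in the $N$-operators.

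Next, I transport eigenvalues via $U_A$. Because $W_A$ is a polynomial in $M$-operators, it commutes as an operator with the $N$-polynomial $T(N)$, so
\begin{equation*}
W_A U_A\ket{\phi_1} \;=\; W_A U_A T(N)\ket{\psi} \;=\; T(N)\, W_A U_A\ket{\psi} \;=\; T(N)\, U_A W_A^r\ket{\psi}
\end{equation*}
by the assumed conjugation. On the other hand, the eigenvector identity $W_A^r T(N)\ket{\psi} = \omega_p^{-r}T(N)\ket{\psi}$ (together with $[W_A^r, T(N)] = 0$) lets me rewrite $\omega_p^{-r} U_A\ket{\phi_1} = U_A \omega_p^{-r}T(N)\ket{\psi} = U_A T(N) W_A^r\ket{\psi}$; matching the two expressions reduces to commuting $U_A$ past $T(N)$ on the vector $W_A^r\ket{\psi}$, which I carry out by first converting $W_A^r\ket{\psi}$ to $W_B^{-r}\ket{\psi}$ (an $N$-polynomial applied to $\ket{\psi}$) via the identity $M_{t_1}M_{t_2}\ket{\psi} = N_{t_2}N_{t_1}\ket{\psi}$, and then applying Lemma~\ref{lm:sub} with the commutation $U_A N^{(b)}_y\ket{\psi} = N^{(b)}_y U_A\ket{\psi}$ propagated through products via synchrony. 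Thus $W_A\ket{\phi_k} = \omega_p^{-r^k}\ket{\phi_k}$ for $\ket{\phi_k} := U_A^k\ket{\phi_1}$, and since $r$ is a primitive root of $p$ the eigenvalues $\{\omega_p^{-r^k}\}_{k=0}^{p-2}$ exhaust all nontrivial $p$-th roots of unity. A symmetric $U_B$-argument handles the conjugate eigenvectors if needed.

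For the trivial eigenvalue I take $\ket{\phi_0} := M_0^{(0,0)}\ket{\psi}$ and verify $W_A\ket{\phi_0} = \ket{\phi_0}$ using the correlation values at the combined questions $(0,t_1)$ and $(0,t_2)$, which encode the centrality of $\pi_0^{(0)}$ in $\C[D_p]$ and force $M_0^{(0,0)}$ to commute with $W_A$ on $\ket{\psi}$. Goodness gives $\sum_{a_0} M_0^{(a_0,0)} = \1$, so $\ket{\psi} = \ket{\phi_0} + M_0^{(1,0)}\ket{\psi} + M_0^{(2,0)}\ket{\psi}$; a norm and inner-product computation from the $\fC_p$ entries, combined with spectral orthogonality of eigenvectors of $W_A$ with distinct eigenvalues, identifies $M_0^{(1,0)}\ket{\psi} + M_0^{(2,0)}\ket{\psi}$ with a linear combination of the $\ket{\phi_k}$. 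Then $\ket{\psi}$ lies in the $p$-th root eigenspan of $W_A$ and $W_A^p\ket{\psi} = \ket{\psi}$. The main obstacles are (a) propagating the commutations available only on $\ket{\psi}$ through polynomials of $N$-operators in the transfer step, which requires careful bookkeeping with Lemma~\ref{lm:sub} and synchrony, and (b) establishing the final decomposition, where the primitive-root hypothesis is essential because only by iterating $U_A$ do we reach the "generic" two-dimensional irreducible components sitting inside $M_0^{(2,0)}\ket{\psi}$ that are invisible to the direct $\ket{\phi_1}$ construction.
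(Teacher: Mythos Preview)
Your strategy is essentially the paper's: extract from the $\fC_p$ data an eigenvector $\ket{\phi_1}$ of $W_A=M_{t_1}M_{t_2}$ with eigenvalue a primitive $p$-th root of unity, transport it via the $U$-operators and the primitive-root conjugation to reach every nontrivial $p$-th root, attach the eigenvalue-$1$ piece coming from $M_0^{(0,0)}\ket{\psi}$, and argue that these pieces reassemble $\ket{\psi}$.

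There is, however, a genuine gap at the transport step. You set $\ket{\phi_k}=U_A^k\ket{\phi_1}$; the paper instead uses $\ket{\psi_j}=(U_AU_B)^{\log_r j}\ket{\psi_1}$, and this choice is exactly what makes the completeness step go through. The final reassembly is \emph{not}, as you write, a ``norm and inner-product computation from the $\fC_p$ entries'': $U_A$ is not one of the measurement operators, so no correlation value tells you anything about $\bra{\psi}U_A^k$. The paper's argument is the single identity $\braket{\psi}{\psi'}=1$ for the unit vector $\ket{\psi'}=\sum_j\ket{\psi_j}$, and the key equality $\braket{\psi}{\psi_j}=\braket{\psi}{\psi_1}=1/p$ for $1\le j\le p-1$ holds precisely because $U_AU_B\ket{\psi}=\ket{\psi}$ forces $\bra{\psi}(U_AU_B)^k=\bra{\psi}$. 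With $U_A^k$ alone that inner product is inaccessible and your decomposition is unsupported. Using the product $U_AU_B$ also cleans up the eigenvalue transport: since $\ket{\psi_1}$ has both an $M$-expression and an $N$-expression (\cref{eq:psi1,eq:psi1_n}), one routes $U_B$ past the $M$-form via the hypothesis that $U_B$ commutes with the $M$-operators on $\ket{\psi}$ and $U_A$ past the $N$-form via the dual hypothesis, rather than the one-sided bookkeeping you sketch. A minor slip in the same vein: on the full state $\ket{\psi}$ one has $\|\ket{\phi_1}\|^2=\braket{\psi}{\phi_1}=1/p$, not $1/(p-1)$; the latter is the value on the renormalized state of \cref{prop:cpprime}.
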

When we use this theorem in the next section, the existence of $U_A$ and $U_B$ will be guaranteed 
by \cref{prop:existsg}.

\begin{table}
\centering
\begin{subtable}[t]{\textwidth}
\centering
\begin{tabular}{|c|c||c|c|c|}
\hline
\multicolumn{2}{|c|}{} &
\multicolumn{3}{|c|}{$y= 0 $}\\
\cline{3-5}
\multicolumn{2}{|c|}{} &$b = (0,0)$ & $b=(1, 0)$ &
$b = (2, 0)$  \\
\hline
\hline
\multirow{3}{*}{$x = 0$} &
 $a=(0,0)$ & $1/p$ & $0$ 
& $0$  \\
\cline{2-5}
&$a=(1,0)$ & $0$ & $2/p$  
&  $0$   \\
\cline{2-5}
&$a=(2,0)$ & $0$ & $0$ &  
$(p-3)/p$ \\
\hline
\end{tabular}
\subcaption{$\fC_p$: the correlation values for $x = y = 0$.}
\label{tb:xy0}
\end{subtable}

\begin{subtable}[t]{\textwidth}
\centering
\begin{tabular}{|c|c||c|c|c|c|}
\hline
\multicolumn{2}{|c|}{} &
\multicolumn{2}{|c|}{$y=1$}&
\multicolumn{2}{|c|}{$y=2$} \\
\cline{3-6}
\multicolumn{2}{|c|}{} &
$b = (0,0)$ & $b=(1,0)$ &
$b = (0,0)$ & $b=(1,0)$ \\
\hline
\hline
\multirow{2}{*}{$x = t_1$} & $a=(0,0)$ & $\frac{\cos^2(\pi/2p)}{p}$ & $\frac{\sin^2(\pi/2p)}{p}$ 
& $\frac{1-\sin(\pi/p)}{2p}$ & $\frac{1+\sin(\pi/p)}{2p}$  \\
\cline{2-6}
&$a=(0,1)$ & $\frac{\sin^2(\pi/2p)}{p}$ & $\frac{\cos^2(\pi/2p)}{p}$  
&  $\frac{1+\sin(\pi/p)}{2p}$ & $\frac{1-\sin(\pi/p)}{2p}$  \\
\hline
\multirow{2}{*}{$x = t_2$} & $a=(0,0)$ & $\frac{\cos^2(\pi/2p)}{p}$ & $\frac{\sin^2(\pi/2p)}{p}$  & 
$ \frac{1+\sin(\pi/p)}{2p}$ & $ \frac{1-\sin(\pi/p)}{2p}$  \\
\cline{2-6}
&$a=(0,1)$ & $\frac{\sin^2(\pi/2p)}{p}$ & $\frac{\cos^2(\pi/2p)}{p}$ &  
$ \frac{1-\sin(\pi/p)}{2p}$ & $ \frac{1+\sin(\pi/p)}{2p}$ \\
\hline
\end{tabular}
\subcaption{$\fC_p$: the correlation values for $x \in \set{t_1, t_2}$ and $y \in \set{1, 2}$.}
\label{tb:achsh2}
\end{subtable}

\begin{subtable}[t]{\textwidth}
\centering
\begin{tabular}{|c|c||c|c|c|c|c|c|c|c|}
\hline
\multicolumn{2}{|c|}{} &
\multicolumn{3}{|c|}{$x=1$}&
\multicolumn{3}{|c|}{$x=2$}&
\multicolumn{2}{|c|}{$x=0$}\\
\cline{3-10}
\multicolumn{2}{|c|}{} &
$a_0 = 0$ & $a_0=1$ & $a_0=2$ &
$a_0 = 0$ & $a_0=1$ & $a_0=2$ &
$a_0 = 1$ & $a_0 \neq 1 $\\
\hline
\hline
\multirow{3}{*}{$y = 1$} & $b_0=0$ & $\frac{1}{p}$ & $0$ & 0 
& $\frac{1}{2p}$ & $\frac{1}{2p}$ & 0 & $\frac{1}{p}$ & 0 \\
\cline{2-10}
&$b_0=1$ & 0 & $\frac{1}{p}$ & $0$ 
&  $\frac{1}{2p}$ & $\frac{1}{2p}$ & 0 &$\frac{1}{p}$ & 0 \\
\cline{2-10}
&$b_0=2$ & 0 & 0 & $\frac{p-2}{p}$ 
&  0 & 0 &  $\frac{p-2}{p} $ &0 & $\frac{p-2}{p}$ \\
\hline
\multirow{3}{*}{$y = 2$} & $b_0=0$ & $\frac{1}{2p}$ & $\frac{1}{2p}$ & 0 
& $\frac{1}{p}$ & $0$ & 0 & $\frac{1}{p}$ & 0 \\
\cline{2-10}
&$b_0=1$ & $\frac{1}{2p}$ & $\frac{1}{2p}$ & $0$ 
&  0 & $\frac{1}{p}$ & $0$ &$\frac{1}{p}$ & 0 \\
\cline{2-10}
&$b_0=2$ & 0 & 0 & \small $\frac{p-2}{p}$ 
&  0 & 0 & \small $\frac{p-2}{p} $ &0 &\small $\frac{p-2}{p}$ \\
\hline
\multirow{2}{*}{$y = 0$} & $b_0=1$ & $\frac{1}{p}$ & $\frac{1}{p}$ & 0 
& $\frac{1}{p}$ & $\frac{1}{p}$ & 0 & $\frac{2}{p}$ & 0 \\
\cline{2-10}
&$b_0\neq 1$ & 0 & 0 & $\frac{p-2}{p}$ 
&  0 & 0 & \small $\frac{p-2}{p} $ &0 & \small $\frac{p-2}{p}$ \\
\hline
\end{tabular}
\subcaption{$\fC_p$: the correlation values for $x ,y\in\{0, 1, 2\} $.}
\label{tb:eqv_tb2}
\end{subtable}

\begin{subtable}[t]{\textwidth}
\centering
\scriptsize
\begin{tabular}{|c|c||c|c|c|c|c|c|}
\hline
\multicolumn{2}{|c|}{} &
\multicolumn{6}{|c|}{$y=(0, t_1)$}\\
\cline{3-8}
\multicolumn{2}{|c|}{} &
$b = (0,0)$ & $b=(0,1)$ & 
$b = (1,0)$ & $b=(1,1)$ &
$b = (2,0)$ & $b=(2,1)$ \\
\hline
\hline
\multirow{3}{*}{$x = 0$} & $a_0=0$ & $\frac{1}{2p}$ & $\frac{1}{2p}$ &  $0$
& $0$ & $0$ & $0$  \\
\cline{2-8}
&$a_0=1$ & $0$ & $0 $ & $\frac{1}{p}$ 
&  $\frac{1}{p}$ & $0$ & $0$  \\
\cline{2-8}
&$a_0=2$ & 0 & 0 & $0$ 
&  0 & $\frac{p-3}{2p}$ & $\frac{p-3}{2p} $ \\
\hline
\multirow{2}{*}{$x = t_1$} & $a_1=0$ & $\frac{1}{2p}$ & $0$ & $\frac{1}{p}$ 
& $0$ & $\frac{p-3}{2p}$ & $0$   \\
\cline{2-8}
&$a_1=1$ & $0$ & $\frac{1}{2p}$ & $0$ 
&  $\frac{1}{p}$ & $0$ & $\frac{p-3}{2p}$ \\
\hline
\end{tabular}
\subcaption{$\fC_p$: the correlation values for the commutation test for Alice's questions $0$ and $t_1$.}
\label{tb:comm2}
\end{subtable}

\begin{subtable}[t]{\textwidth}
\centering
\begin{tabular}{|c|c||c|c|}
\hline
\multicolumn{2}{|c|}{} &
\multicolumn{2}{|c|}{$y= (0,t_2) $}\\
\cline{3-4}
\multicolumn{2}{|c|}{} &$b = (0,0)$ & $b=(0, 1)$  \\
\hline
\hline
\multirow{2}{*}{$x = (0,t_1)$} &
 $a=(0,0)$ & $1/p$ & $0$  \\
\cline{2-4}
&$a=(0,1)$ & $0$ & $1/p$  \\
\hline
\end{tabular}
\subcaption{$\fC_p$: some of the values for $x =(0,t_1)$, $y = (0,t_2)$, $a_0 = b_0 = 0$.}
\label{tb:eqv_test2}
\end{subtable}
\caption{Some important values of $\fC_p$.}
\label{tb:fcp}
\end{table}

The rest of the section is devoted to the proof of \cref{thm:fcp}, so for the remainder of the section, we will assume
that we have a good strategy $S$ and unitaries $U_A$ and $U_B$ satisfying the conditions of the theorem.
Note that in a good strategy, $ M_x^{(a_0, a_1)} = N_x^{(a_0, a_1)} = 0$  if $x \in \set{0,1, 2}$ and $a_1 \neq 0$, or if $x \in \set{t_1, t_2}$
and $a_0 \neq 0$.
In particular, $M_{t_i}$ and $N_{t_i}$ are binary observables.
The basic idea of the proof is to find
a decomposition of $\ket{\psi}$ as $\ket{\psi} = \sum_{j = 0}^{p} \ket{\psi_j}$, where 
$\norm{\ket{\psi_0}}^2 = \norm{\ket{\psi_p}}^2 = 1/2p$, $\norm{\ket{\psi_j}}^2 = 1/p$ for $1 \leq j \leq p-1$,
and $\ket{\psi_j}$ is an eigenvector of $M_{t_1}M_{t_2}$ with eigenvalue $\omega_p^j$.
Intuitively, $\ket{\psi_0}$ and $\ket{\psi_p}$ are in the $1$-dimensional irreducible representation of $D_p$,
and $\ket{\psi_j}$ and $\ket{\psi_{p-j}}$ are in the $2$-dimensional irreducible representation of $D_p$ sending
\begin{align*}
	t_1t_2 \mapsto \begin{pmatrix}
	\omega_p^j & 0 \\
	0 & \omega_p^{-j} 
	\end{pmatrix}
\end{align*}
for $1 \leq j \leq (p-1)/2$. The norms of the vectors are chosen because
the multiplicity of the 2-dimensional irreducible representations in the regular representation of $D_p$ is 2, and the multiplicity of the 1-dimensional 
irreducible representations is 1.

The vectors $\ket{\psi_0}$ and $\ket{\psi_p}$ are defined as
\begin{align}
	\label{eq:def_psi0}
	\ket{\psi_0} = M_{t_1}^{(0,0)} M_0^{(0,0)} \ket{\psi}, \text{ and }
	\ket{\psi_p} = M_{t_1}^{(0,1)} M_0^{(0,0)} \ket{\psi}.
\end{align}
It follows immediately from the definition of $M_{t_1}$ that 
\begin{align}
	\label{eq:t1_0_val} &M_{t_1}\ket{\psi_0} = \ket{\psi_0}, \text{ and }
	 M_{t_1}\ket{\psi_p} = -\ket{\psi_p}.
\end{align}
To see that
\begin{align}	
	\label{eq:pis_0_p_norm} &\norm{\ket{\psi_0}}^2 = \norm{\ket{\psi_p}}^2 = \frac{1}{2p},
\end{align}
we need the following identities.
By \cref{prop:eqv_test},
\begin{align}
	\label{eq:eqv_MN}
	M_{(0, x)}^{(a_0, a_1)} \ket{\psi} = N_{(0, x)}^{(a_0, a_1)} \ket{\psi} 
\end{align} 
for $a_0 \in [3]$, $a_1 \in [2]$ and $x \in \set{t_1,t_2}$.
\Cref{prop:comm_test} applied to \cref{tb:comm2} implies that
\begin{align}
	\label{eq:t1_comm}
	M_x^{(0,a_1)} M_0^{(a_0,0)} \ket{\psi} = N_{(0, x)}^{(a_0, a_1)} \ket{\psi} = M_0^{(a_0,0)} M_x^{(0,a_1)}\ket{\psi}
\end{align}
for $a_0 \in [3]$, $a_1 \in [2]$ and $x \in \set{t_1,t_2}$.
Then
\begin{align*}
	\norm{\ket{\psi_0}}^2 = \bra{\psi} M_0^{(0,0)} M_{t_1}^{(0,0)} M_0^{(0,0)} \ket{\psi} = \bra{\psi} M_0^{(0,0)}
	N_{(0, t_1)}^{(0, 0)} \ket{\psi} = \frac{1}{2p}
\end{align*}
as shown in \cref{tb:comm2}. The derivation of $\norm{\ket{\psi_p}}^2 = \frac{1}{2p}$ is similar.
Next, \cref{lem:eqv_test} applied to \cref{tb:eqv_test2} implies that 
\begin{align*}
	&M_{(0,t_1)}^{(0, a_1)} \ket{\psi} = N_{(0,t_2)}^{(0, a_1)} \ket{\psi}
\end{align*}
for each $a_1 \in [2]$.
Hence, from the equation above, \cref{eq:t1_comm,eq:eqv_MN}
\begin{align*}
	\ket{\psi_0} = M_{t_1}^{(0, 0)} M_0^{(0,0)} \ket{\psi} = N_{(0, t_1)}^{(0, 0)} \ket{\psi} = M_{(0, t_1)}^{(0, 0)} \ket{\psi} =
	 N_{(0, t_2)}^{(0, 0)} \ket{\psi} = M_{t_2}^{(0,0)} M_0^{(0,0)}\ket{\psi},
\end{align*}
and similarly, $\ket{\psi_p} = M_{t_2}^{(0,1)} M_0^{(0,0)}\ket{\psi}$.
Thus
\begin{align}
	\label{eq:t2_0_val} M_{t_2}\ket{\psi_0} = \ket{\psi_0}, \text{ and }  M_{t_2}\ket{\psi_p} = -\ket{\psi_p},
\end{align}
implying that $\ket{\psi_0}$ and $\ket{\psi_p}$ are
$1$-eigenvectors of $M_{t_1}M_{t_2}$.

The vectors $\ket{\psi_j}$ for $2 \leq j \leq p-1$ can be constructed from $\ket{\psi_1}$,
but the construction of $\ket{\psi_1}$ is more complicated and requires \cref{prop:phi1}.
We first show a strategy for $\fC_p'$ can be extracted from any strategy for $\fC_p$.
\begin{proposition}
\label{prop:cpprime}
Suppose $(\ket{\psi}, \set{M_x^{(a)} \mid a \in [3] \times [2]}, x \in I, 
\set{N_y^{(b)} \mid b \in [3] \times [2]}, y \in I)$ is a good strategy for $\fC_p$. Let $\ket{\psi'} = (\1 - M_0^{(0,0)}) \ket{\psi}/ \norm{ (\1 - M_0^{(0,0)}) \ket{\psi} }$,
and define projective measurements $\set{ P_x^{(a)} \mid a \in [3] }$, $x \in [5]$, and
$\set{ Q_y^{(b)} \mid b \in [3] }$, $y \in [5]$ by
\begin{align*}
		&P_0^{(0)} = M_0^{(1,0)}, && Q_0^{(0)} = N_0^{(1,0)}, \\
		&P_0^{(1)} = M_0^{(2,0)}, && Q_0^{(1)} = N_0^{(2,0)}, \\
		&P_0^{(2)} = 0, && Q_0^{(2)} = 0, \\
		\text{for } x = 1, 2:\quad
		&
		P_x^{(a)} =  \begin{cases}
			M_{t_x}^{(0,a)} &\text{ if } a = 0, 1\\
			0 &\text{ otherwise}
		\end{cases},
		&&
		Q_x^{(a)} =  \begin{cases}
			N_{t_x}^{(0,a)} &\text{ if } a = 0, 1\\
			0 &\text{ otherwise}
		\end{cases}, \\
		&P_3^{(a)} =  M_1^{(a,0)},		&&
		Q_3^{(a)} =  N_1^{(a,0)}, \\
		&P_4^{(a)} =  M_2^{(a,0)},	 \text{ and }	&&
		Q_4^{(a)} =  N_2^{(a,0)}.
	\end{align*}
	Then $(\ket{\psi'}$, $\set{ P_x^{(a)} \mid a \in [3] }$, $x \in [5]$,$\set{ Q_y^{(b)} \mid b \in [3] }$, $y \in [5])$
	is a strategy for $\fC_p'$.
\end{proposition}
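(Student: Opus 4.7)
The plan is to verify by case analysis over $(x,y) \in [5]^2$ that $(\ket{\psi'}, \{P_x^{(a)}\}, \{Q_y^{(b)}\})$ reproduces $\fC_p'$. First I will confirm the normalization and the measurement structure. Using \cref{tb:xy0}, $\bra{\psi}M_0^{(0,0)}\ket{\psi} = \sum_b \fC_p((0,0),b|0,0) = 1/p$, so $\norm{(\1 - M_0^{(0,0)})\ket{\psi}}^2 = (p-1)/p$. The operators $\{P_x^{(a)}\}_a$ are projective measurements for $x \in \{1,2,3,4\}$ because they are subsets or renamings of measurements from $S$; for $x = 0$, the three operators $M_0^{(1,0)}, M_0^{(2,0)}, 0$ are orthogonal projections summing to $\1 - M_0^{(0,0)}$, which acts as the identity on the subspace in which $\ket{\psi'}$ lies. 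The same remarks apply to $\{Q_y^{(b)}\}$.

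For the case analysis, I will simplify the right-hand side
\begin{equation*}
\fC_p'(a,b|x,y) = \frac{p}{p-1}\bra{e}L(e-\pi_0^{(0)})\tP_x^{(a)}\tQ_y^{(b)}L(e-\pi_0^{(0)})\ket{e}
\end{equation*}
using that $L(\pi_0^{(0)})$ commutes with all $R(\cdot)$ (hence with $\tQ_y^{(b)}$), that $\pi_0^{(0)}$ commutes with $t_1$ and $t_2$ in $\C[D_p]$, and the algebraic identities $\pi_0^{(0)}\pi_0^{(a)} = 0$ for $a \in \{1,2\}$, $\pi_0^{(0)}\pi_x^{(a)} = 0$ for $x \in \{1,2\}$, $a \in \{0,1\}$, and $\pi_0^{(0)}\pi_x^{(2)} = \pi_0^{(0)}$ for $x \in \{1,2\}$. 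On the strategy side I will expand $(\1 - M_0^{(0,0)})P_x^{(a)}Q_y^{(b)}(\1 - M_0^{(0,0)})$ and use synchrony from \cref{prop:eqv_test} (giving $M_z^{(a)}\ket{\psi} = N_z^{(a)}\ket{\psi}$), Bob-Alice commutativity, the orthogonality $M_0^{(c,0)}M_0^{(0,0)} = 0$ for $c \neq 0$, and the commutation relation \cref{eq:t1_comm} when $P_x^{(a)} = M_{t_x}^{(0,a)}$. The cases $x \in \{0,1,2\}$ reduce to direct computations with these tools.

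The main obstacle lies in the cases $x \in \{3,4\}$, where $P_x^{(a)} = M_{x-2}^{(a,0)}$ and no analogue of \cref{eq:t1_comm} is available. The key identity I will establish is $M_{x-2}^{(a,0)}M_0^{(0,0)}\ket{\psi} = 0 = M_0^{(0,0)}M_{x-2}^{(a,0)}\ket{\psi}$ for $a \in \{0,1\}$, derived from the norm computation
\begin{equation*}
\norm{M_{x-2}^{(a,0)}M_0^{(0,0)}\ket{\psi}}^2 = \bra{\psi}M_0^{(0,0)}M_{x-2}^{(a,0)}M_0^{(0,0)}\ket{\psi} = \bra{\psi}M_{x-2}^{(a,0)}N_0^{(0,0)}\ket{\psi} = \fC_p((a,0),(0,0)|x-2,0),
\end{equation*}
in which the middle equality uses synchrony to replace $M_0^{(0,0)}\ket{\psi}$ by $N_0^{(0,0)}\ket{\psi}$ together with Bob-Alice commutativity; \cref{tb:eqv_tb2} shows this value equals $0$ for $a \in \{0,1\}$, and the reversed identity follows from the parallel computation $\norm{M_0^{(0,0)}M_{x-2}^{(a,0)}\ket{\psi}}^2 = \fC_p((0,0),(a,0)|0,x-2) = 0$. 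For $a = 2$, summing over $a$ in these two identities yields $M_{x-2}^{(2,0)}M_0^{(0,0)}\ket{\psi} = M_0^{(0,0)}M_{x-2}^{(2,0)}\ket{\psi} = M_0^{(0,0)}\ket{\psi}$; combined with $M_0^{(1,0)}M_{x-2}^{(2,0)}\ket{\psi} = 0$ (from $\fC_p((1,0),(2,0)|0,x-2) = 0$ in \cref{tb:eqv_tb2}) this gives $(\1 - M_0^{(0,0)})M_{x-2}^{(2,0)}(\1 - M_0^{(0,0)})\ket{\psi} = M_0^{(2,0)}\ket{\psi}$, matching the simplification $(e-\pi_0^{(0)})\pi_{x-2}^{(2)}(e-\pi_0^{(0)}) = \pi_0^{(2)}$ on the right-hand side. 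With these identities in place, each of the $25$ cases reduces to matching a specific value of $\fC_p$ against the regular-representation formula for $\fC_p'$.
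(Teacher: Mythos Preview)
Your plan is correct and follows essentially the same route as the paper: normalize, establish the key vanishing identities that let you strip the $(\1-M_0^{(0,0)})$ factors, and then run a case analysis over $(x,y)$ to reduce everything to entries of $\fC_p$. The one substantive difference is how you obtain $M_{x-2}^{(a,0)}M_0^{(0,0)}\ket{\psi}=0$ for $x\in\{3,4\}$, $a\in\{0,1\}$: you get it by a direct norm computation via synchrony, whereas the paper first uses \cref{lem:eqv_test} to prove the block identities $(M_0^{(0,0)}+M_0^{(2,0)})\ket{\psi}=M_{x-2}^{(2,0)}\ket{\psi}$ and $M_0^{(1,0)}\ket{\psi}=(M_{x-2}^{(0,0)}+M_{x-2}^{(1,0)})\ket{\psi}$ (their \cref{eq:eqv_rel1,eq:eqv_rel2}) and then deduces the equivalent statement \cref{eq:mxa}. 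Your shortcut is cleaner for the $a\in\{0,1\}$ cases.

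One small point to tighten: in your $a=2$ step, from the three identities you list you only get $(\1-M_0^{(0,0)})M_{x-2}^{(2,0)}(\1-M_0^{(0,0)})\ket{\psi}=M_0^{(2,0)}M_{x-2}^{(2,0)}\ket{\psi}$, not $M_0^{(2,0)}\ket{\psi}$. To finish you also need $M_0^{(2,0)}M_{x-2}^{(a,0)}\ket{\psi}=0$ for $a\in\{0,1\}$, which follows from the same norm trick applied to $\fC_p((2,0),(a,0)\mid 0,x-2)=0$ in \cref{tb:eqv_tb2}. Also, for the $x,y\in\{1,2\}$ block the paper does spell out the cross terms explicitly via $\ket{\psi_0},\ket{\psi_p}$; your ``direct computation'' is fine there too, but note that after using \cref{eq:t1_comm} and Alice--Bob commutativity you are left with $\bra{\psi}M_{t_x}^{(0,a)}N_{t_y}^{(0,b)}(\1-M_0^{(0,0)})\ket{\psi}$, which expands to a difference of two $\fC_p$ entries (one with $y=t_y$ and one with $y=(0,t_y)$), so you should be prepared to evaluate both.
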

\begin{proof}
For this proof,
the first key observation is that 
\begin{align*}
	\norm{ ( \1 - M_0^{(0,0)})\ket{\psi}}^2 = (p-1)/p,
\end{align*}
which follows from the fact that $\norm{ M_0^{(0,0)} \ket{\psi}}^2 = 1/p$.
We also need the following identities:
\begin{equation}
	\label{eq:eqv_rel1}
\begin{aligned}
	(M_0^{(0,0)} &+ M_0^{(2,0)}) \ket{\psi} = M_1^{(2,0)} \ket{\psi} = M_2^{(2,0)} \ket{\psi} \\
	&= N_1^{(2,0)} \ket{\psi} = N_2^{(2,0)} \ket{\psi} = (N_0^{(0,0)} + N_0^{(2,0)}) \ket{\psi},
\end{aligned}
\end{equation}
and
\begin{equation}
\label{eq:eqv_rel2}
\begin{aligned}
	(M_1^{(0,0)} &+ M_1^{(1,0)}) \ket{\psi}  = (M_2^{(0,0)} + M_2^{(1,0)}) \ket{\psi} = M_0^{(1,0)} \ket{\psi} \\
	&=  N_0^{(1,0)} \ket{\psi} = (N_1^{(0,0)} + N_1^{(1,0)}) \ket{\psi}  = (N_2^{(0,0)} + N_2^{(1,0)}) \ket{\psi}.
\end{aligned}
\end{equation}
To prove \cref{eq:eqv_rel1}, first observe that 
$\set{M_i^{(0,0)} , M_i^{(1,0)}, M_i^{(2,0)}}$ and
$\set{N_i^{(0,0)} , N_i^{(1,0)}, N_i^{(2,0)}}$ are projective measurements for $i =0,1,2$.
By \cref{tb:eqv_tb2},
the two projective measurements
 $\set{ M_0^{(1,0)},  (M_0^{(0,0)} + M_0^{(2,0)})}$ and $\set{ (N_1^{(0,0)} + N_1^{(1,0)}), N_1^{(2,0)}}$ satisfy the conditions of \cref{lem:eqv_test} with respect to $\ket{\psi}$, and hence we have $(M_0^{(0,0)} + M_0^{(2,0)}) \ket{\psi} = N_1^{(2,0)} \ket{\psi}$. 
For the same reason, $N_1^{(2,0)} \ket{\psi} = M_1^{(2,0)} \ket{\psi} =(N_2^{(0,0)} + N_2^{(1,0)}) \ket{\psi}$ and so on.
\Cref{eq:eqv_rel2} follows from \cref{eq:eqv_rel1} as $\1 - (M_0^{(0,0)} + M_0^{(2,0)}) = M_0^{(1,0)}$, $\1 - N_1^{(2,0)} =  N_1^{(0,0)} + N_1^{(1,0)}$
and so on.

Since $\norm{ M_x^{(a,0)} N_0^{(2,0)} \ket{\psi}}^2 = \bra{\psi} M_x^{(a,0)} N_0^{(2,0)} \ket{\psi} = 0$
from \cref{tb:comm2}, we have $M_x^{(a,0)} N_0^{(2,0)} \ket{\psi} = 0$.
Using the identities in \cref{eq:eqv_rel1,eq:eqv_rel2}, we can prove that, 
for $x = 1, 2$ and $a \in \set{0,1}$,
\begin{align*}
	M_x^{(a,0)} (\1 - M_0^{(0,0)})\ket{\psi} 
	&=  M_x^{(a,0)}(N_0^{(1,0)} + N_0^{(2,0)})\ket{\psi} 
	=  M_x^{(a,0)} N_0^{(1,0)} \ket{\psi} \\
	&=  M_x^{(a,0)} (M_x^{(0,0)} + M_x^{(1,0)} \ket{\psi} 
	= M_x^{(a,0)}\ket{\psi}.
\end{align*}
That is,
\begin{align}
	\label{eq:mxa}
	&M_x^{(a, 0)} \ket{\psi} = M_x^{(a,0)} (\1 - M_0^{(0,0)})\ket{\psi}.
\end{align}
The same argument can also give us that, for $x = 1, 2$ and $a \in \set{0,1}$,
\begin{align}
	\label{eq:nxa}
	&N_x^{(a, 0)} \ket{\psi} = N_x^{(a,0)} (\1 - M_0^{(0,0)})\ket{\psi}.
\end{align}
Following the definitions of $M_0^{(a,0)}$ and $N_0^{(a,0)}$ and using $N_0^{(0,0)}\ket{\psi} = M_0^{(0,0)}\ket{\psi}$, we can also see that
\begin{align}
	\label{eq:m0a}
	M_0^{(a,0)} \ket{\psi} = M_0^{(a,0)} (\1 - M_0^{(0,0)})\ket{\psi}  \text{  and  }
	N_0^{(a,0)} \ket{\psi} = N_0^{(a,0)} (\1 - M_0^{(0,0)})\ket{\psi}
\end{align}
for $a = 1, 2$.

To see the extracted strategy induces $\fC_p'$, we need to determine
$\bra{\psi} (\1 - M_0^{(0,0)}) P_x^{(a)} Q_y^{(b)} (\1 - M_0^{(0,0)}) \ket{\psi}$ 
for $x,y \in [5]$ and $a,b \in [3]$ from the values of $\fC_p$.
If $x,y  \in \set{0,3,4}$, it follows from the definitions of $P_x^{(a)}$ and $Q_y^{(b)}$
and \cref{eq:mxa,eq:nxa,eq:m0a} that 
\begin{align*}
	\bra{\psi} (\1 - M_0^{(0,0)}) P_x^{(a)} Q_y^{(b)} (\1 - M_0^{(0,0)}) \ket{\psi}
	= \bra{\psi} P_x^{(a)} Q_y^{(b)} \ket{\psi}.
\end{align*}
If  $x \in \set{1, 2}$ and $y \in \set{0, 3, 4}$, \cref{eq:t1_comm,eq:m0a,eq:nxa} imply that
\begin{align*}
	\bra{\psi} (\1 - M_0^{(0,0)}) P_x^{(a)} Q_y^{(b)} (\1 - M_0^{(0,0)}) \ket{\psi} 
	=\bra{\psi} P_x^{(a)} Q_y^{(b)} (\1 - M_0^{(0,0)})^2 \ket{\psi} 
	= \bra{\psi} P_x^{(a)} Q_y^{(b)} \ket{\psi}.
\end{align*}
The same identity holds for  $x \in \set{0, 3, 4}$ and $y \in \set{1,2}$ by symmetry.
Finally, when $x, y \in \set{1,2}$,  \cref{eq:t1_comm} implies that 
 \begin{align*}
 	\bra{\psi} (\1 - M_0^{(0,0)}) P_x^{(a)} Q_y^{(b)} (\1 - M_0^{(0,0)}) \ket{\psi} 
	=  \bra{\psi} P_x^{(a)} (\1 - M_0^{(0,0)})^2 Q_y^{(b)} \ket{\psi} \\
	=  \bra{\psi} (\1 - M_0^{(0,0)}) P_x^{(a)} Q_y^{(b)} \ket{\psi}.
 \end{align*}
If $a = 0$,
\begin{align*}
	P_x^{(0)} (\1 - M_0^{(0,0)}) \ket{\psi} 
	= M_{t_x}^{(0,0)} (\1 - M_0^{(0,0)}) \ket{\psi} 
	= M_{t_x}^{(0,0)} \ket{\psi} - \ket{\psi_0},
\end{align*}
where $\ket{\psi_0} = M_{t_x}^{(0,0)} M_0^{(0,0)} \ket{\psi}$. 
Observe that, because $\fC_p$ is synchronous, 
\begin{align*}
\ket{\psi_0} = M_{t_x}^{(0,0)} (M_0^{(0,0)})^2 \ket{\psi} = M_{t_x}^{(0,0)} M_0^{(0,0)} N_0^{(0,0)}\ket{\psi} = N_0^{(0,0)}\ket{\psi_0}.
\end{align*}
Hence,
\begin{align*}
	\bra{\psi} (\1 - M_0^{(0,0)}) P_x^{(a)} Q_y^{(b)} \ket{\psi} 
	&= \bra{\psi} M_{t_x}^{(0,0)}Q_y^{(b)} \ket{\psi} - \bra{\psi_0} Q_y^{(b)} \ket{\psi} \\
	&= \bra{\psi} M_{t_x}^{(0,0)}Q_y^{(b)} \ket{\psi} - \bra{\psi_0} N_0^{(0,0)} Q_y^{(b)} \ket{\psi} \\
	&=\bra{\psi} M_{t_x}^{(0,0)}Q_y^{(b)} \ket{\psi} - \braket{\psi_0}{\psi_j},
\end{align*}
where $j = 0$ if $b = 0$ and $j = p$ otherwise.
Since $\ket{\psi_0}$ and $\ket{\psi_p}$ are $1$ and $-1$-eigenvectors of $M_{t_1}$ by \cref{eq:t1_0_val},
$\braket{\psi_0}{\psi_p} = 0$.
If $a = 1$, the same calculation holds except that $\bra{\psi_0}$ is replaced by $\bra{\psi_p}$ and $M_{t_x}^{(0,0)}$ is replaced 
by $M_{t_x}^{(0,1)}$.
Thus in all cases, we can calculate
\begin{align*}
 \bra{\psi} (\1 - M_0^{(0,0)}) P_x^{(a)} Q_y^{(b)} (\1 - M_0^{(0,0)}) \ket{\psi}
 \end{align*}
 from $\fC_p$,
and direct calculation shows that the induced correlation is $\fC_p'$.
\end{proof}
We can now define $\ket{\psi_1}$ based on \cref{prop:phi1,prop:cpprime}.
\begin{corollary}
	\label{cor:psi1}
	Let $(\ket{\psi}, \set{M_x^{(a)} \mid a \in [3] \times [2]}, x \in I, 
\set{N_y^{(b)} \mid b \in [3] \times [2]}, y \in I)$ be a good strategy for $\fC_p$, 
and let
\begin{align}
	\label{eq:psi1}
	\ket{\psi_1} =\frac{1}{2}(M_{1}^{(0,0)} + iM_{2}M_{1}^{(1,0)} - iM_{2}M_{1}^{(0,0)} +M_{1}^{(1,0)}) \ket{\psi},
\end{align}
where $M_2 = M_2^{(0,0)} - M_2^{(1,0)}$.
Then
\begin{align}
	\label{eq:psi1_n}
	&\ket{\psi_1} = \frac{1}{2} (N_{1}^{(0,0)} - iN_{2}N_{1}^{(1,0)} + iN_{2}N_{1}^{(0,0)} +N_{1}^{(1,0)}) \ket{\psi}, \\
	\label{eq:psi1_norm}
	&\norm{\ket{\psi_1}}^2= \frac{1}{p} = \braket{\psi}{\psi_1},\\
	\label{eq:psi_1_m_val}
	&M_{t_1}M_{t_2}\ket{\psi_1} = \omega_{p}^{-1} \ket{\psi_1}, \text{ and }\\
	\label{eq:psi_1_n_val}
	&N_{t_1}N_{t_2} \ket{\psi_{1}} = \omega_{p} \ket{\psi_1},
\end{align}
where $M_{t_x} = M_{t_x}^{(0,0)} - M_{t_x}^{(0,1)}$ and $N_{t_x} = N_{t_x}^{(0,0)} - N_{t_x}^{(0,1)}$
for $x = 1,2$.
\end{corollary}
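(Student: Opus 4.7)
The plan is to reduce the corollary to Proposition \ref{prop:phi1} via the strategy-extraction procedure of Proposition \ref{prop:cpprime}. Concretely, let $c := \norm{(\1 - M_0^{(0,0)})\ket{\psi}} = \sqrt{(p-1)/p}$ and let
$\bigl(\ket{\psi'} = c^{-1}(\1 - M_0^{(0,0)})\ket{\psi},\, \{P_x^{(a)}\},\, \{Q_y^{(b)}\}\bigr)$
be the strategy for $\fC_p'$ produced by Proposition \ref{prop:cpprime}. Under the identifications $P_3^{(a)} = M_1^{(a,0)}$, $P_4 = M_2$, the vector $\ket{\phi_1}$ of Proposition \ref{prop:phi1} built from this extracted strategy is
\begin{equation*}
\ket{\phi_1} = \tfrac{1}{2}\bigl( M_1^{(0,0)} + iM_2 M_1^{(1,0)} - iM_2 M_1^{(0,0)} + M_1^{(1,0)} \bigr)\ket{\psi'}.
\end{equation*}
The first step is to show $\ket{\psi_1} = c\,\ket{\phi_1}$. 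This uses identity \cref{eq:mxa} from the proof of Proposition \ref{prop:cpprime}, namely $M_1^{(a,0)}\ket{\psi} = M_1^{(a,0)}(\1 - M_0^{(0,0)})\ket{\psi}$ for $a \in \{0,1\}$, which immediately implies $M_1^{(a,0)}M_0^{(0,0)}\ket{\psi} = 0$ and therefore $M_2 M_1^{(a,0)}M_0^{(0,0)}\ket{\psi} = 0$ as well. Substituting these into the definition of $\ket{\phi_1}$ with state $\ket{\psi'}$ and clearing the $c^{-1}$ factor gives exactly $\ket{\psi_1}$.

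Once $\ket{\psi_1} = c\,\ket{\phi_1}$ is established, the norm and eigenvalue claims follow by rescaling. Proposition \ref{prop:phi1} gives $\norm{\ket{\phi_1}}^2 = 1/(p-1)$, so $\norm{\ket{\psi_1}}^2 = c^2/(p-1) = 1/p$. For the eigenvalue equation, the extracted-strategy observables satisfy $P_1 = M_{t_1}$ and $P_2 = M_{t_2}$ (and similarly $Q_i = N_{t_i}$), so Proposition \ref{prop:phi1} yields $M_{t_1}M_{t_2}\ket{\phi_1} = \omega_p\ket{\phi_1}$ and $N_{t_1}N_{t_2}\ket{\phi_1} = \omega_p^{-1}\ket{\phi_1}$, giving \cref{eq:psi_1_m_val} and \cref{eq:psi_1_n_val} after multiplying by $c$. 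The alternative $N$-expression \cref{eq:psi1_n} for $\ket{\psi_1}$ is obtained identically: start from the $Q$-version of $\ket{\phi_1}$ in \cref{eq:phi1_q}, apply the analogous identity \cref{eq:nxa} for $N_1^{(a,0)}\ket{\psi}$, and rescale by $c$.

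The remaining claim $\braket{\psi}{\psi_1} = 1/p$ I would verify directly from the correlation tables rather than through $\ket{\phi_1}$. Expanding the definition of $\ket{\psi_1}$, the real part is $\tfrac{1}{2}\bigl(\bra{\psi}M_1^{(0,0)}\ket{\psi} + \bra{\psi}M_1^{(1,0)}\ket{\psi}\bigr) = \tfrac{1}{2}(1/p+1/p) = 1/p$ by \cref{tb:eqv_tb2}. The imaginary part is $\tfrac{1}{2}\bra{\psi}M_2(M_1^{(1,0)} - M_1^{(0,0)})\ket{\psi}$; replacing $M_1^{(a,0)}\ket{\psi}$ with $N_1^{(a,0)}\ket{\psi}$ by synchronicity turns each term into an entry of $\fC_p$, and \cref{tb:eqv_tb2} shows $\fC_p((0,0),(a,0)\mid 2,1) = \fC_p((1,0),(a,0)\mid 2,1) = 1/(2p)$ for $a \in \{0,1\}$, so these entries cancel in pairs and the imaginary part vanishes.

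The main bookkeeping hurdle is verifying that the operator pulled back along $\ket{\psi'} \mapsto \ket{\psi}$ agrees with $\ket{\psi_1}$ on the nose, i.e.\ that every term involving $M_0^{(0,0)}\ket{\psi}$ annihilates. This requires the slightly non-obvious step of propagating $M_1^{(a,0)}M_0^{(0,0)}\ket{\psi} = 0$ through the $M_2$ factor; once this observation is in hand, the remaining manipulations are purely mechanical substitutions, and any potential sign/direction discrepancy between $\omega_p$ and $\omega_p^{-1}$ can be tracked by consistently following the $P$ versus $Q$ conventions of Proposition \ref{prop:phi1}.
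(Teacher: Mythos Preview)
Your proposal is correct and follows essentially the same approach as the paper: both arguments identify $\ket{\psi_1}$ with $c\,\ket{\phi_1}$ via \cref{eq:mxa}, then read off the norm and eigenvalue claims from \cref{prop:phi1}, and both compute $\braket{\psi}{\psi_1}$ directly from the correlation tables using synchronicity to convert $M_1^{(a,0)}\ket{\psi}$ to $N_1^{(a,0)}\ket{\psi}$. The only difference is the order of presentation (the paper does the $\braket{\psi}{\psi_1}$ computation first), which is purely cosmetic.
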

\begin{proof}
Suppose $(\ket{\psi}, \set{M_x^{(a)} \mid a \in [3] \times [2]}, x \in I, 
\set{N_y^{(b)} \mid b \in [3] \times [2]}, y \in I)$ is a good strategy for $\fC_p$.
Then
\begin{align*}
\braket{\psi}{\psi_1} &= \frac{1}{2} \bra{\psi}(N_{1}^{(0,0)} + i(M_{2}^{(0,0)} - M_2^{(1,0)}) N_{1}^{(1,0)} - i(M_{2}^{(0,0)} - M_2^{(1,0)})N_{1}^{(0,0)} +N_{1}^{(1,0)})\ket{\psi} 
=\frac{1}{p}
\end{align*}
by \cref{tb:eqv_tb2}.
Let  $(\ket{\psi'}$, $\set{ P_x^{(a)} \mid a \in [3] }$, $x \in [5]$,$\set{ Q_y^{(b)} \mid b \in [3] }$, $y \in [5])$
	be the strategy for $\fC_p'$ from \cref{prop:cpprime},
	and let $\ket{\phi_1} = \frac{1}{2}\left( P_3^{(0)} + iP_4P_3^{(1)} -iP_4P_3^{(0)} +P_3^{(1)} \right) \ket{\psi}$
	as in \cref{prop:phi1}.
Expanding $P_x^{(a)}$, we see that 
\begin{align*}
	\norm{(\1 - M_0^{(0,0)})\ket{\psi}} \cdot  \ket{\phi_1} &=\frac{1}{2}(M_{1}^{(0,0)} - iM_{2}M_{1}^{(1,0)} + iM_{2}M_{1}^{(0,0)} +M_{1}^{(1,0)}) (\1 - M_0^{(0,0)})\ket{\psi} \\
	&= \frac{1}{2} (M_{1}^{(0,0)} - iM_{2}M_{1}^{(1,0)} + iM_{2}M_{1}^{(0,0)} +M_{1}^{(1,0)}) \ket{\psi} \\
	&= \ket{\psi_1},
\end{align*}
where the second equality follows from \cref{eq:mxa}. \Cref{eq:psi1_n} follows from expanding \cref{eq:phi1_q} similarly.
Since $M_{t_i} = P_i^{(0)} - P_i^{(1)}$, $N_{t_i} = Q_i^{(0)} - Q_i^{(1)}$ for $i = 1,2$, 
\cref{eq:psi1_norm,eq:psi_1_m_val,eq:psi_1_n_val} follow immediately from \cref{prop:phi1}
and the fact that $\norm{ (\1 - M_0^{(0,0)})\ket{\psi}}^2 = (p-1)/p$.
\end{proof}

We can now finish the proof of \cref{thm:fcp}.
\begin{proof}[Proof of \cref{thm:fcp}]
We have already defined $\ket{\psi_0}$ and $\ket{\psi_p}$ in \cref{eq:def_psi0},
and $\ket{\psi_1}$ in \cref{cor:psi1}.
Next we define $\ket{\psi_j}$ for $2 \leq j \leq p-1$ as 
\begin{align*}
	\ket{\psi_j} = (U_AU_B)^{\log_r j} \ket{\psi_1}
\end{align*}
where $\log_r j$ is the discrete log of $j$ modulo $p$ (in other words, $\log_r j = a$ where $r^a \equiv j \pmod{p}$).
The discrete log is defined for all $1 \leq j \leq p-1$ because $r$ is a primitive root of $p$.
Since $U_A$ and $U_B$ are unitary, $\norm{\ket{\psi_j}}^2  = 1/p$.
To prove
\begin{align}
	\label{eq:psij_m_val}
	(M_{t_1}M_{t_2}) \ket{\psi_j} = \omega_p^{-j} \ket{\psi_j}, \text{ and }
	(N_{t_1}N_{t_2}) \ket{\psi_{j}} = \omega_p^{j} \ket{\psi_{j}},
\end{align}
observe that since $M_{t_i}\ket{\psi} = N_{t_i}\ket{\psi}$ by \cref{prop:eqv_test}, $(M_{t_1}M_{t_2})^n \ket{\psi} = (N_{t_1}N_{t_2})^{-n} \ket{\psi}$.
Similarly $U_A^n\ket{\psi} = U_B^{-n} \ket{\psi}$ from the hypothesis $U_AU_B\ket{\psi} = \ket{\psi}$.
Thus
\begin{align*}
	(M_{t_1}M_{t_2})^n U_A \ket{\psi} &= (M_{t_1}M_{t_2})^{n-1} U_A (M_{t_1}M_{t_2})^r \ket{\psi} \\
	& = (N_{t_1}N_{t_2})^{-r}(M_{t_1}M_{t_2})^{n-1} U_A \ket{\psi} \\
	& \ldots \\
	& =  (N_{t_1}N_{t_2})^{-nr} U_A \ket{\psi} \\
	& = U_A (M_{t_1}M_{t_2})^{nr} \ket{\psi}.
\end{align*}
Hence
\begin{align*}
	(M_{t_1}M_{t_2}) U_A^n \ket{\psi} &= (M_{t_1}M_{t_2}) U_A (U_B^\dagger)^{n-1} \ket{\psi} \\
	&= (U_B^\dagger)^{n-1} U_A (M_{t_1}M_{t_2})^r \ket{\psi} \\
	&=(U_B^\dagger)^{n-2} U_A (M_{t_1}M_{t_2})^r U_A \ket{\psi}  \\
	&=(U_B^\dagger)^{n-2} U_A^2 (M_{t_1}M_{t_2})^{r^2} \ket{\psi} \\
	&\ldots \\
	& = U_A^n (M_{t_1}M_{t_2})^{r^n} \ket{\psi}.
\end{align*}
Then
\begin{align*}
	(M_{t_1}M_{t_2}) \ket{\psi_j} &= (M_{t_1}M_{t_2}) (U_AU_B)^{\log_r j} \ket{\psi_1} \\
	&= \frac{1}{2} U_B^{\log_r j} (N_{1}^{(0,0)} - iN_{2}N_{1}^{(1,0)} + iN_{2}N_{1}^{(0,0)} +N_{1}^{(1,0)})(M_{t_1}M_{t_2}) U_A^{\log_r j}  \ket{\psi} \\
	&= \frac{1}{2} U_B^{\log_r j}  (N_{1}^{(0,0)} - iN_{2}N_{1}^{(1,0)} + iN_{2}N_{1}^{(0,0)} +N_{1}^{(1,0)})U_A^{\log_r j} (M_{t_1}M_{t_2})^j \ket{\psi} \\
	&=(U_AU_B)^{\log_r j}  (M_{t_1}M_{t_2})^{j} \ket{\psi_1} 
	= \omega_p^{-j} (U_AU_B)^{\log_r j} \ket{\psi_1} 
	= \omega_p^{-j} \ket{\psi_j}.
\end{align*}

Let 
\begin{align*}
	\ket{\psi'} = \sum_{j=0}^{p} \ket{\psi_j}. 
\end{align*}
Since eigenvectors with different eigenvalues are orthogonal,
$\braket{\psi_j}{\psi_k} = 0$
for $0 \leq j \neq k \leq p$.
As a result, $\norm{\ket{\psi'}} = 1$.
If $j = 0$ or $p$, then $\braket{\psi}{\psi_j} = \norm{\ket{\psi_j}}^2 = 1/2p$.
If $1 \leq j \leq p-1$, then $\braket{\psi}{\psi_j} = \bra{\psi} (U_AU_B)^{\log_r j} \ket{\psi_1} = \braket{\psi}{\psi_1} = 1/p$
using \cref{cor:psi1} and the fact that $U_AU_B \ket{\psi} = \ket{\psi}$.
Thus,
\begin{align*}
	\braket{\psi}{\psi'} =& \braket{\psi}{\psi_0} + \braket{\psi}{\psi_p} 
	+ \sum_{j=1}^{p-1} \braket{\psi}{\psi_j}  \\
	= & \frac{1}{2p} + \frac{1}{2p} + (p-1) \frac{1}{p} = 1,
\end{align*}
implying that $\ket{\psi} = \ket{\psi'}$.
We conclude that
\begin{align*}
	(M_{t_1}M_{t_2})^{p} \ket{\psi} 
	= (M_{t_1}M_{t_2})^{p} (\sum_{j=0}^{p} \ket{\psi_j})
	=  \sum_{j=0}^{p} \omega_{p}^{-jp} \ket{\psi_j}
	=\ket{\psi},
\end{align*}
which completes the proof.
\end{proof}

\section{Membership problems}
\label{sec:main_cqa}
Recall that $\Membership(n_A,n_B,m_A,m_B)_{t,\K}$, where $t \in \set{q, qs, qa, qc}$ and $\K$ is a subfield of $\R$,
is defined in \cref{sec:intro} as the problem of deciding if a correlation $P \in \K^{n_An_Bm_Am_B}$ is in the correlation set $C_t(n_A,n_B,m_A,m_B)$.
In this section, we let $\K = \K_0 \cap \R$, where $\K_0$ is the subfield of $\C$ generated by
the roots of unity $\omega_n^k$ for $k,n \in \Z$. We then drop the subscript $\K$ when referring to membership problems.
The hardness of $\Membership(n_A,n_B,m_A,m_B)_{t}$ is related to the hardness of a more general problem:

 \begin{intersection}
	Given a finite set of correlations $F \subset \K^{n_An_Bm_Am_B}$
	with constants $n_A$, $n_B$, $m_A$ and $m_B$,
	is $F \cap C_{t}(n_A$, $n_B$, $m_A$, $m_B) \neq \emptyset$?
\end{intersection}

 \begin{proposition}
 	\label{prop:equal_hard}
 	For fixed $n_A$, $n_B$, $m_A$, $m_B \in \N$ and $t \in \set{q, qs, qa, qc}$, 
	\begin{align*}
		\Intersection(n_A, n_B, m_A, m_B)_{t} \text{ and } \Membership(n_A, n_B, m_A, m_B)_{t}
	\end{align*}
	 are equivalent under Cook reduction.
\end{proposition}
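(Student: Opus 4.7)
The plan is to establish both directions of the Cook reduction, each of which is essentially immediate once we observe that in the Intersection problem the set $F$ is a finite list of correlations (given explicitly as part of the input).

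First I would handle the reduction $\Membership \leq_T \Intersection$. Given an instance $P \in \K^{n_A n_B m_A m_B}$ of the Membership problem, set $F := \{P\}$ and make a single query to the Intersection oracle. Since $F \cap C_t(n_A,n_B,m_A,m_B) \neq \emptyset$ if and only if $P \in C_t(n_A,n_B,m_A,m_B)$, the oracle's answer solves the Membership instance directly.

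Next I would handle the reduction $\Intersection \leq_T \Membership$. Given an instance $F = \{P_1, \ldots, P_k\} \subset \K^{n_A n_B m_A m_B}$ of the Intersection problem, for each $i = 1, \ldots, k$ query the Membership oracle on $P_i$. Output ``yes'' if any of the $k$ queries returns ``yes,'' and ``no'' otherwise. Correctness follows from the definition: $F \cap C_t(n_A,n_B,m_A,m_B) \neq \emptyset$ iff at least one $P_i \in C_t(n_A,n_B,m_A,m_B)$. Since $k$ is bounded by the size of the input, this uses only polynomially many oracle queries, so the reduction is a valid Cook reduction.

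Because both reductions are straightforward, there is no real obstacle; the only subtlety is to confirm that $F$ is presented as a finite enumeration of its elements (as in the usage $F_n$ described in the paper's overview), so that iterating over $F$ and invoking the Membership oracle once per element fits within the definition of a Cook reduction.
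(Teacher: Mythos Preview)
Your proposal is correct and matches the paper's proof essentially verbatim: both directions are handled exactly as you describe, with $F = \{P\}$ for one direction and iterating the Membership decider over the elements of $F$ for the other. The paper states this in two sentences without the extra commentary on polynomially many queries, but the argument is identical.
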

\begin{proof}
	If $D_M$ is a decider for $\Membership(n_A$, $n_B$, $m_A$, $m_B)_{t}$, we can decide
	if $F \cap C_t(n_A$, $n_B$, $m_A$, $m_B) = \emptyset$ by
	running $D_M$ on all the elements of $F$.
	If $D_I$ is a decider for $\Intersection(n_A$, $n_B$, $m_A$, $m_B)_{t}$, we can decide
	if a correlation $P \in C_t(n_A$, $n_B$, $m_A$, $m_B)$ by running $D_I$ on $\set{P}$. 	
\end{proof}

The main result of this section is the following.
\begin{theorem}
	\label{thm:cqa}
	For every recursively enumerable set $X$ of positive integers,
	 there exists $N \in \N$ and
    	a computable family of finite sets of correlations $\set{F_n \mid n \in \N}$, where $F_n \subset \K^{N^2 \times 8^2}$, 
    	such that 
	\begin{align*}
	&F_n \cap C_{qc}(N,N, 8,8) = \emptyset \text{ if } n \in X, \text{ and }\\
	&F_n \cap C_{qa}(N,N, 8,8) \neq \emptyset \text{ if } n \notin X.
	\end{align*}
\end{theorem}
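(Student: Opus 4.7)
The plan is to combine the correlation machinery of \cref{S:embeddings} and \cref{sec:dihedral}. Given the recursively enumerable set $X$, I first choose a computable increasing sequence of primes $p(n)$ admitting a fixed common primitive root $r\geq 2$, and apply \cref{prop:existsg} to obtain an $m\times n'$ solution group $\Gamma(A)$ with distinguished generators $x,t_1,t_2,u_1,u_2$ satisfying properties (a)--(f) of that proposition. The correlations in $F_n$ will live in the scenario whose question set $\calX$ is the disjoint union of the perfect-correlation questions $[m]\cup\{x_0,\ldots,x_{n'-1}\}$ for $Ax=0$ (\cref{def:perfectcorrelations}) and the extra dihedral questions $\{0,1,2,(0,t_1),(0,t_2)\}$ from \cref{sec:dihedral}, where $t_1,t_2$ are identified with their solution-group counterparts. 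This gives a constant question count $N=m+n'+5$. The answer set is $\Z_2^3$, which accommodates both the perfect-correlation answers and an embedding of the $[3]\times[2]$ answer set of $\fC_{p(n)}$.

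I will take $F_n$ to be a finite list of explicitly specified correlations $P$ on this scenario whose entries are: (i) on the perfect-correlation block, a perfect correlation for $Ax=0$ in the sense of \cref{def:perfectcorrelations}; (ii) on the dihedral block, the correlation $\fC_{p(n)}$; (iii) on cross-blocks, canonical values derived from the combined tracial data on $\C[\Gamma_n]$, where $\Gamma_n:=\Gamma(A)/\ip{(t_1t_2)^{p(n)}=e}$; and (iv) $P(a_0,a_0|x,x)=v$, where $a_0=(0,0,1)$ encodes the $-1$ eigenvalue of $M(x)$ and $v$ ranges over a fixed finite list of positive rational targets. Since $v$ is the only varying parameter, $F_n$ is finite and uniformly computable in $n$.

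For the forward direction, suppose $n\in X$ and that some $P\in F_n$ has a good commuting-operator strategy $(\ket\psi,M,N)$ (justified by \cref{prop:good_strat}). The perfect-correlation block yields, via \cref{prop:correlationrep}, a representation $\Phi$ of $\Gamma(A)$ on $\calH_0=\overline{\calA\ket\psi}$ with $\Phi(x_i)=M(x_i)|_{\calH_0}$. Setting $U_A:=M(u_1u_2)$ and $U_B:=N(u_1u_2)$, the solution-group relation $u_2u_1(t_1t_2)u_1u_2=(t_1t_2)^r$ together with synchronicity and the commutation between $M$- and $N$-operators verify the six hypotheses of \cref{thm:fcp}; the $\fC_{p(n)}$ values on the dihedral block then yield $(M(t_1)M(t_2))^{p(n)}\ket\psi=\ket\psi$. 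By \cref{prop:correlationrep}(3), $\Phi$ descends to a representation of $\Gamma_n$. Since $n\in X$, \cref{prop:existsg}(b) gives $x=e$ in $\Gamma_n$, hence $\Phi(x)=\1$ and $M(x)\ket\psi=\ket\psi$, forcing $P(a_0,a_0|x,x)=0$ and contradicting $v>0$. Thus $F_n\cap C_{qc}=\emptyset$.

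For the backward direction ($n\notin X$), \cref{prop:existsg}(b,c) guarantees that $\Gamma_n$ is sofic and that $x$ is nontrivial in its approximate representations. Applying \cref{prop:tensor_trick} to a finite list of group elements drawn from $\Gamma_n$, I obtain $\epsilon$-approximate representations $\phi_\epsilon$ with well-controlled normalized traces on those elements. Composing with the canonical left/right regular-representation construction (mirroring \cref{sec:dihedral}) produces a sequence of $C_q$ correlations which, as $\epsilon\to 0$, converge to some specific $P_n^{(v_i)}\in F_n$; direct-summing $\phi_\epsilon$ with trivial one-dimensional representations lets me tune $\tTr(\phi_\epsilon(x))$ to hit any prescribed target $v_i$ in the finite list. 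Since $C_{qa}$ is closed, the limit lies in $C_{qa}$, giving $F_n\cap C_{qa}\neq\emptyset$. The main technical obstacle is ensuring that the cross-block entries specified in $F_n$ are exactly those produced by the sofic approximations and that the finite list of targets $v_i$ is rich enough for at least one to be hit precisely in the $C_{qa}$-limit; both are resolved by careful choice of canonical formulas using the tracial state on $\C[\Gamma_n]$ together with the tensor trick of \cref{prop:tensor_trick}.
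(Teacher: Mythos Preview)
Your high-level structure matches the paper: combine a perfect-correlation block for $\Gamma(A)$ with the dihedral block $\fC_{p(n)}$, use \cref{prop:correlationrep} together with \cref{thm:fcp} for the $C_{qc}$ direction, and sofic approximations via \cref{prop:tensor_trick} for the $C_{qa}$ direction. The gap is in how you specify $F_n$. You propose that the cross-block entries be ``canonical values derived from the combined tracial data on $\C[\Gamma_n]$'' with only a single scalar $v$ varying, and you flag this as ``the main technical obstacle'' to be resolved by ``careful choice of canonical formulas''. But this obstacle is not merely technical. The tracial data on $\C[\Gamma_n]$ is not uniformly computable in $n$: computing the trace of $x_0$ already decides whether $x_0=e$ in $\Gamma_n$, i.e.\ whether $n\in X$. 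Worse, the $C_{qa}$ limit produced by \cref{prop:tensor_trick} realises the canonical trace on $\Gamma_n^{fa}$, not on $\Gamma_n$; for words $w$ outside the dihedral subgroup and $\{e,x_0\}$ you have no a priori control over whether $w$ is trivial in approximate representations, so you cannot write down the limiting entries in advance. Your one-parameter family is therefore too small to be guaranteed to contain the sofic limit, and even if it did, you could not compute it. (The direct-sum trick you mention for tuning $v$ also fails: adding copies of the trivial representation shifts the normalized trace of \emph{every} word, not just of $x_0$.)

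The paper resolves this by a combinatorial enumeration rather than a canonical choice. It replaces $\C[\Gamma_n]$ by $\C[G_n]$ for a Coxeter group $G_n$ obtained by keeping only the involution, commutation, and $(t_1t_2)^{p(n)}=e$ relations (dropping the linear relations $\prod_{k\in I_i}x_k=e$); the word problem in $G_n$ is uniformly decidable. Writing $W_n\subset G_n$ for the finite support of all products $\sigma(x,a)\sigma(y,b)$, the paper indexes $F_n$ by the finite, computable set of \emph{all} functions $f:W_n\to\{0,1\}$ with $f(e)=1$, $f(x_0)=0$, and $f|_{\langle t_1,t_2\rangle\setminus\{e\}}=0$, retaining those $C_f$ that restrict to a perfect correlation for $Ax=0$. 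For $n\notin X$ one then simply observes that the specific function $f(g)=[g=e\text{ in }\Gamma_n^{fa}]$ lies somewhere in this list---using parts (c) and (e) of \cref{prop:existsg} to verify the constraints on $x_0$ and on the dihedral elements---without ever having to identify which member it is.
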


Before proving \cref{thm:cqa}, we first observe that
\Cref{thm:mainintro} follows directly from \cref{thm:cqa}.

\begin{proof}[Proof of \cref{thm:mainintro}]
	Let $X$ be a $\RE$-complete set of positive integers, and let $N$ and
 $F_n  \subset \K^{N^2 \times 8^2}$ be as in \cref{thm:cqa}.
 Set $\alpha = \max(N, 8)$, and suppose $n_A$, $n_B$, $m_A$, $m_B \geq \alpha$.
	For any $n \in \N$ and $C \in F_n$, define $C' \in \K^{n_An_Bm_Am_B}$ by
	\begin{align*}
		C'(a,b\mid x, y) = 
		\begin{dcases}
			C(a,b \mid \min(x,N), \min(y,N)) &\text{ if } a,b < 8 \\
			0 &\text{ otherwise}
		\end{dcases}.
	\end{align*}
	It follows easily from the definitions that
	\begin{align*}
		C'  \in 
		C_{t}(n_A, n_B, m_A, m_B) 
		\text{ if and only if } C \in  C_{t}(N,N, 8,8).
	\end{align*}
	Hence if $F_n' = \set{ C' \mid C \in F_n}$, then
	\begin{align*}
		F_n' \cap C_t(n_A, n_B, m_A, m_B) \neq \emptyset
		\text{ if and only if } 
		F_n \cap C_t(N, N, 8, 8) \neq \emptyset.
	\end{align*}
	Since $C_{qa}(n_A$, $n_B$, $m_A$, $m_B) \subseteq C_{qc}(n_A$, $n_B$, $m_A$, $m_B)$,
	\cref{thm:cqa} implies that
	\begin{align*}
		F_n' \cap C_t(n_A, n_B, m_A, m_B) \neq \emptyset
		\text{ if and only if } 
		n \notin X
	\end{align*}
	for both $t = qa$ and $t= qc$.
	Thus $\Intersection(n_A$, $n_B$, $m_A$, $m_B)_{t}$ is $\coRE$-hard for $t = qa, qc$.
	By \cref{prop:equal_hard}, $\Membership(n_A$, $n_B$, $m_A$, $m_B)_{t}$
	is also $\coRE$-hard.
	\end{proof}

Although we take $m_A, m_B \geq \alpha$ in the proof, note that it is sufficient to choose $m_A$, $m_B \geq 8$.
Also, it has been shown that $\Membership(n_A$, $n_B$, $m_A$, $m_B)_{qc}$ is in $\coRE$ \cite{npa2008}. 
	Hence, $\Membership(n_A$, $n_B$, $m_A$, $m_B)_{qc}$ 
	is $\coRE$-complete for $n_A, n_B \geq N$ and $m_A, m_B \geq 8$.
	
To prove \cref{thm:cqa}, we first construct $F_n$ from $X$.
Recall that an integer $r$ is a primitive root of a prime $p$ if all the integers between $1$ and $p-1$ are congruent
modulo $p$ to some power of $r$.
By a result of Gupta and Murty \cite{gupta-murty}, there are integers $r$ which are primitive roots of infinitely many primes.
We use a version of this result due to Heath-Brown.
\begin{lemma}[\cite{heath-brown}, see also \cite{murty1988}]
    \label{lm:prim_rt}
    There exists $r \in \set{2, 3, 5}$ such that $r$ is a primitive root of infinitely many primes.
\end{lemma}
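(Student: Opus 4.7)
The plan is to outline the sieve-theoretic approach due to Heath-Brown, which is a partial resolution of Artin's primitive root conjecture. First, I would reformulate the primitive root condition in Galois-theoretic terms: an integer $r$ is a primitive root modulo a prime $p \nmid r$ if and only if for every prime $q$ dividing $p-1$ one has $r^{(p-1)/q} \not\equiv 1 \pmod{p}$, which in turn is equivalent to saying that $p$ does \emph{not} split completely in the Kummer-type extension $K_q := \mathbb{Q}(\zeta_q, r^{1/q})$ of $\mathbb{Q}$. Hence, counting primes $p \leq x$ for which $r$ fails to be a primitive root reduces, by inclusion-exclusion over the auxiliary primes $q$, to estimating the count of primes that split completely in at least one of the fields $K_q$.

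Second, I would apply Chebotarev's density theorem in each $K_q$. For $q$ small relative to $x$ the count has the expected main term (of size $\mathrm{Li}(x) / [K_q : \mathbb{Q}]$); the difficulty, as in Hooley's classical conditional proof, lies with $q$ of size up to $\sqrt{x}$, where a single error term cannot be controlled without GRH. The idea of Gupta--Murty, refined by Heath-Brown, is that one can nevertheless produce \emph{unconditional} upper bounds via the Bombieri--Vinogradov theorem and large sieve estimates, provided one inputs multiplicatively independent integers: the joint compositum of several Kummer extensions has rapidly growing degree, which shrinks the sieve's contribution from the large-$q$ range.

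Third, I would apply Heath-Brown's main theorem: given three multiplicatively independent non-zero integers $r_1,r_2,r_3$, at most two of them can fail to be primitive roots for infinitely many primes. Specializing to $(r_1,r_2,r_3) = (2,3,5)$, which are multiplicatively independent since they are distinct primes, one of $2$, $3$, or $5$ must be a primitive root modulo infinitely many primes, proving the lemma. The main obstacle is the aforementioned large-$q$ sieve bound, but since this is the cited content of \cite{heath-brown}, for our purposes it can be invoked as a black box and no further work is needed.
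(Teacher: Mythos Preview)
Your outline of Heath-Brown's argument is accurate, but the paper does not prove this lemma at all: it is stated as a citation from \cite{heath-brown} (with \cite{murty1988} as a secondary reference) and used as a black box. So your proposal is strictly more detailed than the paper's treatment, which is simply to invoke the result.
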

For the construction of $F_n$, fix $r \in \set{2, 3, 5}$ such that $r$ is a 
primitive root of infinitely many primes.
Let $p(n)$ be the $n$-th prime greater than $r$ for which $r$ is a primitive root.
Since we can decide if $r$ is a primitive root of a given prime,
the sequence of primes $p(1) < p(2) < \ldots$ is computable.
Let $\Gamma(A)$ be the solution group from \cref{prop:existsg} for the function $p$, set $X$ and integer $r$.
Let $m$ and $\ell$ be the number of rows and columns of $A$ respectively,
and note that each row of $A$ has three nonzero entries.
Recall that the generating set $\set{ x_i \mid i \in [\ell]}$ has special generators $x, t_1, t_2, u_1$ and $u_2$.
By reordering the generators, we can take $x =x_0$, which lets us use $x$ for other things.
Recall that for perfect correlations associated with $Ax = 0$, we use question set $ [m] \cup \calX_{var}$,
where $\calX_{var} = \set{x_i \mid i \in [\ell]}$ is the set of variables in the system, and answer set $\Z_2^3$.
For correlations in $F_n$,
the question set is
$\calX= \calX_{var} \cup [m] \cup \set{m, m+1, m+2, (m, t_1), (m, t_2)}$,
and the answer set is $\calA = \Z_2^3$.
The questions $m$, $m+1$, $m+2$, $(m, t_1)$ and $(m, t_2)$ will correspond to questions $0$, $1$, $2$, 
$(0, t_1)$ and $(0,t_2)$
from the correlation
$\fC_{p(n)}$.

To define the entries of the correlations in $F_{n}$, 
we use the notations from \cref{def:perfectcorrelations}.
In particular, $I_j = \set{k \in [\ell] \mid A(j,k) \neq 0}$ for $j \in [m]$.
Let 
\begin{align*}
	G_n = \ip{ x_0, x_1,\ldots x_{\ell-1} : x_j^2 = e \text{ for all } j \in [\ell], [x_j, x_k] = e \text{ if } j,k \in I_i \text{ for some } i,  (t_1t_2)^{p(n)} = e },
\end{align*}
so that $\Gamma(A)/\ip{(t_1t_2)^{p(n)}=e}$ is a quotient of $G_n$ (note again that $t_1$ and $t_2$ are generators).
$G_n$ is a Coxeter group, so its word problem is decidable \cite[Chapter 5]{coxeter}.
Specifically, two words $w_0$ and $w_1$ over the generators of $G_n$ are equal in $G_n$ if they can both be transformed 
into a third word using the transformations
\begin{align*}
	&x_j^2 \to e, \\
	&x_j x_k \to x_k x_j \text{ if } j,k \in I_i \text{ for some } i, \\
	&\underbrace{t_1t_2 \ldots t_1}_{\text{length } p(n)} \to \underbrace{t_2 t_1 \ldots t_2}_{\text{length } p(n)}, \text{ and } \\
	&\underbrace{t_2t_1 \ldots t_2}_{\text{length } p(n)} \to \underbrace{t_1 t_2 \ldots t_1}_{\text{length } p(n)}.
\end{align*}
Since the transformations never increase the length of a word,
determining if two words are equal is a finite problem.

Recall that a group algebra like $\C[G_n]$ is a $\ast$-algebra under the operation
$ (\sum_g \alpha_g g)^\ast = \sum_g \overline{\alpha_g} g^{-1}$.
We define a mapping $\sigma : \calX \times \calA \to \C[G_n]$ from question-answer pairs to self-adjoint projections in $\C[G_n]$ as follows.
For $(a_0, a_1) \in \Z_2^2$, let $\#(a_0,a_1)$ be the element of $[4]$ with binary representation $(a_0, a_1)$.
\begin{itemize}
 	\item When $x \in \calX_{var}$,
	\begin{align*}
		\sigma(x, a) = 
		\begin{cases}
			\frac{e + (-1)^{a_2} x}{2} & \text{ if } (a_0, a_1) = (0,0) \\
			0 & \text{ otherwise}
		\end{cases}.
	\end{align*}
	\item When $x = i \in [m]$,
	\begin{align*}
		\sigma(i, a) = \prod_{k \in I_i} \frac{e + (-1)^{a_{\phi_i(k)}} x_{k}}{2}.
	\end{align*}
 	\item When $x \in \set{m, m+1, m+2}$,
	\begin{align*}
		\sigma(x, a) = \begin{cases}
			\pi_0^{(\#(a_0,a_1))} & \text{ if } x = m  \text{ and }  \#(a_0,a_1) \leq 2, a_2 = 0\\
			\pi_1^{(\#(a_0,a_1))} & \text{ if } x = m+1 \text{ and }  \#(a_0,a_1) \leq 2, a_2 = 0\\
			\pi_2^{(\#(a_0,a_1))} & \text{ if } x = m + 2 \text{ and }  \#(a_0,a_1) \leq 2, a_2 = 0\\
			0 &\text{ otherwise} \\
		\end{cases},
	\end{align*}
	where $\pi_i^{(a)} \in \C[\ip{t_1, t_2}] \cong \C[D_{p(n)}]$ is defined in \cref{sec:dihedral}, \cref{eq:idem}.
	\item When $x = (m, t_i)$ for $i = 1,2$,
	\begin{align*}
		&\sigma((m, t_1), (a_0,a_1,a_2))
		=\begin{cases}
			\pi_0^{(\#(a_0,a_1))}  \left( \frac{e + (-1)^{a_2} t_1}{2} \right) &\text{ if } \#(a_0,a_1) < 3 \\
			0 &\text{ otherwise}
		\end{cases} \text{ and } \\
		&\sigma((m, t_2), (a_0,a_1,a_2))
		=\begin{cases}
			\pi_0^{(\#(a_0,a_1))}  \left(\frac{e + (-1)^{a_2} t_2}{2} \right)&\text{ if }  \#(a_0,a_1) < 3 \\
			0 &\text{ otherwise }
		\end{cases},
	\end{align*}
	where again $\pi_0^{(a)} \in \C[\ip{t_1, t_2}] \cong \C[D_{p(n)}]$ is defined in \cref{sec:dihedral}, \cref{eq:idem}.
 \end{itemize}
If $z = \sum_g \alpha_g g$, let
\begin{align*}
\supp(z) = \set{ g \in G_n \mid \alpha_g \neq 0 }. 
\end{align*}
Define
\begin{align*}
	&W_n =  \bigcup_{x, y\in \calX, a, b \in \calA} \supp( \sigma(x,a)\sigma(y,b) ),
\end{align*}
and let 
\begin{align*}
	\calF_n = \set{f : W_n \to \set{0,1} \mid f(e) = 1, f(x_0) =0, f(g) = 0 \text{ for } g \in \ip{t_1, t_2} \setminus \set{e}}.
\end{align*}
Functions $f: W_n \to \set{0,1}$ can be regarded as linear functions $\spn_{\C}(W_n) \to \C$
by extending linearly.
Hence, given a function $f \in \calF_n$, we can define a bipartite correlation $C_f$ for the scenario $(\calX, \calX, \calA , \calA )$
by $C_f(a , b| x, y) = f( \sigma(x,a) \sigma(y,b))$ (we stretch the term ``correlation" here since $C_f$ may have negative entries).
These correlations contain a copy of the correlation $\fC_{p(n)}$ from \cref{sec:dihedral}.
\begin{proposition}
	\label{prop:fcp_sub}
	
	Let $Q = \{t_1$, $t_2$, $m$, $m+1$, $m+2$, $(m, t_1)$, $(m ,t_2)\} \subseteq \calX$
	and let $I = \{t_1$, $t_2$, $0$, $1$, $2$, $(0,t_1), (0, t_2)\}$ as in \cref{sec:dihedral}.
	Let $\alpha:  Q \to I$ be the bijection 
	$\alpha(t_i) = t_i$, $\alpha((m,t_i)) = (0,t_i)$ for $i = 1, 2$,
	and $\alpha(m+j) =  j$ for $j \in [3]$.
	If $x,y \in Q$ and $a,b \in \calA$ such that $\#(a_0,a_1),\#(b_0,b_1) < 3$, then
	\begin{align}
		\label{eq:fcp_sub}
		C_f (a,b | x , y) = \fC_{p(n)}( (\#(a_0,a_1), a_2), (\#(b_0,b_1), b_2) | \alpha(x), \alpha(y) ),
	\end{align}
	for all $f \in \calF_n$.
	Furthermore, 
	if $S = (\ket{\psi} \in \calH, \set{M_x^{(a)}, N_x^{(a)} \mid a \in \calA}, x \in \calX)$ is a good strategy for $C_f$,
	then $S' = (\ket{\psi} \in \calH, \set{\tM_x^{(a)}, \tN_x^{(a)} \mid a \in [3] \times  [2]}, x \in I)$ is a good strategy for 
	$\fC_{p(n)}$, where 
	\begin{align*}
		\tM_x^{(\#(a_0, a_1), a_2)}  = M_{\alpha^{-1}(x)}^{(a)} \text{  and  } \tN_x^{(\#(a_0, a_1), a_2)}  = N_{\alpha^{-1}(x)}^{(a)}  
		\, \text{  for  }x \in I,  a \in \calA, \#(a_0, a_1)  < 3.
	\end{align*}
\end{proposition}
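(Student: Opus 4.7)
The plan is to verify both claims by tracing through the definitions of $\sigma$ and matching them to the operators defining $\fC_{p(n)}$ via a short identity about the left and right regular representations of $D_{p(n)}$. I would first observe that for $x, y \in Q$ and any $a, b \in \calA$, the elements $\sigma(x,a), \sigma(y,b)$ lie in the subalgebra $\C[\ip{t_1,t_2}] \subseteq \C[G_n]$, which by part (d) of \cref{prop:existsg} is isomorphic to $\C[D_{p(n)}]$. Since $f \in \calF_n$ satisfies $f(e)=1$ and $f(g)=0$ for $g \in \ip{t_1,t_2}\setminus\{e\}$, extending $f$ linearly makes $f(\sigma(x,a)\sigma(y,b))$ equal to the coefficient of $e$ in the group-algebra product $\sigma(x,a)\sigma(y,b) \in \C[D_{p(n)}]$.

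The computational core is the identity
\begin{equation*}
    \bra{e} L(\gamma) R(\delta) \ket{e} = \sum_{g \in D_{p(n)}} \gamma_g \delta_g,
\end{equation*}
which follows immediately from $R(h)\ket{e}=\ket{h^{-1}}$, contrasted with the fact that the coefficient of $e$ in the group-algebra product $\gamma\delta$ is $\sum_g \gamma_g \delta_{g^{-1}}$. These coincide whenever $\delta$ is invariant under $g \mapsto g^{-1}$, which I would verify for every projection appearing as a $\tN_y^{(b)}$: the reflections in $D_{p(n)}$ (including $t_i$ and $t_2(t_1t_2)^j$) are their own inverses, the rotation-supported coefficients of each $\pi_i^{(j)}$ involve only constants and $\cos(\cdot)$ which are symmetric under $j \mapsto -j$, and the centrality of $\pi_0^{(\cdot)}$ transfers the symmetry through the products defining $\tN_{(0,t_i)}^{(b)}$. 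Identifying $\sigma(\alpha^{-1}(x'),\cdot)$ and $\sigma(\alpha^{-1}(y'),\cdot)$ with the $\tM_{x'}^{(\cdot)}$ and $\tN_{y'}^{(\cdot)}$ of \cref{sec:dihedral} then yields \cref{eq:fcp_sub}.

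For the second claim, I would first record that $C_f$ is synchronous: for each $x \in \calX$ the family $\{\sigma(x,a)\}_a$ consists of pairwise orthogonal projections (each pair collapses a factor of the form $(e+x_k)(e-x_k)/4 = 0$, or orthogonal $\pi_0^{(j)}, \pi_0^{(k)}$) summing to $e$, so $f(\sigma(x,a)\sigma(x,b)) = 0$ for $a \neq b$ and $\sum_a f(\sigma(x,a)^2) = f(e) = 1$. Given a good strategy $S$ for $C_f$, I would then check that $\{\tM_x^{(a')}\}_{a'}$ and $\{\tN_y^{(b')}\}_{b'}$ are projective measurements on $\calH$. The only subtlety is that the bijection $(a_0', a_1') \leftrightarrow (a_0, a_1, a_2)$ with $\#(a_0,a_1) < 3$ covers only six of the eight outcomes in $\calA$; however, for each remaining outcome $a$ inspection gives $\sigma(\alpha^{-1}(x),a) = 0$, and synchronicity together with $f(0) = 0$ yields $\bra{\psi}M_{\alpha^{-1}(x)}^{(a)}\ket{\psi} = f(\sigma(\alpha^{-1}(x),a)) = 0$, whence goodness of $S$ forces $M_{\alpha^{-1}(x)}^{(a)} = 0$. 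The missing outcomes thus contribute nothing to $\sum_a M_{\alpha^{-1}(x)}^{(a)} = \1$, so the induced measurements agree with those of \cref{sec:dihedral}, produce the correlation $\fC_{p(n)}$ by the first claim, and inherit goodness directly from $S$. The main technical step throughout is the regular-representation identity and the associated inversion-invariance check; everything else is bookkeeping.
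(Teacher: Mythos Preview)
Your proposal is correct and follows essentially the same route as the paper: identify $\langle t_1,t_2\rangle_{G_n}\cong D_{p(n)}$, use that $f$ restricted to this subgroup picks out the coefficient of $e$, match this to $\bra{e}L(\sigma(x,a))R(\sigma(y,b))\ket{e}$, and then kill the missing outcomes via goodness. You are actually more careful than the paper on one point: the identity $f(\sigma(x,a)\sigma(y,b)) = \bra{e}L(\sigma(x,a))R(\sigma(y,b))\ket{e}$ requires $\sigma(y,b)$ to be inversion-invariant, which you verify explicitly, whereas the paper tacitly relies on the stated self-adjointness (and real coefficients) of each $\sigma(y,b)$; for the second claim the paper computes $\bra{\psi}M_x^{(a)}\ket{\psi}$ directly from the marginal $\sum_b C_f(a,b\mid x,y)$ rather than via synchronicity, but both arguments are equivalent once $\sigma(x,a)=0$.
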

\begin{proof}
	Since the subgroup generated by $t_1$ and $t_2$ in $G_n$ is a parabolic subgroup,
	$\ip{t_1, t_2} \subseteq G_n$ is isomorphic to $D_{p(n)}$.
	By construction, if $f \in \calF_n$ and $g \in \ip{t_1, t_2}$ then $f(g) = 1$ if and only if $g = e$.
	Hence
	when $x,y \in Q$ and $\#(a_0,a_1),\#(b_0,b_1) < 3$,
	\begin{align*}
		f(\sigma(x, a) \sigma(y, b)) =\bra{e} L(\sigma(x, a)) R(\sigma(y,b)) \ket{e},
	\end{align*}
	where $\ket{e} \in \ell^2 D_{p(n)}$ and $L: \C[D_{p(n)}] \to \calU(\ell^2 D_{p(n)})$ and 
	$R: \C[D_{p(n)}] \to \calU(\ell^2 D_{p(n)})$ are the left and right regular representations 
	of $\C[D_{p(n)}]$. Since $L(\sigma(x,a)) = \tM_{\alpha(x)}^{(\#(a_0,a_1),a_2)}$ and $R(\sigma(y,b)) =
	\tN_{\alpha(y)}^{(\#(b_0,b_1),b_2)}$ from the definition of $\fC_p$ in
	\cref{sec:dihedral}, \cref{eq:fcp_sub} follows.
	
	If $a \in \calA$ with $\#(a_0, a_1) = 3$, then
	$C_f(a,b|x,y) = C_f(b, a| y, x) = 0$ for all $x \in Q$, $y \in \calX$ and 
	$b\in \calA$.
	Hence $\bra{\psi} M_x^{(a)} \ket{\psi} = \bra{\psi} N_x^{(a)} \ket{\psi} = 0$ for all $x \in Q$.
	Since $S$ is a good strategy, $M_x^{(a)} = N_x^{(a)} = 0$.
	We conclude that $\set{ \tM_x^{(a)} \mid a \in [3] \times [2]}$ and $\set{ \tN_x^{(a)} \mid a \in [3] \times [2]}$
	are projective measurements for all $x \in I$, and thus $S'$ is a good strategy for $\fC_{p(n)}$.
\end{proof}

Finally, we are ready to define $F_n$: 
\begin{align*}
F_n = \set{ C_f \mid f \in \calF_n \text{ such that } C_f|_{\calA \times \calA \times (\calX_{var} \cup [m]) \times (\calX_{var} \cup [m])} 
\text{ is a perfect correlation for } Ax =0},
\end{align*}
where 
$C_f|_{\calA \times \calA \times (\calX_{var} \cup [m]) \times (\calX_{var} \cup [m])}$ is the restriction of $C_f$
 to the question set $\calX_{var} \cup [m]$,
and $A$ is the matrix fixed above.
Since support sets are finite, $W_n$ and $\calF_n$ are finite, and hence $F_n$ is finite.
Since the word problem of $G_n$ is decidable, the sets $W_n$ and $\calF_n$ are computable from $n$.
Whether
$C_f|_{\calA \times \calA \times (\calX_{var} \cup [m]) \times (\calX_{var} \cup [m])}$ is a perfect correlation for $Ax =0$ is also decidable, 
and therefore $F_n$ is computable from $n$.

\begin{proposition}
	\label{prop:not_in_cqc}
	If $n \in X$, $F_n \cap C_{qc}(\calX, \calX, \calA, \calA) = \emptyset$.
\end{proposition}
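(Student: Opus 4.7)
The plan is to argue by contradiction. Suppose $n \in X$ and that some $C_f \in F_n$ belongs to $C_{qc}(\calX, \calX, \calA, \calA)$. By \cref{prop:good_strat} choose a good commuting-operator strategy $S = (\ket{\psi}, \{M_x^{(a)}\}, \{N_y^{(b)}\})$ inducing $C_f$. By the definition of $F_n$, the restriction of $C_f$ to questions in $\calX_{var} \cup [m]$ is a perfect correlation for $Ax=0$, and each row of $A$ has $\kappa=3$ non-zero entries by \cref{prop:existsg}(f). Thus \cref{prop:correlationrep} produces unitary representations $\Phi_M, \Phi_N : \Gamma(A) \to \calU(\calH_0)$ on the cyclic subspace $\calH_0 = \overline{\calA \ket{\psi}}$, with $\Phi_M(x_i) = M(x_i)|_{\calH_0}$, $\Phi_N(x_i) = N(x_i)|_{\calH_0}$ for each generator, and with the rigidity property that $\Phi_M(r) \ket{\psi} = \ket{\psi}$ implies $\Phi_M(r) = \1_{\calH_0}$.

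The heart of the argument is to force $\Phi_M((t_1 t_2)^{p(n)}) = \1_{\calH_0}$, which I will do by invoking \cref{thm:fcp}. Set $U_A := \Phi_M(u_1 u_2)$ and $U_B := \Phi_N(u_1 u_2)$. By \cref{prop:fcp_sub}, the restriction of $S$ to the questions $Q = \{t_1, t_2, m, m+1, m+2, (m,t_1), (m,t_2)\}$ (with the answer relabeling $(a_0,a_1,a_2) \mapsto (\#(a_0,a_1),a_2)$) is a good strategy for $\fC_{p(n)}$, and $r$ is a primitive root of $p(n)$ by the choice of $p$. The commuting-operator assumption forces $[U_A, N_y^{(b)}]\ket\psi = [U_B, M_x^{(a)}]\ket\psi = [U_A,U_B] = 0$, which verifies the first three hypotheses of \cref{thm:fcp}. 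Synchrony (via \cref{prop:eqv_test}) gives $\Phi_M(g)\ket{\psi} = \Phi_N(g^{-1})\ket{\psi}$ for every $g \in \Gamma(A)$, since each generator has order two; in particular
\begin{equation*}
    U_A U_B \ket{\psi} = \Phi_N(u_1 u_2) \Phi_M(u_1 u_2) \ket{\psi} = \Phi_N(u_1 u_2) \Phi_N(u_2 u_1) \ket{\psi} = \ket{\psi},
\end{equation*}
giving the fourth hypothesis. The remaining two hypotheses reduce to the identity $t_1 t_2 u_1 u_2 = u_1 u_2 (t_1 t_2)^r$ in $\Gamma(A)$, obtained from \cref{prop:existsg}(a) by left-multiplying $u_2 u_1 t_1 t_2 u_1 u_2 = (t_1 t_2)^r$ by $u_1 u_2$ and using $u_1^2 = u_2^2 = e$.

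Applying \cref{thm:fcp} yields $\Phi_M((t_1 t_2)^{p(n)}) \ket{\psi} = \ket{\psi}$, and \cref{prop:correlationrep}(3) upgrades this to $\Phi_M((t_1 t_2)^{p(n)}) = \1_{\calH_0}$. Hence $\Phi_M$ descends to a representation of $\Gamma(A)/\ip{(t_1 t_2)^{p(n)} = e}$. Since $n \in X$, \cref{prop:existsg}(b) gives $x_0 = x = e$ in this quotient, so $\Phi_M(x_0) = \1_{\calH_0}$, which forces $M_{x_0}^{(0,0,1)} \ket{\psi} = 0$ and therefore $C_f((0,0,1),(0,0,1) \mid x_0, x_0) = 0$. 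On the other hand, a direct calculation using $x_0^2 = e$ in $G_n$, $f(e)=1$, and $f(x_0)=0$ yields
\begin{equation*}
    C_f((0,0,1),(0,0,1) \mid x_0,x_0) = f\!\left(\sigma(x_0,(0,0,1))^2\right) = f\!\left(\tfrac{e - x_0}{2}\right) = \tfrac{1}{2},
\end{equation*}
a contradiction. The main technical obstacle is verifying the hypotheses of \cref{thm:fcp}, in particular translating the solution-group relation among $u_1,u_2,t_1,t_2$ into an operator identity on $\ket{\psi}$; this works cleanly because \cref{prop:correlationrep}(2) delivers genuine (not merely approximate) group representations, so relations in $\Gamma(A)$ become honest operator equalities on $\calH_0$.
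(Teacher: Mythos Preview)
Your proof is correct and follows essentially the same route as the paper's: get a good strategy, apply \cref{prop:correlationrep} to obtain representations $\Phi_M,\Phi_N$ of $\Gamma(A)$, use \cref{prop:fcp_sub} to extract a strategy for $\fC_{p(n)}$, verify the hypotheses of \cref{thm:fcp} with $U_A=M(u_1)M(u_2)$, $U_B=N(u_1)N(u_2)$, conclude $(M(t_1)M(t_2))^{p(n)}\ket\psi=\ket\psi$, and finish with \cref{prop:existsg}(b). The only cosmetic difference is in the final step: the paper computes $\bra\psi M(x_0)\ket\psi=f(x_0)=0$ to see $\Phi_M(x_0)\neq\1$ and concludes $n\notin X$, whereas you run the contrapositive by using $n\in X$ to force $\Phi_M(x_0)=\1$ and then exhibit the numerical clash $C_f((0,0,1),(0,0,1)\mid x_0,x_0)=1/2\neq 0$.
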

\begin{proof}
Assume $C_{f} \in F_n \cap C_{qc}(\calX,\calX, \calA, \calA)$ for some $f$.
By \cref{prop:good_strat}, there is a good commuting-operator strategy
\begin{align*}
	S = (\ket{\psi} \in \calH, \set{M_x^{(a)}, N_x^{(a)} \mid a \in \calA}, x \in \calX)
\end{align*}
for $C_f$.
Since $C_f|_{\calA \times \calA \times (\calX_{var} \cup [m]) \times (\calX_{var} \cup [m])}$ is a perfect correlation
for $Ax = 0$, and $S' =  (\ket{\psi}, \set{M_x^{(a)}, N_x^{(a)} \mid a \in \calA}, x \in \calX_{var} \cup [m])$ is a strategy for
this perfect correlation, \cref{prop:correlationrep}
states that there exists a  subspace $\calH_0$ of $\calH$ containing $\ket{\psi}$ and unitary operators
$M(x_i)$ and $N(x_i)$ for $x_i \in \calX_{var}$ on $\calH$ inducing commuting
 representations $\Phi_M$ and $\Phi_N$ of $\Gamma(A)$ on $\calH_0$.
 The operators $M(x_i)$ and $N(x_i)$ are defined as 
 \begin{align*}
 	M(x_i) = M_{x_i}^{(0,0,0)} - M_{x_i}^{(0,0,1)} \text{ and } N(x_i) = N_{x_i}^{(0,0,0)} - N_{x_i}^{(0,0,1)}. 
 \end{align*}
By \cref{prop:existsg}, the generators $t_1, t_2, u_1$ and $u_2$ of $\Gamma(A)$ satisfy the relation $u_2u_1 t_1t_2 u_1 u_2 = (t_1t_2)^r$,
so
\begin{align*}
	(M(t_1)M(t_2)) (M(u_1)M(u_2))\ket{\psi}  &= (M(u_1)M(u_2)) (M(t_1)M(t_2))^{r} \ket{\psi} \text{ and }\\
	(N(t_1)N(t_2)) (N(u_1)N(u_2))\ket{\psi}  &= (N(u_1)N(u_2)) (N(t_1)N(t_2))^{r} \ket{\psi}.
\end{align*}

Let $S' = (\ket{\psi} \in \calH, \set{\tM_x^{(a)}, \tN_x^{(a)} \mid a \in [3] \times [2]}, x \in I)$ be the strategy for $\fC_{p(n)}$
from \cref{prop:fcp_sub}.
Notice that $\tM_{t_i} := \tM_{t_i}^{(0,0)} - \tM_{t_i}^{(0,1)} = M_{t_i}^{(0,0,0)} - M_{t_i}^{(0,0,1)} = M(t_i)$ for $i = 1,2$.
Similarly $\tN_{t_i} := \tN_{t_i}^{(0,0)} - \tN_{t_i}^{(0,1)}  = N(t_i)$ for $i = 1, 2$.
Let $U_A = M(u_1)M(u_2) $ and $U_B = N(u_1)N(u_2)$.
By \cref{prop:correlationrep}, $M(x_i) N(x_i) \ket{\psi} = \ket{\psi}$, so
  $U_A U_B \ket{\psi} = \ket{\psi}$.
Hence 
$U_A$ and $U_B$
satisfy the conditions of \cref{thm:fcp} with the strategy $S'$.
We conclude that
\begin{align*}
		\Phi_M( (t_1t_2)^{p(n)}) \ket{\psi} =  (M(t_1) M(t_2))^{p(n)} \ket{\psi} = \ket{\psi}.
\end{align*}
By part (3) of \cref{prop:correlationrep}, 
$\Phi_M$ descends to a representation of $\Gamma(A)/\ip{(t_1t_2)^{p(n)} =e }$ on $\calH_0$.
On the other hand, 
\begin{align*}
	\bra{\psi} M(x_0) \ket{\psi} = \bra{\psi} M_{x_0}^{(0,0,0)} -M_{x_0}^{(0,0,1)} \ket{\psi}
	= f(\sigma(x_0, (0,0,0)) - \sigma(x_0, (0,0,1))) = f(x_0) = 0
\end{align*}
by the definition of $\calF_n$.
Hence $\Phi_M(x_0) \neq \1_{\calH_0}$,
which implies that 
$x \neq e$ in $\Gamma(A)/\ip{(t_1t_2)^{p(n)} =e }$. 
By part (b) of \cref{prop:existsg}, $n \notin X$.
\end{proof}

\begin{proposition}
	\label{prop:in_cqa}
	If $n \notin X$, $F_n \cap C_{qa}(\calX, \calX, \calA, \calA) \neq \emptyset$.
\end{proposition}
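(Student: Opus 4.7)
The plan is to exhibit a specific $f \in \calF_n$ whose correlation $C_f$ lies in $F_n \cap C_{qa}(\calX, \calX, \calA, \calA)$.

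Let $\bar{G}_n := \Gamma(A)/\ip{(t_1 t_2)^{p(n)} = e}$, and let $\bar{G}_n^{fa}$ be the quotient of $\bar{G}_n$ by the normal subgroup of elements trivial in finite-dimensional approximate representations. Define $f : W_n \to \set{0,1}$ by $f(g) = 1$ if and only if the image of $g$ in $\bar{G}_n^{fa}$ is the identity. Then $f(e) = 1$ trivially; by part (c) of \cref{prop:existsg}, $x_0 = x$ is nontrivial in approximate representations of $\bar{G}_n$, so $f(x_0) = 0$; and by parts (d) and (e), every non-identity element of $\ip{t_1, t_2}$ remains nontrivial in approximate representations, giving $f(g) = 0$ for $g \in \ip{t_1, t_2} \setminus \set{e}$. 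Hence $f \in \calF_n$.

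I would then verify that $C_f|_{\calA \times \calA \times (\calX_{var} \cup [m]) \times (\calX_{var} \cup [m])}$ is a perfect correlation for $Ax = 0$, which shows $C_f \in F_n$. In each of the six conditions of \cref{def:perfectcorrelations}, it is enough to show that $\sigma(x, a)\sigma(y, b) \in \C[G_n]$ vanishes after passing to $\C[\bar{G}_n]$, since then its image in $\C[\bar{G}_n^{fa}]$ also vanishes and the canonical trace (which agrees with the linear extension of $f$) is zero. Condition (1) uses the solution-group relation $\prod_{k \in I_i} x_k = e$ in $\bar{G}_n$ to force $\sigma(i, a) = 0$ for $a \notin S$; condition (2) is immediate from the definition of $\sigma$ on $\calX_{var}$; and conditions (3)--(6) reduce to the orthogonality $\tfrac{e + x_k}{2}\cdot\tfrac{e - x_k}{2} = 0$ applied to a shared variable $x_k$ on which the two answers disagree, using that $x_k$ commutes with every adjacent factor in $\sigma(x, a)\sigma(y, b)$.

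For $C_f \in C_{qa}$, I would construct a sequence of finite-dimensional quantum correlations converging to $C_f$. Apply \cref{prop:tensor_trick} to $\bar{G}_n$ with a finite set of words representing $W_n$ to obtain, for each $k$, an $\epsilon_k$-approximate representation $\phi_k : \calF(S) \to \calU(\calH_k)$ with $\epsilon_k \to 0$ and $\tTr(\phi_k(g)) \to f(g)$ for every $g \in W_n$. Since $\set{\sigma(x, a)}_a$ is an exact projective measurement in $\C[G_n]$, \cref{lm:norms} implies that $\set{\phi_k(\sigma(x, a))}_a$ satisfies the hypotheses of \cref{lm:round_prj} up to $O(\epsilon_k)$, yielding a projective measurement $\set{M^{(k)}_{x,a}}_a$ with $\norm{M^{(k)}_{x,a} - \phi_k(\sigma(x,a))} = O(\epsilon_k)$. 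Taking the strategy on $\calH_k \otimes \calH_k$ with the maximally entangled state $\ket{\psi_k}$, Alice measuring $\set{M^{(k)}_{x,a} \otimes \1}_a$ and Bob measuring $\set{\1 \otimes \overline{M^{(k)}_{y,b}}}_b$, one computes
\begin{equation*}
    P_k(a, b \mid x, y) = \tTr(M^{(k)}_{x,a} M^{(k)}_{y,b}) = \tTr(\phi_k(\sigma(x, a)\sigma(y, b))) + O(\epsilon_k),
\end{equation*}
which by linearity and the choice of $\phi_k$ converges to $f(\sigma(x, a)\sigma(y, b)) = C_f(a, b \mid x, y)$. Hence $P_k \in C_q$ approaches $C_f$, and $C_f \in C_{qa}$.

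The main technical obstacle lies in the algebraic bookkeeping of the second step: verifying conditions (3)--(6) requires carefully tracking how shared generators $x_k$ commute past the various factors in $\sigma(x, a)$ and $\sigma(y, b)$ across all relevant pairs of questions, including the cases that mix solution-group projections with the dihedral projections $\pi_i^{(j)}$. The final approximation step is conceptually clean but requires combining the bounds from \cref{lm:round_prj}, \cref{lm:norms}, and \cref{prop:tensor_trick} uniformly across the finite set of question--answer pairs.
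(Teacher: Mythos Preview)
Your proposal is correct and follows the same approach as the paper: define $f$ as the pullback of the canonical trace on $\bar{G}_n^{fa}$, verify $f \in \calF_n$ via parts (c)--(e) of \cref{prop:existsg}, and show $C_f \in C_{qa}$ by combining \cref{prop:tensor_trick}, \cref{lm:norms}, and \cref{lm:round_prj} with the maximally entangled state. Your explicit verification that $C_f|_{\calX_{var} \cup [m]}$ is a perfect correlation for $Ax=0$ (hence $C_f \in F_n$) is a step the paper's own proof leaves implicit; note that this restriction involves only the solution-group questions, so the dihedral projections $\pi_i^{(j)}$ do not actually enter that check.
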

\begin{proof}
	For this proof, let $\Gamma_n := \Gamma(A)/\ip{(t_1t_2)^{p(n)} = e}$ and 
	let $W_n^{fa} = \set{ g \in W_n \mid g \neq e \text{ in } \Gamma_n^{fa}}$.
	Suppose $n \notin X$,
	and define $f : W_n \to \C$ by $f(g) = 0$ if $g \in W_n^{fa}$ and $f(g) = 1$
	if $g \in W_n \setminus W_n^{fa}$. 
	(The function $f$ is the pullback of the canonical trace on $\Gamma_n^{fa}$ to $G_n$ and restricted to $W_n$.)
	Since $n \notin X$, $x_0\in W_n^{fa}$ by part (c) of \cref{prop:existsg}, so $f(x_0) = 0$.
	By parts (d) and (e) of  \cref{prop:existsg}, 
	the set $\ip{t_1, t_2} \setminus \set{e} \subseteq W_n^{fa}$ as well.
	Thus $f \in \calF_n$. 
	
	\newcommand{\tsigma}{\tilde{\sigma}}
	In the rest of proof, we show that $C_f \in C_{qa}(\calX, \calX, \calA, \calA)$.
	Let $\pi: \calF(\calX_{var}) \to G_n$ be the quotient homomorphism, and let $\omega: G_n \to \calF(\calX_{var})$
	be a right inverse for $\pi$, i.e.\ some function such that $\pi \circ \omega$ is the identity for $G_n$.
	For each $x \in \calX$ and $a \in \calA$, let $\tsigma(x,a) = \omega(\sigma(x,a)) \in \C[\calF(\calX_{var})]$, 
	where $\omega$ is extended linearly to a function $\C[G_n] \to \C[\calF(\calX_{var})]$, so that
	$\pi(\tsigma(x,a)) = \sigma(x,a)$.
	Let 
	\begin{align*}
		\tilde{W}_n = \bigcup_{x,y \in \calX a,b \in \calA}  \supp( \tsigma(x,a) \tsigma(y,b))
	\end{align*}
	and 
	\begin{align*}
	\tilde{W}_n^{fa} = \set{ g \in \tilde{W}_n \mid g \neq e \text{ in } \Gamma_n^{fa}}.
	\end{align*}
	Note that if $w \in \tilde{W}_n$, then $\pi(w) \in W_n$, and $w \in \tilde{W}_n^{fa}$
	if and only if $\pi(w) \in W_n^{fa}$.
	The polynomial $\tsigma(x,a)^2 - \tsigma(x,a)$ is not necessarily $0$ in $\C[\calF(\calX_{var})]$,
	but is $0$ in $\C[\Gamma_n]$, and the same is true for the polynomials 
	$\tsigma(x,a)^\ast - \tsigma(x,a)$, $a \in \calA$, $x \in \calX$,
	$\tsigma(x,a)\tsigma(x,b)$, $a \neq b$, $x \in \calX$,
	and $\sum_{a \in \calA} \tsigma(x,a) - e$, $x \in \calX$.
	By \cref{lm:norms}, there is a constant $c$ such that for any $\ep$-approximate representation $\rho: \calF(\calX_{var}) \to \calU(\C^d)$ of $\Gamma_n$,
	we have 
	\begin{equation*}
		\norm{ \rho(\tsigma(x,a))}_{op} \leq c, \quad
		\norm{ \rho(\tsigma(x,a)^2 - \tsigma(x,a)) } \leq c\ep,  \quad
		 \norm{ \rho(\tsigma(x,a)^\ast - \tsigma(x,a)) } \leq c\ep,
	\end{equation*}
	\begin{equation*}
		\norm{ \rho( \tsigma(x,a) \tsigma(x,b) )} \leq c\ep ,  \text{ and }
		  \norm{ \rho( \sum_{a' \in \calA} \tsigma(x,a') - e )}\leq c\ep
	\end{equation*}
	for all $x \in \calX$ and $a \neq b \in \calA$.
	By \cref{prop:tensor_trick},
	for any $\ep, \zeta >0$
	there is an $\ep$-approximate representation $\rho : \calF(\calX_{var}) \to \calU(\C^d)$ of $\Gamma_n$,
	where $d$ depends on $\ep$ and $\zeta$,
	such that $0 \leq \tTr(\rho(w)) \leq \zeta$ for each $w \in W_n^{fa}$,
	and $1 - \zeta \leq \tTr(\rho(w)) \leq 1$ for each $w \in W_n \setminus W_n^{fa}$.
	For $x \in \calX$ and $a \in \calA$, let 
	\begin{align*}
		\tM_x^a = \rho(\tsigma(x,a)).
	\end{align*}
	Let $\norm{\cdot}_1$ denote the $1$-norm on $\C[\calF(\calX_{var})]$ and $\C[G_n]$, so if $\alpha = \sum_g u_g g$ then
	$\norm{\alpha}_1 = \sum_g \abs{u_g}$.
	It is not hard to see that $\norm{\sigma(x,a)}_1 \leq 4$ for all $x \in \calX$, $a \in \calA$, and hence $\norm{\tsigma(x,a)}_1 \leq 4$ for all $x \in \calX$, $a \in \calA$ as well.
	Since the $1$-norm is submultiplicative in group algebras, we see that $\norm{\tsigma(x,a)\tsigma(y,b)}_1 \leq 16$ for all $x,y \in \calX$ and $a,b \in \calA$.
	Hence if we write $\tsigma(x,a)\tsigma(y,b) = \sum_{g \in \tilde{W}_n} u_g g$
	for some $u_g \in \R$, then
	\begin{align*}
		\abs{ C_f(a,b \mid x,y) - \tTr( \tM_x^a \tM_y^b)} &\leq 
		\sum_{g \in \tilde{W}_n} \abs{u_g} \abs{ f(\pi(g)) - \tTr(\rho(g))} \\
		&=
		\sum_{g \in \tilde{W}_n^{fa}} \abs{u_g} \abs{0 -  \tTr(\rho(g))} 
		+ \sum_{g \in \tilde{W}_n \setminus \tilde{W}_n^{fa}} \abs{u_g} \abs{1 -  \tTr(\rho(g))} \\
		&\leq  \norm{\tsigma(x,a)\tsigma(y,b)}_1 \zeta \leq 16\zeta.
	\end{align*}
	Unfortunately, $\set{ \tM_x^a \mid a \in \calA}$ may not be a measurement.
	However, 
	by \cref{lm:round_prj} there are projective measurements $\set{ M_x^a \mid a \in \calA }$, $x \in \calX$,
	such that 
	\begin{align*}
		\norm{ M_x^a - \tM_x^a } \leq \Delta(c, 8) c\ep
	\end{align*}
	for all $x \in \calX$ and $a \in \calA$ (where $8$ comes from the size of $\calA$).
	Then
	\begin{align*}
		\abs{ \tTr( M_x^a M_y^b) - \tTr( \tM_x^a \tM_y^b) }
		&\leq \norm{ M_x^a M_y^b -  \tM_x^a \tM_y^b} \\
		&\leq \norm{ M_x^a}_{op} \norm{M_y^b - \tM_y^b} + \norm{ \tM_y^b}_{op} \norm{M_x^a - \tM_x^a} \\
		&\leq (1 + c) \cdot \Delta(c,8) c\ep.
	\end{align*}
	
	Let $\ket{\psi} = 1/ \sqrt{d} \sum_{i = 1}^d \ket{i}\x\ket{i} \in \C^d \x \C^d $, where $\ket{i}$ is the i-th standard basis of $\C^d$,
	and let $N_x^a = (M_x^a)^T$.
	Since $ \bra{\psi} M_x^a \x N_y^b \ket{\psi} = \tTr(M_x^a M_y^b)$, we conclude that  
	\begin{align*}
		\abs{ \bra{\psi} M_x^a \x N_y^b \ket{\psi} - C_f(a,b \mid x,y)} &\leq 
		\abs{ \tTr( M_x^a M_y^b) - \tTr( \tM_x^a \tM_y^b) } + \abs{ \tTr( \tM_x^a \tM_y^b) - C_f(a,b\mid x,y)} \\
		&\leq (1 + c) \cdot \Delta(c,8) c\ep + 16 \zeta.
	\end{align*}
	The correlation defined by $(\ket{\psi}, \set{ M_x^a \mid a \in \calA},\set{N_x^a \mid a \in \calA}, x \in \calX)$ belongs to $C_q(\calX,\calX, \calA,\calA)$.
	Since $\ep$ and $\zeta$ can be arbitrarily small, we see that $C_f \in  \overline{ C_q(\calX,\calX, \calA,\calA)} = C_{qa}(\calX, \calX, \calA, \calA) $.
	
\end{proof}
The proof of \cref{thm:cqa} follows immediately from \cref{prop:not_in_cqc,prop:in_cqa}.
 
\bibliographystyle{alphaurl}
\bibliography{dissertation}

\end{document}